\newcolumntype{L}[1]{>{\raggedright\arraybackslash}p{#1}}
\newcommand{\nc}{\newcommand}
\Crefname{assumption}{Assumption}{Assumptions}
   \Crefname{question}{Question}{Questions}
   \Crefname{claim}{Claim}{Claims}
   \Crefname{problem}{Problem}{Problems}
   \Crefname{conjecture}{Conjecture}{Conjectures}
\Crefname{subsection}{Section}{Sections}
\Crefname{subsubsection}{Section}{Sections}
\Crefname{fact}{Fact}{Facts}
\nc{\sups}[1]{^{\scriptscriptstyle{#1}}}
\nc{\subs}[1]{_{\scriptscriptstyle{#1}}}
\newcommand{\wb}{\widebar}
\nc{\Critic}{\texttt{Critic}\xspace}
\nc{\PSDPUCB}{\texttt{PSDP-UCB}\xspace}
\nc{\LSVIUCB}{\texttt{LSVI-UCB}\xspace}
\nc{\Actor}{\texttt{Actor}\xspace}
\nc{\EstFeature}{\texttt{EstFeature}\xspace}
\nc{\ExpFTPL}{\texttt{ExpFTPL}\xspace}
\nc{\dist}{\mathrm{dist}}
\nc{\Bquad}{B^{\mathsf{quad}}}
\newcommand{\secpar}{\lambda}
\newcommand{\F}{\mathbb{F}}
\newcommand{\Z}{\mathbb{Z}}
\newtheorem*{rep@theorem}{\rep@title}
\newcommand{\newreptheorem}[2]{%
\newenvironment{rep#1}[1]{%
 \def\rep@title{#2 \ref{##1}}%
 \begin{rep@theorem}}%
 {\end{rep@theorem}}}
\newcommand\xlabel[2][]{\phantomsection\def\@currentlabelname{#1}\label{#2}}
\theoremstyle{plain}
\newtheorem{theorem}{Theorem}
\newtheorem*{theorem*}{Theorem}
\newtheorem{lemma}[theorem]{Lemma}
\newtheorem{corollary}[theorem]{Corollary}
\newtheorem{conjecture}[theorem]{Conjecture}
\newtheorem*{introconjecture}{Conjecture}
\newtheorem{proposition}[theorem]{Proposition}
\newtheorem*{proposition*}{Proposition}
\newtheorem{fact}[theorem]{Fact}
\theoremstyle{definition}
\newtheorem{definition}{Definition}
\numberwithin{theorem}{section}
\numberwithin{definition}{section}
\nc{\DMO}{\DeclareMathOperator}
\DMO{\prox}{prox}
\DMO{\UCB}{UCB}
\DMO{\LCB}{LCB}
\nc{\phidiff}{\phi\sups{\Delta}}
\nc{\pexp}{q_{\mathrm{exp}}}
\nc{\nn}{\nonumber}
\nc{\rk}{\mathrm{rk}}
\nc{\brk}[3]{{\rm br}_{#1}^{#2}({#3})}
\nc{\co}{{\rm co}}
\nc{\br}[2]{{\rm br}^{#1}({#2})}
\nc{\depth}[1]{{\rm d}({#1})}
\nc{\tA}{\textsc{A}}
\nc{\child}[2]{{\rm ch}_{#1}({#2})}
\nc{\parent}[1]{{\rm pa}({#1})}
\nc{\dg}{\dagger}
\nc{\bB}{\mathbf{B}}
\nc{\Span}{\mathsf{span}}
\nc{\unif}{\mathsf{unif}}
\nc{\indsig}[2]{\mathcal{I}_{#1}({#2})}
\nc{\total}{{\rm fin}}
\nc{\early}{{\rm pre}}
\nc{\zsink}{z_{\rm sink}}
\nc{\lowv}{{\rm low}}
\nc{\ol}{\overline}
\nc{\ul}{\underline}
\nc{\madec}[3]{\texttt{ma-dec}_{#1}({#2}, {#3})}
\nc{\madeco}[1]{\texttt{ma-dec}_{#1}}
\nc{\madecd}[3]{\texttt{ma-dec}^{\texttt{d}}_{#1}({#2}, {#3})}
\nc{\SF}{\mathscr{F}}
\nc{\SH}{\mathscr{H}}
\nc{\SP}{\mathscr{P}}
\nc{\SPc}{\wb{\mathscr{P}}}
\nc{\SB}{\mathscr{B}}
\nc{\SC}{\mathscr{C}}
\nc{\BS}{\mathbb{S}}
\nc{\PiMarkov}{\Pi^{\rm markov}}
\nc{\trunc}[2]{\mathsf{trunc}_{#2}({#1})}
\nc{\sbl}{of strong Bellman type\xspace}
\nc{\inormal}[1][\Phi, u,v]{\til{N}_{{#1}}}
\nc{\gamvec}{\gamma}
\nc{\til}{\widetilde}
\nc{\td}{\tilde}
\nc{\wh}{\widehat}
\nc{\old}[1]{\ifnum\Comments=1 {\color{brown}  [OLD: #1]}\fi}
\nc{\noah}[1]{\ifnum\Comments=1 {\color{purple} [ng: #1]}\fi}
\nc{\mir}[1]{\ifnum\Comments=1 {\color{teal} [mc: #1]}\fi}
\nc{\sam}[1]{\ifnum\Comments=1 {\color{red} [sg: #1]}\fi}
\nc{\dan}[1]{\ifnum\Comments=1 {\color{magenta} [dw: #1]}\fi}
\nc{\BP}{\mathbb{P}}
\nc{\BI}{\mathbb{I}}
\nc{\midpoint}[1][\Phi,\phi_1,\phi_2]{\mu^{\star}_{{#1}}}
\nc{\fools}[3]{\MF_{#3}({#1}, {#2})}
\nc{\fool}[2]{\MF({#1},{#2})}
\nc{\clip}[2]{{\rm clip}\left[ \left. {#1} \right| {#2} \right]}
\nc{\imax}{\omega}
\DMO{\conv}{conv}
\nc{\MH}{\mathcal{H}}
\nc{\MV}{\mathcal{V}}
\nc{\MC}{\mathcal{C}}
\nc{\MI}{\mathcal{I}}
\nc{\st}{\star}
\nc{\lng}{\langle}
\nc{\rng}{\rangle}
\DMO{\OOPT}{opt}
\nc{\dopt}[2]{\ell_{\OOPT}({#1},{#2})}
\nc{\MG}{\mathcal{G}}
\nc{\MP}{\mathcal{P}}
\nc{\PP}{\mathbb{P}}
\nc{\TT}{\mathbb{T}}
\nc{\TTmax}{\TT_{\max}}
\DMO{\REG}{Reg}
\DMO{\WREG}{wReg}
\nc{\reg}[2]{{\Delta}_{{#1}}({#2})}
\nc{\wreg}[2]{{\Delta}^{\rm w}_{{#1}}({#2})}
\nc{\Reg}[2]{{\REG}_{{#1}}({#2})}
\nc{\wReg}[2]{{\WREG}_{{#1}}({#2})}
\DMO{\Gap}{Gap}
\DMO{\GD}{GD}
\DMO{\GDA}{GDA}
\DMO{\EG}{EG}
\nc{\TE}{\til{\E}}
\nc{\Var}{\mathbb{V}}
\DMO{\Cov}{Cov}
\DMO{\OGDA}{OGDA}
\DMO{\Unif}{Unif}
\nc{\Qu}{\ul{Q}}
\nc{\Qo}{\ol{Q}}
\nc{\Ro}{\ol{R}}
\nc{\Vu}{\ul{V}}
\nc{\Vo}{\ol{V}}
\nc{\RanQ}{\Delta Q}
\nc{\RanV}{\Delta V}
\nc{\clipQ}{\Delta \breve{Q}}
\nc{\frzQ}{\Delta \mathring{Q}}
\nc{\clipV}{\Delta \breve{V}}
\nc{\clipdelta}{\breve{\delta}}
\nc{\cliptheta}{\breve{\theta}}
\nc{\delmin}{\Delta_{{\rm min}}}
\nc{\delmins}[1]{\Delta_{{\rm min},{#1}}}
\nc{\gapfinal}[1]{\max \left\{ \frac{\frzQ_{{#1}}^{k^\st}(x,a)}{2H}, \frac{\delmin}{4H} \right\}}
\nc{\post}[2]{R({#1}; {#2})}
\nc{\posts}[3]{R_{#3}({#1}; {#2})}
\nc{\algnst}[1]{\begin{align*}#1\end{align*}}
\nc{\algn}[1]{\begin{align}#1\end{align}}
\nc{\matx}[1]{\left(\begin{matrix}#1\end{matrix}\right)}
\nc{\nuu}{\nu}
\nc{\bel}[1]{\mathbf{b}({#1})}
\nc{\nbel}[1]{\bar{\mathbf{b}}({#1})}
\nc{\sbel}[2]{\mathbf{b}'_{#1}({#2})}
\nc{\nsbel}[2]{\bar{\mathbf{b}}'_{#1}({#2})}
\nc{\bv}{\mathbf{v}}
\nc{\bone}{\mathbf{1}}
\nc{\bX}{\mathbf{X}}
\nc{\bY}{\mathbf{Y}}
\nc{\bG}{\mathbf{G}}
\nc{\bz}{\mathbf{z}}
\nc{\bw}{\mathbf{w}}
\nc{\bA}{\mathbf{A}}
\nc{\bJ}{\mathbf{J}}
\nc{\bK}{\mathbf{K}}
\nc{\bb}{\mathbf{b}}
\nc{\ba}{\mathbf{a}}
\nc{\bc}{\mathbf{c}}
\nc{\bC}{\mathbf{C}}
\nc{\BR}{\mathbb R}
\nc{\BA}{\mathbb{A}}
\nc{\BC}{\mathbb C}
\nc{\bx}{\mathbf{x}}
\nc{\bS}{\mathbf{S}}
\nc{\bM}{\mathbf{M}}
\nc{\bR}{\mathbf{R}}
\nc{\bN}{\mathbf{N}}
\nc{\NN}{\mathbb{N}}
\nc{\by}{\mathbf{y}}
\nc{\sy}{y}
\nc{\sx}{x}
\nc{\MO}{\mathcal O}
\nc{\MU}{\mathcal{U}}
\nc{\ME}{\mathcal{E}}
\nc{\MN}{\mathcal{N}}
\nc{\MK}{\mathcal{K}}
\nc{\MM}{\mathcal{M}}
\nc{\MS}{\mathcal{S}}
\nc{\MT}{\mathcal{T}}
\nc{\BF}{\mathbb F}
\nc{\BQ}{\mathbb Q}
\nc{\MX}{\mathcal{X}}
\nc{\MA}{\mathcal{A}}
\nc{\MD}{\mathcal{D}}
\nc{\MB}{\mathcal{B}}
\nc{\MZ}{\mathcal{Z}}
\nc{\MJ}{\mathcal{J}}
\nc{\MW}{\mathcal{W}}
\nc{\MY}{\mathcal{Y}}
\nc{\BZ}{\mathbb Z}
\nc{\BN}{\mathbb N}
\nc{\ep}{\epsilon}
\nc{\epbe}{\varepsilon_{\mathsf{BE}}}
\nc{\epout}{\varepsilon_{\mathsf{outlier}}}
\nc{\bellc}[1][h]{\MT_{#1}^\circ}
\nc{\vep}{\varepsilon}
\nc{\gapfn}[1]{\varepsilon_{#1}}
\nc{\ggapfn}[2]{\varphi_{#1}({#2})}
\nc{\epsahk}{\gapfn{0}}
\nc{\BH}{\mathbb H}
\nc{\BG}{\mathbb{G}}
\nc{\D}{\Delta}
\nc{\MF}{\mathcal{F}}
\nc{\One}[1]{\mathbbm{1}\{{#1}\}}
\nc{\bOne}{\mathbf{1}}
\nc{\Aopt}{\mathcal{A}^{\rm opt}}
\nc{\Amul}{\mathcal{A}^{\rm mul}}
\nc{\SQ}{\mathsf Q}
\nc{\DO}{\accentset{\circ}{\D}}
\nc{\mf}{\mathfrak}
\nc{\mfp}{\mathfrak{p}}
\nc{\mfq}{\mf{q}}
\nc{\mfx}{\mf{s}}
\nc{\Sp}{\mbox{Spec}}
\nc{\Spm}{\mbox{Specm}}
\nc{\hookuparrow}{\mathrel{\rotatebox[origin=c]{90}{$\hookrightarrow$}}}
\nc{\hookdownarrow}{\mathrel{\rotatebox[origin=c]{-90}{$\hookrightarrow$}}}
\nc{\hra}{\hookrightarrow}
\nc{\tra}{\twoheadrightarrow}
\nc{\sgn}{{\rm sgn}}
\nc{\aut}{{\rm Aut}}
\nc{\Hom}{{\rm Hom}}
\nc{\img}{{\rm Im}}
\DMO{\id}{Id}
\DMO{\supp}{supp}
\DMO{\KL}{KL}
\nc{\kld}[2]{d_{\mathsf{KL}}({#1}||{#2})}
\nc{\ren}[2]{D_2({#1}||{#2})}
\nc{\chisq}[2]{\chi^2({#1}||{#2})}
\nc{\tvd}[2]{d_{\mathsf{TV}}({#1}, {#2})}
\nc{\hell}[2]{d_{\mathsf{H}}^2({#1}, {#2})}
\nc{\dbi}[3][\pi]{D_{\mathsf{bi}}^{#1}({#2} \| {#3})}
\DMO{\BSS}{BSS}
\DMO{\BES}{BES}
\DMO{\BGS}{BGS}
\DMO{\poly}{poly}
\nc{\indep}{\perp}
\DMO{\sink}{sink}
\nc{\fp}[1]{\MP_1({#1})}
\nc{\BO}{\mathbb{O}}
\nc{\BT}{\mathbb{T}}
\nc{\RR}{\mathbb{R}}
\nc{\Gradient}{\nabla}
\DMO{\diag}{diag}
\nc{\EE}{\mathbb{E}}
\nc{\MQ}{\mathcal{Q}}
\nc{\ML}{\mathcal{L}}
\nc{\cPhi}{\bar \Phi}
\DeclareMathOperator*{\PR}{Pr}
\renewcommand{\Pr}{\PR}
\DeclareMathOperator*{\E}{\mathbb{E}}
\nc{\ra}{\rightarrow}
\nc{\pmhc}[1]{\{-1,1\}^{#1}}
\nc{\Dbnd}{D}
\nc{\Bbnd}{B}
\nc{\Key}{\mathsf{KeyGen}}
\nc{\Enc}{\mathsf{Encode}}
\nc{\Encemb}{\mathsf{EncodeEmb}}
\nc{\Dec}{\mathsf{Decode}}
\nc{\sk}{\mathsf{sk}}
\nc{\pk}{\mathsf{pk}}
\nc{\lpk}{\ell_{\mathsf{pk}}}
\nc{\lsk}{\ell_{\mathsf{sk}}}
\nc{\msg}{\mathsf{m}}
\nc{\Adv}{\mathsf{Adv}}
\nc{\Red}{\mathsf{Red}}
\nc{\negl}{\mathsf{negl}}
\nc{\Ber}{\mathrm{Ber}}
\nc{\PRFPRC}{\mathsf{PRF\text{-}PRC}}
\nc{\wt}{\mathrm{wt}}
\nc{\res}[2]{{#1}_{#2}}
\nc{\bzero}{\mathbf{0}}
\nc{\Bin}{\mathrm{Bin}}
\nc{\Hyp}{\mathrm{Hyp}}
\nc{\Nrho}[1][\rho]{{N}_{#1}}
\nc{\Trho}[1][\rho]{\mathsf{T}_{#1}}
\nc{\hc}[1][n]{\{0,1\}^{#1}}
\nc{\Stab}{\mathbf{Stab}}
\nc{\bW}{\mathbf{W}}
\nc{\NS}{{\mathbf{NS}}}
\nc{\KeyS}{\mathsf{KeyGen_{Sub}}}
\nc{\EncS}{\mathsf{Encode_{Sub}}}
\nc{\DecS}{\mathsf{Decode_{Sub}}}
\nc{\WeightPerturb}{\mathsf{WeightPerturb}}
\nc{\Unique}{\mathsf{Unique}}
\nc{\PRCS}{\mathsf{PRC_{Sub}}}
\nc{\PRC}{\mathsf{PRC}}
\nc{\PRCI}{\mathsf{PRC_{Idx}}}
\nc{\SampleUnique}{\mathsf{SampleUnique}}
\nc{\PerturbDifference}{\mathsf{PerturbDifference}}
\nc{\Model}{\mathsf{Model}}
\nc{\Modelo}{\overline{\Model}}
\nc{\prompt}{\mathtt{PROMPT}}
\nc{\Setup}{\mathsf{Setup}}
\nc{\Detect}{\mathsf{Detect}}
\nc{\Sigprc}{\Sigma_{\mathsf{PRC}}}
\nc{\Wat}{\mathsf{Wat}}
\nc{\term}{\mathtt{END}}
\nc{\tok}{\mathsf{t}}
\nc{\True}{\textsf{True}}
\nc{\False}{\textsf{False}}
\nc{\Eemb}{\ME^{\mathsf{Emb}}}
\nc{\hist}{\mathsf{hist}}
\nc{\hh}{\mathsf{h}}
\nc{\freq}{\mathsf{freq}}
\nc{\ff}{\mathsf{f}}
\nc{\Hemp}[1]{H_{\mathsf{e}}^{#1}}
\nc{\Hempt}[1]{\bar{H}_{\mathsf{e}}^{#1}}
\nc{\Hemptil}[1]{\tilde{H}_{\mathsf{e}}^{#1}}
\nc{\Spread}[1]{S^{#1}}
\nc{\Hmean}[1]{H_{\mathsf{m}}^{#1}}
\nc{\partition}[1][n,q]{P^{\mathsf{ptn}}_{#1}}
\nc{\Crob}{C_{\mathsf{rob}}}
\nc{\Lmax}{L_{\mathsf{max}}}
\nc{\skwat}{\sk_{\mathsf{Wat}}}
\nc{\EmbedToken}{\mathsf{EmbedChar}}
\nc{\len}{\mathrm{len}}
\nc{\Esub}{\ME_{\mathsf{sub}}}
\nc{\Ecomp}{\ME_{\mathsf{comp}}}
\nc{\comp}{\mathsf{c}}
\nc{\SE}{\mathscr{E}}
\nc{\alphb}{q}
\nc{\tAdv}{\widetilde{\Adv}}
\nc{\Funif}{{F_{\mathsf{Unif}}}}
\nc{\Alg}{\mathsf{Alg}}
\nc{\Majority}{\mathsf{Maj}}
\nc{\Dist}{\mathsf{Dist}}
\nc{\edit}{edit\xspace}
\nc{\Edit}{Edit\xspace}
\nc{\Wcomp}{\MW^{\mathsf{comp}}}
\nc{\channSC}{\text{SC}}
\nc{\Dpp}{\MD^{\mathsf{pp}}}
\nc{\Dpc}{\MD^{\mathsf{pc}}}
\nc{\INS}{\mathsf{INS}}
\nc{\CNS}{\mathsf{CNS}}
\nc{\cdist}{\stackrel{\mathrm{c}}{\sim}}
\nc{\SU}{\mathscr{U}}
\nc{\rr}{\bar{n}}
\newcommand{\from}{\leftarrow}
\nc{\KeyGen}{\mathsf{KeyGen}}
\nc{\ED}{D_{\mathsf{E}}}
\nc{\Ham}{D_{\mathsf{H}}}
\nc{\bin}{\mathsf{bin}}
\nc{\EDball}{\mathcal{B}_{\mathsf{E}}}
\nc{\SEDball}{\mathcal{B}_{\mathsf{H,E}}}
\nc{\LEDball}{\mathcal{B}_{\mathsf{len,ED}}}
\nc{\epED}{\varepsilon_{\mathsf{E}}}
\nc{\epedit}{\epED}
\nc{\epham}{\varepsilon_{\mathsf{H}}}
\nc{\epDec}{\varepsilon_{\mathsf{Dec}}}
\nc{\Eedit}{\mathscr{E}^{\mathsf{E}}}
\nc{\Ehamedit}{\mathscr{E}^{\mathsf{H,E}}}
\nc{\Egood}{\ME_{\mathsf{good}}}
\nc{\PermEnc}{\mathsf{PermEncode}}
\nc{\dham}{d_{\mathsf{H}}}
\nc{\dedit}{d_{\mathsf{E}}}
\nc{\pham}{p_{\mathsf{H}}}
\nc{\pedit}{p_{\mathsf{E}}}
\nc{\pDec}{p_{\mathsf{Dec}}}
\nc{\pSub}{p_{\mathsf{Sub}}}
\nc{\Rclean}{R_{\mathsf{Clean}}}
\nc{\adv}{\mathcal{A}}
\nc{\bi}{\mathbf{i}}
\nc{\bj}{\mathbf{j}}
\nc{\bp}{\mathbf{p}}
\nc{\FRS}{\mathsf{FRS}}
\nc{\RS}{\mathsf{RS}}
\nc{\CDec}{C_{\mathsf{Dec}}}
\nc{\trec}{t_{\mathsf{rec}}}
\nc{\ListRecov}{\mathsf{ListRecovery}}
\nc{\Imax}{I_{\mathsf{max}}}
\nc{\Eadv}{\ME_{\mathsf{adv}}}
\nc{\Dnap}{\MD^{\mathsf{nap}}}
\renewcommand{\v}{{\bf v}}
\newcommand{\heading}[1]{\vspace{0.25cm} \noindent \textbf{#1}}
\nc{\cmark}{\ding{51}}%
\nc{\xmark}{\ding{55}}%
\title{Improved Pseudorandom Codes from Permuted Puzzles}
\date{}
\author{
Miranda Christ\\ 
{\small Columbia University}\\
{\small \texttt{mchrist@cs.columbia.edu}}
\and 
Noah Golowich\\
{\small Microsoft Research}\\
{\small \texttt{noah.golowich@austin.utexas.edu}}
\and 
Sam Gunn\\
{\small UC Berkeley}\\
{\small \texttt{gunn@berkeley.edu}}
\and 
Ankur Moitra\\
{\small MIT}\\
{\small \texttt{moitra@mit.edu}}
\and 
Daniel Wichs\\
{\small Northeastern University}\\
{\small NTT Research}\\
{\small \texttt{wichs@ccs.neu.edu}}
}
\begin{document}
\maketitle

\begin{abstract}
    Watermarks are an essential tool for identifying AI-generated content. Recently, Christ and Gunn (CRYPTO '24) introduced \emph{pseudorandom error-correcting codes} (PRCs), which are equivalent to watermarks with strong robustness and quality guarantees.
    A PRC is a pseudorandom encryption scheme whose decryption algorithm tolerates a high rate of errors.
    Pseudorandomness ensures quality preservation of the watermark, and error tolerance of decryption translates to the watermark's ability to withstand modification of the content.

    In the short time since the introduction of PRCs, several works (NeurIPS '24, RANDOM '25, STOC '25) have proposed new constructions.
    Curiously, all of these constructions are vulnerable to quasipolynomial-time distinguishing attacks.
    Furthermore, all lack robustness to edits over a constant-sized alphabet, which is necessary for a meaningfully robust LLM watermark. Lastly, they lack robustness to adversaries who know the watermarking detection key.
    Until now, it was not clear whether any of these properties was achievable individually, let alone together.

    We construct pseudorandom codes that achieve all of the above:  plausible subexponential pseudorandomness security, robustness to worst-case edits over a binary alphabet, and robustness against even computationally unbounded adversaries that have the detection key.    Pseudorandomness rests on a new assumption that we formalize, the \emph{permuted codes conjecture}, which states that a distribution of permuted noisy codewords is pseudorandom.
    We show that this conjecture is implied by the permuted puzzles conjecture used previously to construct doubly efficient private information retrieval.
    To give further evidence, we show that the conjecture holds against a broad class of simple distinguishers, including read-once branching programs.
\end{abstract}

\thispagestyle{empty}
\newpage
{\small\tableofcontents}
\thispagestyle{empty}
\newpage

\setcounter{page}{1}

\section{Introduction}
\label{sec:intro}
The proliferation of AI-generated content creates serious challenges for society today, due to increasingly realistic threats such as misinformation and impersonation.
% The proliferation of AI-generated content represents a significant challenge today, due to the ability of such content to be used for various harmful purposes, such as the spread of misinformation or the impersonation of individuals.
One way to address this challenge is to use \emph{watermarking}, which involves embedding a hidden signal in generated content that allows it to be detected as such.
This content can later be passed to a \emph{watermarking detector} which can identify the content's provenance using the hidden signal. In order to be useful, watermarking schemes need to satisfy certain properties: first, they should not significantly degrade the quality of the content, and second, they should be robust to modifications of the content by an adversary (e.g., one seeking to remove the watermark). In recent years, a promising approach to watermarking which comes with provable guarantees and also has shown empirical promise \cite{GZS25,YZC+25,HLLL25} is that of pseudorandom codes \cite{CG24}.
The goal of this paper is to provide pseudorandom codes, and thereby watermarking schemes, with improved quality and robustness guarantees. 

A \emph{pseudorandom code (PRC)} is a secret-key encryption scheme that is both \emph{pseudorandom}, in that any polynomial number of ciphertexts are computationally indistinguishable from uniform, and \emph{error-correcting} or \emph{robust}, in that decryption succeeds even when ciphertexts are passed through some error channel.
% Without going into too much detail, 
\cite{CG24} showed a direct way to construct a watermark for essentially any kind of AI-generated content from any pseudorandom code.
Error correction of the PRC implies that the watermark persists under content modifications; pseudorandomness translates to \emph{undetectability}, meaning that the output of the watermarking scheme is \emph{computationally indistinguishable} from that of the true language model, which provides a very strong guarantee against  degradations in quality.
In fact, a PRC may be viewed as a special case of a watermarking scheme, meaning that PRCs are equivalent to watermarking schemes in an appropriate sense. 

In the short time since PRCs were introduced, there has been a significant body of work building them \cite{GM24,GG25,DBLP:conf/stoc/AlrabiahACDG25}, applying them to watermarks \cite{GZS25,DBLP:conf/sp/Cohen0S25,GM24}, and investigating their relationship to other cryptographic primitives \cite{blackbox1,blackbox2}.
An important direction is to construct PRCs with stronger pseudorandomness and robustness, which would immediately yield improved watermarks.
For example, the first PRC proposed in \cite{CG24} withstood only \emph{substitution} errors or \emph{random deletions}, resulting in an LLM watermark that tolerates word substitutions and random deletions.
Ideally, a fully satisfactory watermark should be robust to \emph{non-random edits} as well, and \cite{GM24} later built one by constructing the first edit-robust PRC, although it required an unrealistic assumption on the entropy of generated text.
We remark that our focus in this paper is on watermarking schemes for \emph{language models}, for which edits are a particularly natural notion by which to consider modification of content. Nevertheless, PRCs have been shown to be useful for other types of models as well, such as image diffusion models \cite{GZS25}, and our results have implications in those domains as well.

Despite the intense interest in the area, we still lack PRCs with \emph{any one of} the following three basic and desirable properties:

\heading{Property (1): subexponential security.} 
Until now, all PRCs suffered from quasipolynomial-time attacks against pseudorandomness.
Arguably, the biggest open question surrounding PRCs was whether it was possible to achieve subexponential security at all.
The basic issue was that previous constructions all involved hiding a $(\log n)$-sparse structure in a length-$n$ codeword.
This sparsity was crucial for error correction; any larger structure would be destroyed by errors.
But the price of using such sparse patterns was that there necessarily existed $n^{\log n}$-time brute-force search distinguishing attacks.
Therefore, stronger pseudorandomness seemed to call for new techniques, or possibly even be unattainable.

\heading{Property (2): small-alphabet edit robustness.}
LLM watermarks should withstand edits to the text, and doing so requires edit-robust PRCs.
While \cite{GM24} constructed an edit-robust PRC, theirs used a polynomial-size alphabet.
This translates to a strong entropy requirement for their watermark: It can only be detected if there is a large amount of entropy per word in the text.
Unfortunately, realistic text has only about one bit of entropy per word, necessitating an edit-robust PRC with a \emph{constant-sized}, ideally binary, alphabet.

\heading{Property (3): strongly-adaptive robustness.}
Previous works on PRCs considered error channels with varying levels of power.
The weakest are random errors, and the strongest error models from prior work were computationally bounded channels that have some partial information about the decoding key \cite{DBLP:conf/stoc/AlrabiahACDG25}.
Their corresponding watermarks withstand attackers with similar levels of power---in the case of \cite{DBLP:conf/stoc/AlrabiahACDG25}, the attacker can query a watermarked model and the watermark detector.
However, an attacker that knows the detection key can easily find much smaller perturbations that remove the watermark.
In fact, as a consequence of the weak pseudorandomness of existing PRCs, it is often possible for motivated attackers to learn the detection key for practical parameter settings.\footnote{In the most robust image watermark used by \cite{GZS25}, the PRC has parity checks of size 3, resulting in an $n^3$-time key recovery attack: an attacker can iterate over all $n^3$ possible parity checks and test which are correlated with the watermarked images. See the ``Practical undetectability'' section of~\cite{GZS25} for discussion on their choice of parameters.}

% However, such a guarantee, which forces the watermarking key to be kept \emph{secret} from any possible attacker, opens a Pandora's box of practical implementation issues, such as: \emph{Which parties should be trusted with the secret key?} \emph{Who will be responsible for providing access to the detector?}
% \sam{It previously said ``\emph{What budget of queries should be allocated to the watermark detector?}'' and ``\emph{Who should be allowed to make such queries?}.'' But (1) you could give the advevrsary any polynomial number of queries, and (2) the whole point was that you could maintain all your security properties even if you allow \emph{anyome} to make detection/decoding queries.}
% \sam{Actually I'm removing this sentence, because the frst question could also be confusing. I think that one of the main selling points of undetectability is that it gives you the \emph{option} of keeping the detection key a secret, e.g. handing it out to only certain parties so they can use it for society-level issues while keeping people's ``privacy'' about their personal AI usage.}

Therefore, we would ideally like to ensure robustness even when the attacker has \emph{full knowledge of the watermarking key}.\footnote{An astute reader may be concerned that an attacker that recovers the detection key may also be able to embed watermarks in any content of their choice. Prior works have constructed watermarks that are unforgeable even by an adversary that knows the detection key~\cite{CG24,DBLP:journals/cic/FairozeGJMMW24,lin2026unforgeable}.} % \sam{Removed ``and is computationally bounded'' because I think that can confuse people who don't do crypto. We want them to have the same mindset as us, that everything is computationally bounded, so the assumption is for free. It's important to handle unbounded adversaries for (1) understanding the primitive and where it sits, and (2) getting highly practical schemes, but those considerations don't fit with our discussion here.}
Such a guarantee, which we call ``strongly-adaptive robustness'' and which corresponds to the worst-case error model typically considered in coding theory, remained open even for robustness to substitution errors prior to this work. This is not merely an academic goal but has important implications for how watermarking schemes can be deployed, since it would allow anyone to verify a watermark rather than just one trusted party, without compromising robustness. 
% In this model, \emph{the channel is computationally unbounded and knows the secret key}. 

\iffalse
\mir{Intro is in super draft-y form from here until Results. Please polish!}

    This immediately yields watermarks with improved quality, and robustness to worst-case edits.
    Our construction uses folded Reed-Solomon codes and list recovery.
    Pseudorandomness rests on what we call the \emph{permuted codes conjecture}, which is implied by a cryptographic assumption recently introduced to construct doubly efficient private information retrieval.

[State permuted codes conjecture. State that it is implied by the permuted puzzles conjecture, which can be thought of us permuted codes with an erasure channel rather than a substitution channel.]

As the permuted codes and permuted puzzles conjectures are relatively new, we perform some cryptanalysis to better our understanding.
    On the positive side, we show statistical evidence for the permuted codes conjecture over the binary alphabet: $\log n$ codewords are statistically uniform.
    On the other hand, we show that the conjecture would be false if the alphabet permutation were omitted.
    That is, if the code were taken to be Reed-Solomon, even a constant number of codewords would be efficiently distinguishable from random.
    This attack can also be viewed as a distinguishing attack against secret-key McEliece with Reed-Solomon codes.
\fi

\subsection{Results}
In this work, \textbf{\emph{we present a new PRC construction, and accompanying watermarking scheme, satisfying all three of the above properties.}}
In order to prove pseudorandomness, some computational hardness assumption is necessary.
The pseudorandomness of our PRCs is based on the \emph{permuted codes conjecture}, which states that the following distribution is pseudorandom for certain codes $C \subseteq \BF_q^n$ and any noise rate $\eta = \Omega(1)$:
\begin{enumerate}
    \item Sample an index permutation $\pi$ over $[n]$ and alphabet permutations $\pi_1, \dots, \pi_n$ over $[q]$.
    \item Sample codewords $c_1, \dots, c_T$ uniformly at random from $C$.
    \item For each codeword $c_i = (c_{i,1},\ldots,c_{i,n})$, permute the location indices according to $\pi$ and the alphabet at each location index $j$ according to $\pi_j$, obtaining $\hat{c}_i = (\pi_1(c_{i,\pi(1)}),\ldots, \pi_n(c_{i,\pi(n)}))$.
    \item Output $\hat{c}_1 + e_1, \dots, \hat{c}_T + e_T$, where the error $e_i = (e_{i,1},\ldots,e_{i,n})$ is sampled with each coordinate $e_{i,j}$ being 0 with probability $1-\eta$ and uniformly random from $\F_q$ otherwise.
\end{enumerate}

\begin{introconjecture}[Permuted codes conjecture, \Cref{conj:permuted-codes}]
    Let $C$ be any family of linear codes over $\F_q$ with dual distance $n^{\Omega(1)}$ and let $\eta \in (0,1)$ be a constant rate of random substitutions.
    The \emph{permuted codes conjecture} says that the above distribution is pseudorandom for any $T = \poly(n)$.
\end{introconjecture}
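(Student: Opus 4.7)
Since the statement is a conjecture rather than a theorem, my aim is not to prove it outright but to build a case that mirrors the evidence described in the abstract: a reduction from a better-studied assumption plus unconditional evidence against restricted distinguishers.

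First, the plan is to show that the \emph{permuted puzzles conjecture} implies the permuted codes conjecture. The natural approach is a sample-simulation argument. A permuted puzzles sample is a permuted codeword in which an $\eta$-fraction of coordinates are erased rather than replaced with uniformly random symbols. Given any distinguisher $D$ against the permuted codes distribution, one can build a distinguisher $D'$ against permuted puzzles as follows: on each input sample, replace every erased coordinate independently with a uniformly random element of $\F_q$, and feed the result to $D$. If the original sample came from the ``real'' permuted puzzles distribution, the filled-in sample is distributed exactly as in the permuted codes distribution with substitution rate $\eta$; if the original was ``uniform with erasures,'' so is the filled-in sample. Thus the advantage of $D$ transfers to $D'$ without loss. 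The main subtlety is matching the uniform side carefully, and handling the per-coordinate alphabet permutations $\pi_j$, which may not be part of the standard permuted puzzles setup. I would argue that absorbing $\pi_j$ is free: conjugating by $\pi_j$ preserves both the code distribution (up to an affine twist) and the uniform distribution, so it can be sampled by the reduction itself.

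Second, to give unconditional evidence, the plan is to rule out two restricted classes of distinguishers. For small-sample statistical distinguishers, the key tool is the dual distance hypothesis: if $C$ has dual distance $d = n^{\Omega(1)}$, then any projection of a codeword to a set $S$ of at most $d-1$ coordinates is uniform over $\F_q^S$. After a random index permutation, any fixed set of observed coordinates corresponds to a uniformly random coordinate set of the original codeword, and the substitution noise only makes marginals closer to uniform. Combined with the independent alphabet permutations, a Fourier or Krawtchouk-type estimate on biased sums should yield that $T \le \mathrm{poly}(d) / \log q$ samples are within statistical distance $2^{-n^{\Omega(1)}}$ of uniform. For read-once branching programs of bounded width, the plan is to use a hybrid argument combined with known pseudorandom generators for ROBPs (Nisan-type constructions): argue that the permuted codes distribution is indistinguishable from one in which the permutations and codewords are drawn from a short-seeded pseudorandom source, then show that on this pseudorandom source the distribution looks uniform to any such ROBP.

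The main obstacle will be the ROBP analysis. Permutation groups of size $n!$ and $q!^n$ do not have small-seed pseudorandom samplers for free, so controlling the seed length without blowing up past what an ROBP of a given width can ``notice'' is delicate; the dual distance bound must carry enough slack to absorb the seed cost. A secondary challenge is that the statistical argument above is naturally done per-coordinate, while the dependence between coordinates introduced by the index permutation couples everything, so extending from a marginal bound on $k$ coordinates to a joint bound on $T$ full samples likely needs a careful union bound or a moment-method refinement. Neither obstacle seems fatal, and together these reductions and unconditional bounds constitute the strongest evidence one can reasonably provide short of an actual proof.
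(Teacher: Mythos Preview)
Your high-level framing is right: the paper does not prove the conjecture but supplies (i) a reduction from permuted puzzles and (ii) unconditional statistical evidence via a branching-program argument. However, both of your concrete plans diverge from the paper in ways that matter.

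\textbf{On the reduction from permuted puzzles.} Your description of a permuted puzzles sample as ``a permuted codeword with an $\eta$-fraction of coordinates erased'' elides the main difficulty. In the actual permuted puzzles distribution, one samples a single permutation $\pi$ of the product set $[n]\times\Sigma$, and each sample is a list $(\pi(i_1,c_{i_1}),\ldots,\pi(i_m,c_{i_m}))$ of scrambled \emph{pairs}. You do not see which position any output symbol came from, so there is no visible ``erasure pattern'' to fill in. The paper's reduction first uses $\Theta((nq)^2\lambda)$ samples to learn the hidden equivalence classes $\{\pi(i,\cdot):i\in[n]\}$ by observing which pairs never co-occur in a sample; only after recovering this partition can one assign a fresh index permutation and fresh alphabet bijections and then replace the missing positions with uniform symbols. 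Your one-line ``fill in the erasures'' step is exactly the easy part; the learning step is the content of the reduction, and your proposal skips it.

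\textbf{On the statistical and branching-program evidence.} Your plan treats read-once branching programs as one restricted class among several and proposes to handle them via Nisan-style PRGs and a hybrid. The paper does something sharper and structurally different. First, it observes that because of the index permutation $\sigma$, \emph{every} distinguisher (not just an ROBP) can depend only on the multinomial histogram of the $n$ columns in $\BF_q^T$; computing that histogram is itself a width-$\binom{n+q^T-1}{q^T-1}$ ROBP. Second, it applies the Forbes--Kelley Fourier decomposition directly to this ROBP: the low-degree part vanishes because dual distance $d$ makes the code $d$-wise uniform, and the high-degree part is damped by a factor $(1-\eta)^{d/2}$ from the substitution noise. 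No PRG or seed-length accounting enters. Your proposed route through short-seeded sources is both vaguer and weaker: it would at best fool ROBPs rather than all distinguishers, and the seed-length issue you flag as an obstacle is real and not obviously surmountable, whereas the paper's argument sidesteps it entirely. The key idea you are missing is that permutation invariance collapses the general distinguishing problem to a bounded-width streaming computation, which is what makes the Forbes--Kelley machinery bite.
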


We state the  general form of the conjecture, since we are unable to find any counterexamples, and wish to provide a broad target for cryptanalysis.
However, for our applications, we will only require the conjecture to hold for  specific codes $C$. The conjecture also plausibly holds with even sub-exponential security (i.e., for some constant $c>0$ and any number of samples $T= 2^{O(n^c)}$,  attackers running in time $O(2^{n^c})$ have at most $2^{-\omega(n^c)}$ advantage in distinguishing the above permuted code distribution from uniform).
This translates to plausible sub-exponential security for undetectability in our watermarking applications.
While \cref{conj:permuted-codes} has not been stated before in its present form, we show that it follows from the ``permuted puzzles'' assumption of \cite{boyle2021security,blackwell2021note} which was originally used to obtain doubly efficient private information retrieval:% and has been investigated in a number of papers \cite{boyle2019permuted}:

\begin{theorem*}[Evidence for the permuted codes conjecture, \Cref{thm:puzzles-from-codes}]
    The ``permuted puzzles'' conjecture \cite{boyle2021security,blackwell2021note} implies the permuted codes conjecture.
\end{theorem*}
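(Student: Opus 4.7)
The plan is to realize the permuted codes distribution as a special case of the permuted puzzles distribution, so that any distinguisher against the former immediately breaks the latter. Beyond a direct pass-through reduction, the only real work is to verify that the ``puzzle distribution'' one gets from noisy codewords with randomized symbols satisfies the structural hypotheses of the permuted puzzles assumption.

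First, I would recall the permuted puzzles conjecture of \cite{boyle2021security,blackwell2021note} in a form convenient here. At its core, it asserts that for a suitably regular distribution $\mathcal{D}$ over $\Sigma^n$ --- concretely, one whose marginals on any small coordinate set are (statistically close to) uniform --- the $T = \poly(n)$ samples $(\pi(z_1),\ldots,\pi(z_T))$ with $z_i \sim \mathcal{D}$ i.i.d.\ and $\pi$ a secret index permutation on $[n]$ are computationally indistinguishable from $T$ uniform samples from $\Sigma^n$.

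Next, I would define the puzzle distribution used in the reduction. Given $C \subseteq \F_q^n$ with dual distance $d = n^{\Omega(1)}$, noise rate $\eta$, and alphabet permutations $\pi_1,\ldots,\pi_n$ over $[q]$, let $\mathcal{D}_{\vec{\pi}}$ be the distribution that samples $c \leftarrow C$, outputs $\hat{c} = (\pi_1(c_1),\ldots,\pi_n(c_n))$, and adds independent $\eta$-noise coordinatewise. The key structural check is that $\mathcal{D}_{\vec{\pi}}$ has uniform marginals on any at most $d-1$ coordinates: dual distance $d$ gives this for $c$ itself, per-coordinate alphabet permutations preserve the uniformity of each marginal, and the additive noise can only preserve uniformity. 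Since $d = n^{\Omega(1)}$, this matches the marginal-uniformity hypothesis needed by the permuted puzzles assumption.

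Finally, the reduction is black-box. Given a distinguisher $A$ against the permuted codes distribution, construct a distinguisher $B$ against permuted puzzles with $\mathcal{D} = \mathcal{D}_{\vec{\pi}}$ as follows: $B$ first samples $\pi_1,\ldots,\pi_n$ itself (thereby specifying $\mathcal{D}_{\vec{\pi}}$), then forwards its challenge samples verbatim to $A$. When the challenge is real, $A$'s view is identically distributed to the permuted codes distribution with the shared $\pi$ supplied by the puzzle challenger; when it is uniform, so is $A$'s view. Hence $B$ inherits $A$'s distinguishing advantage. The main obstacle I anticipate is lining up the precise alphabet and quantitative parameters of the two conjectures: existing formulations of permuted puzzles are typically stated over a fixed small alphabet such as $\{0,1\}$, in which case one must either invoke a natural $\F_q$-generalization of the assumption or work with the bit-representation of $\F_q$ and argue that a coordinate permutation on $[n \log q]$ still hides the symbol-level structure, accounting for the intra-symbol correlations this introduces.
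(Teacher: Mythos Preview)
Your proposal rests on a mischaracterization of the permuted puzzles conjecture as it is stated in the paper (and in the cited works). You treat it as the statement ``for any distribution $\mathcal{D}$ on $\Sigma^n$ with uniform small-subset marginals, index-permuted samples $(\pi(z_1),\ldots,\pi(z_T))$ are pseudorandom.'' That is not the conjecture at issue here. In the paper's formulation (\cref{conj:perm-puzz}), one samples a permutation $\pi$ of $[n]\times\Sigma$ (not of $[n]$), draws a codeword $c\in C$, picks $m\le (1-\Omega(1))n$ indices $i_1,\ldots,i_m$ with replacement, and outputs the tuple of \emph{puzzle pieces} $(\pi(i_1,c_{i_1}),\ldots,\pi(i_m,c_{i_m}))\in ([n]\times\Sigma)^m$. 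So the output is neither a string in $\Sigma^n$ nor obtained via an index permutation alone; moreover, only a constant fraction of positions are revealed, giving an erasure-type rather than substitution-type channel. Your pass-through reduction (``sample $\pi_1,\ldots,\pi_n$ yourself, then hand the challenge to $A$'') simply does not apply to this object.

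The paper's reduction has two real steps you are missing. First, it uses $\Theta((nq)^2\lambda)$ samples to \emph{learn} the partition of $[n]\times\Sigma$ induced by $\pi$: two pieces $\alpha,\alpha'$ lie in the same block iff they never co-occur in any sample (since a single codeword has one symbol per position), and with enough samples all cross-block pairs do co-occur. This recovers, up to relabeling, the sets $\{\pi(i,\cdot):i\in[n]\}$, after which the reduction imposes its own random index permutation and random per-block bijections to $\F_q$ --- matching the $\sigma,\pi_1,\ldots,\pi_n$ of the permuted codes distribution. Second, it converts the subsampling of $m$ positions (erasures) into the substitution channel $\channSC_\eta$: it draws $k\sim\mathrm{Bin}(n,1-\eta)$, keeps $k$ of the revealed positions as ``clean,'' and fills the rest with uniform symbols, arguing via a Chernoff bound that the revealed set is large enough whenever $\eta>1-m/n$. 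Neither of these ideas appears in your proposal, and both are essential.
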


We note that earlier versions of the permuted puzzles conjecture were introduced by~\cite{boyle2017can,DBLP:conf/tcc/CanettiHR17} and further studied in~\cite{DBLP:conf/tcc/BoyleHW19}. 
We focus on the variants of~\cite{boyle2021security,blackwell2021note}, because we are able to show they imply the permuted codes conjecture. 

As the permuted codes and permuted puzzles conjectures are relatively new, we perform some cryptanalysis to better our understanding. On the positive side, we show statistical evidence for the permuted codes conjecture over constant-size alphabets: A sample of $O(\log n)$ codewords are jointly statistically uniform (for comparison recall that \cref{conj:permuted-codes} posits that $\poly(n)$ codewords are computationally indistinguishable from uniform).

\begin{theorem*}[Statistical uniformity of a few codewords, \Cref{cor:stat-evidence}]
    Let $C$ be any code with polynomial dual distance over a constant-sized alphabet. Then there exists a $T = \Omega(\log n)$ such that $T$ samples from $C$'s permuted code distribution (per \cref{conj:permuted-codes}) are statistically indistinguishable from $T$ uniformly random strings.
\end{theorem*}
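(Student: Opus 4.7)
The plan is to reduce the statistical-closeness claim, through several stages, to a moment-matching question on empirical type counts, which is then resolved using the dual distance. I will take $T = c \log n$ for a constant $c > 0$ to be chosen small enough at the end. To begin, observe that the final noise addition is a Markov kernel on $\F_q^{Tn}$ that fixes the uniform distribution, so it contracts total variation toward uniform; hence it suffices to prove $\mathrm{TV}(p, U) = o(1)$, where $p$ is the distribution produced by Steps (1)--(3) of the sampling procedure (without noise) and $U$ is uniform on $\F_q^{Tn}$.

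Arrange the $T$ sampled codewords as rows of a matrix $M \in \F_q^{T \times n}$, so that $Y_j = \pi_j(M_{:,\pi(j)})$. Because $\pi_j$ is a uniformly random element of $S_q$ applied entry-wise, conditional on $M$ and $\pi$ the column $Y_j$ is uniform over the $S_q$-orbit of $M_{:,\pi(j)}$; this orbit depends on its input only through the orbit type $\tilde\tau(M_{:,\pi(j)}) \in \MT$, where $\MT$ denotes the set of set partitions of $[T]$ into at most $q$ nonempty parts, so $|\MT| \leq q^T = n^{O(c)}$. An iid uniform sample of $\F_q^T$ has the same conditional-on-type structure. Thus $\mathrm{TV}(p, U) = \mathrm{TV}(D_p, D_u)$, where $D_p, D_u$ are the distributions of the orbit-type sequence $(\tilde\tau_j)_{j=1}^n$ under the two models. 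Both are exchangeable---$D_p$ because $\pi$ is uniform, $D_u$ because iid---and any exchangeable distribution on $\MT^n$ is determined by its count vector $(N_{\tilde\tau})_{\tilde\tau \in \MT}$, so $\mathrm{TV}(D_p, D_u) = \mathrm{TV}((N)_p, (N)_u)$ with $(N)_u \sim \mathrm{Mult}(n, \mu)$ for $\mu(\tilde\tau) = |O_{\tilde\tau}|/q^T$.

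Now apply the dual distance hypothesis: any $k < d^\perp$ coordinates of a uniform codeword are jointly uniform, so any $k$ columns of $M$ are iid uniform on $\F_q^T$. Consequently $(N)_p = \sum_{j=1}^n \mathbf{e}_{\tilde\tau(M_{:, j})}$ is a sum of $(d^\perp-1)$-wise independent indicator vectors on $\MT$ with the same single-coordinate marginal as an iid uniform column, and so all joint moments of $(N)_p$ of order at most $d^\perp - 1 = n^{\Omega(1)}$ exactly match those of the multinomial $(N)_u$.

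The main obstacle lies in the final step: converting this high-order moment matching into a TV bound between $(N)_p$ and $(N)_u$. Both distributions live, with probability $1 - n^{-\Omega(1)}$, on a common typical set of $\poly(n)$ integer vectors concentrating near the mean $(n \mu(\tilde\tau))_{\tilde\tau}$; concentration on this set follows from standard $k$-th moment bounds (Markov's inequality applied to central moments, which by $k$-wise independence equal those of $\mathrm{Bin}(n, \mu(\tilde\tau))$). Inside this typical set, pointwise agreement of the pmfs of $(N)_p$ and $(N)_u$ can be extracted either by a direct moment-matching polynomial comparison or by a multivariate Berry-Esseen-type argument for $k$-wise independent indicator sums, yielding local relative error $n^{-\Omega(1)}$ provided $|\MT| \ll d^\perp$---which holds for sufficiently small $c$. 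Combining bulk and tail contributions gives $\mathrm{TV}(p, U) = o(1)$, establishing statistical indistinguishability at $T = \Omega(\log n)$.
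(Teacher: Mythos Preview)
Your argument has a genuine gap at the very first step, where you discard the noise via data processing: you claim it suffices to show that the \emph{noiseless} permuted code distribution is close to uniform, but this stronger claim is false. Take $q=2$ and let $C$ be the parity code $\{x\in\F_2^n:\sum_i x_i=0\}$, which has dual distance $n$. With $T=2$ rows, the orbit type of a column $(c_{1,j},c_{2,j})$ under $S_2$ is determined by $c_{1,j}\oplus c_{2,j}$, so the count $N$ of columns of type $\{01,10\}$ equals the Hamming weight of $c_1\oplus c_2$. Since $c_1\oplus c_2$ is again a parity codeword, $N$ is always even, whereas under the uniform model $N\sim\mathrm{Bin}(n,1/2)$. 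Thus $\mathrm{TV}((N)_p,(N)_u)\ge 1/2-o(1)$, even though all moments of $N$ up to order $n-1$ match exactly (as your $k$-wise independence argument correctly shows). This is the classical ``parity'' obstruction: matching $(n-1)$ moments of a sum of indicators says nothing about its parity, and no Berry--Esseen-type argument can bridge this gap.

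The paper's proof does not attempt to remove the noise; instead, the noise is the mechanism that kills the high-degree obstruction. Concretely, the paper observes that after the index permutation one only needs to fool the multinomial count statistic, which is computable by a read-once branching program of width $\binom{n+q^T-1}{q^T-1}$. A Fourier decomposition due to Forbes and Kelley splits any such ROBP into a low-degree part (degree $<d/2$), which is handled exactly by $d$-wise uniformity of $C$, and a high-degree part, whose contribution under the noise operator $T_\eta$ picks up a factor $(1-\eta)^{d/2}$. The resulting bound $nT\cdot\binom{n+q^T-1}{q^T-1}\cdot(1-\eta)^{d/2}$ is $\exp(-\Omega(d))$ once $q^T$ is small enough, which is what forces $T=O(\log d)$. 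Your reduction to exchangeable type counts is correct and is morally the same as the paper's reduction to the ROBP width bound; the missing ingredient is that you must keep the noise and use it to damp the high-degree terms rather than trying to prove a noiseless moment-to-TV implication that does not hold.
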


There are some conceptual similarities between the Permuted Codes conjecture and the ``low-degree conjecture,'' which has been extensively studied in the context of algorithmic statistics in recent years (e.g., \cite{hopkins2018sos,kunisky2019notes,buhai2025quasipolynomial}, amongst many others).
Roughly speaking, the low-degree conjecture states that for any distribution $P$ which is permutation invariant and indistinguishable from uniform by low-degree polynomials (which mirrors our requirement of ``high dual distance''), there is \emph{no efficient algorithm} distinguishing $P$ from uniform. We observe that \cref{cor:stat-evidence} in fact establishes a special case of the low-degree conjecture by showing that the distribution $P$ is statistically close to uniform; see \cref{cor:low-degree-consequence}.\footnote{Incidentally, recent independent work \cite{buhai2025quasipolynomial} has established that the low-degree conjecture is false more generally.} 

% Note that permuted code distributions easily fool linear tests.
% Fooling linear tests requires that we give the adversary the permutations, but then restrict it to only compute inner products with the challenge.
% Therefore this necessarily applies even without the permutation.
% For this reason, we believe that fooling linear tests is not a useful heuristic in this context. \noah{maybe let's move this elsewhere?}\dan{agree, should move.}
% \mir{I just commented it out; in the technical overview we say this as well.}

On the negative side, we consider a modification of the permuted codes conjecture in which one omits the alphabet permutations applied to the individual symbols of the codewords. We show in \cref{thm:rs-dist} that the conjecture would be false in general in this case, by taking $C$ to be the Reed-Solomon code. 
In particular, we show that \emph{even a constant number of codewords would be efficiently distinguishable from random}.
In fact, it turns out that all three of the randomizations (alphabet permutation, index permutation, and noise) are necessary: The permuted codes conjecture would be false if any one of these was omitted.
See \cref{tab:two-of-three} for more details.

\begin{table}
    \centering
    \begin{tabular}{L{.25\linewidth} | c | L{.6\linewidth}}
        Randomization method & Secure? & Explanation \\
        \hline
        Symbol + alphabet permutations & \xmark & This is the ``toy conjecture'' of \cite{boyle2017can}, which was shown to be insecure when instantiated with Reed--Solomon codes in \cite{boyle2021security,blackwell2021note}.\\
        \hline
        Noise + index permutation & \xmark & We prove in \Cref{thm:rs-dist} that this is insecure when instantiated with Reed--Solomon codes.\\
        \hline
        Noise + alphabet permutations & \xmark & Insecure for any efficiently decodable binary linear code: Over $\BF_2$, the alphabet permutations are simply a one-time pad, which can be removed by adding pairs of samples together (though this roughly doubles the noise rate). Any efficient decoder then serves as a distinguisher. \\
        \hline
        Noise + index + alphabet permutations & \cmark & This is either our permuted codes conjecture or the permuted puzzles conjecture, depending on whether the noise is substitutions or erasures.\\
    \end{tabular}
    \caption{Any two of the three randomizing methods employed in the permuted codes and puzzles conjectures are insufficient to guarantee pseudorandomness in general. As in the permuted codes assumption, we consider codes with polynomial dual distance.}
    \label{tab:two-of-three}
\end{table}

\paragraph{Improved PRCs and watermarking schemes from permuted codes.}  The permuted codes conjecture for any specific efficiently (list-)decodable code $C \subseteq \F_q^n$ directly gives PRCs  with strong adaptive robustness against \emph{substitutions} over the alphabet $\F_q$. Essentially the PRC ciphertexts are the outputs of the permuted codeword distribution, and the detection procedure undoes the permutations and applies the decoding algorithm of the underlying code. The rate of substitutions that the PRC tolerates can be set arbitrarily close to the error tolerance of the underlying code $C$. See \cref{sec:sub-robustness} for this basic result. 

The basic result only handles substitutions. As our main positive result for PRCs, we show how to also handle \emph{edits}. Under the permuted codes conjecture, we can obtain PRCs satisfying all three of the desired properties from \cref{sec:intro} above:

\begin{theorem*}[An improved PRC from permuted codes, \Cref{thm:prc-main}]
    Under the permuted codes conjecture, there exists a binary-alphabet PRC that is strongly-adaptive to some constant rate of edits.
    Pseudorandomness plausibly holds against subexponential-time distinguishers.
\end{theorem*}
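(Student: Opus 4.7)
The plan is to build the PRC in three layers, each handling one of the desiderata. The outer layer instantiates the permuted codes conjecture with a folded Reed--Solomon code $C \subseteq \F_q^n$ over a large alphabet $\F_q$ with $q = \poly(n)$; folded RS is chosen because it admits efficient list recovery at rates close to capacity, which turns out to be essential for edit correction. The middle layer converts substitution robustness over $\F_q$ into edit robustness over $\F_q$ via a reduction that uses the list-recovery property together with implicit synchronization extracted from the codeword structure itself. The inner layer converts the $\F_q$-alphabet PRC into a binary one by encoding each $\F_q$-symbol as a block of $\log q$ bits using an encoding that is marginally pseudorandom given the outer pseudorandomness.

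The steps, in order: (i) Directly instantiate \Cref{conj:permuted-codes} with folded RS over $\F_q$ to obtain a PRC whose ciphertexts are pseudorandom strings over $\F_q^n$ robust to any constant rate of worst-case substitutions; correctness follows by undoing the known index and alphabet permutations (possible since substitutions preserve the index structure) and running folded RS list-recovery, and pseudorandomness follows immediately from the conjecture. (ii) To handle edits, break each encoding into $\F_q$-symbol chunks and use a sliding-window decoder: every length-$O(1)$ window of the received binary string is treated as a candidate $\F_q$-symbol together with a small candidate list, and folded RS list-recovery reconciles the overlapping lists to find the true codeword. The positions of the chunks are recovered implicitly because a correct alignment yields a list consistent with a single folded RS codeword, while an incorrect alignment does not. (iii) Compose with a binary encoding of $\F_q$ (for example, applying an independent random permutation of $\F_q$ per chunk, whose key we can fold into the secret key) so that the final ciphertext is a binary string, while marginal pseudorandomness of each chunk follows from the pseudorandomness of the outer $\F_q$-PRC. (iv) For strongly-adaptive robustness, observe that the entire decoding procedure is a deterministic worst-case algorithm for the underlying edit-robust binary code once the secret permutations are applied; hence an adversary who knows the detection key gains no leverage beyond producing the worst admissible edit pattern, which the code handles by construction.

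For pseudorandomness against subexponential-time distinguishers, I would invoke the conjectured subexponential version of \Cref{conj:permuted-codes} and check that the hybrid argument converting the $\F_q$-PRC into a binary PRC introduces only a polynomial number of hybrids, so the advantage remains subexponentially small. The final tolerated edit rate is determined by the list-recovery radius of folded RS composed with the robustness slack in the window-alignment step, and is a fixed positive constant.

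The main obstacle I anticipate is Step (ii): designing a windowing/list-recovery decoder that works without explicit synchronization markers, since any added marker would contradict pseudorandomness. The nontrivial content is showing that the folded RS structure, preserved under the secret index permutation, still provides enough intrinsic rigidity to reject spurious alignments while admitting the true alignment, even against worst-case edits crafted by an adversary who knows the key. A secondary subtlety is ensuring that the per-chunk binary encoding in Step (iii) is simultaneously marginally pseudorandom (so that composition preserves pseudorandomness of the full ciphertext via a hybrid over chunks) and compatible with the window decoder of Step (ii); handling this likely requires that the chunk encoding be efficiently invertible with a short candidate list on every binary window of roughly the right length.
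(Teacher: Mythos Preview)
Your high-level architecture (permuted codes assumption $\to$ list-recoverable code $\to$ edit-robust binary PRC) matches the paper's, but the key technical step differs in an important way, and your anticipated obstacle rests on a misconception that the paper's actual construction resolves directly.

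You write that explicit synchronization markers ``would contradict pseudorandomness,'' and therefore plan to extract alignment \emph{implicitly} from the codeword structure. This is where your plan diverges from the paper and is also where it is underdetermined. The paper's construction does exactly what you rule out: it writes each symbol together with its index, outputting the binary concatenation of pairs $(\bin(i_j), \bin(z'_{i_j}))$ for \emph{randomly sampled} indices $i_j \in [n]$. The point is that because the $i_j$ are i.i.d.\ uniform and the $z'_{i_j}$ are (under the conjecture) pseudorandom, the concatenation of these pairs is itself indistinguishable from a uniform binary string. So explicit indices cost nothing in pseudorandomness, and they buy you synchronization for free: the decoder slides a window of length $\log(nq)$ over the received string, enumerates the small edit-ball around each window, interprets each candidate as a pair $(i,z)$, and deposits $z$ (after undoing the secret permutations) into the list $\ML_i$. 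List recovery then finds the codeword. Your implicit-alignment idea would need a separate argument that spurious alignments fail list recovery while the true one succeeds, and you have not supplied one; the paper sidesteps this entirely.

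Two smaller points. First, for the statement in question (some constant edit rate), ordinary Reed--Solomon with Guruswami--Sudan list recovery suffices; the paper reserves folded RS for the watermarking theorem where one needs robustness to a $(\tfrac12 - o(1))$-fraction of substitutions plus edits. Second, your Step (iii) hybrid over chunks is unnecessary once you adopt the index-writing trick: the binary output is a direct efficient randomized function of a single sample from the permuted-codes distribution, so undetectability follows by a single reduction rather than a chunk-by-chunk hybrid.
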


For the above, we need the permuted codes conjecture to hold specifically for \emph{Reed-Solomon codes}. Note that while the Reed-Solomon code has a large alphabet, the resulting PRCs are over a binary alphabet. By using a standard transformation from PRCs to watermarking schemes \cite{CG24}, we obtain an LLM watermark with the analogous guarantees:

\begin{theorem*}[An improved watermark, \Cref{thm:watermarking-main}]
    Under (a slight variant of) the permuted codes conjecture, for any constant $\alpha > 0$, there exists an undetectable, $\widetilde\Omega(\alpha^7)$-edit-robust watermark requiring per-token entropy of only $\alpha$.
\end{theorem*}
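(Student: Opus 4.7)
The plan is to derive the watermarking scheme by composing the edit-robust binary PRC from \Cref{thm:prc-main} with the generic PRC-to-watermark transformation of \cite{CG24}, adapted to the low-entropy regime where the LLM's per-token distribution has average min-entropy only $\alpha$.

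First, I would instantiate the PRC $(\Key, \Enc, \Dec)$ from \Cref{thm:prc-main}, which under the permuted codes conjecture gives a binary-alphabet pseudorandom code of length $n$ tolerating some constant rate $\delta_0 > 0$ of adversarial edits. Next, I would apply a variant of the CG24 embedding: at each generation step $t$, look at the LLM's conditional next-token distribution given the history; use a rejection-sampling or hashing procedure keyed by the history to embed the next unread bit of a freshly drawn PRC ciphertext whenever the local entropy is sufficient, and otherwise sample from the unmodified LLM distribution. Because the average per-token entropy is $\alpha$, one extracts roughly $\Theta(\alpha)$ PRC bits per token, so encoding an $n$-bit ciphertext requires $O(n/\alpha)$ tokens of generated text. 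The \Detect procedure traverses the text, re-extracts the implicit bit sequence using the detection key, and runs $\Dec$.

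The analysis splits into two parts. For undetectability, I would argue that (i) each embedded bit is drawn in a distribution-preserving way, so conditioned on a truly uniform bit stream the marked text is identical to a true LLM sample, and (ii) the PRC's pseudorandomness (which, under the stated conjecture, plausibly holds at subexponential security) upgrades this to computational indistinguishability against all efficient attackers with or without the detection key, giving strong undetectability. For edit-robustness, I would track how an adversarial sequence of $\rho$-fraction edits on the text translates to edits on the bit sequence recovered by \Detect. Each text edit can (a) shift the subsequent context window, causing a bounded-length region of bit-carrying tokens to be misaligned, and (b) reclassify a few tokens as bit-carrying or not. With the embedding designed using short synchronization windows, each text edit maps to $O(\poly(1/\alpha))$ edits on the extracted bit sequence; since there are $\Theta(n/\alpha)$ tokens and $n$ bits, dividing through gives a tolerated rate of $\widetilde\Omega(\alpha^c)$ for an explicit constant $c$. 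Carefully summing contributions from the $\Theta(1/\alpha)$ token-per-bit blow-up, the $\poly(1/\alpha)$ edit-amplification factor, and the $\poly(1/\alpha)$ slack needed to guarantee high-probability entropy concentration yields the final exponent $c = 7$, which is the source of the $\widetilde\Omega(\alpha^7)$ bound.

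The main obstacle will be the edit-robustness analysis in the low-entropy regime: a single text edit both shifts the subsequent context and can convert a "high-entropy" position into a "low-entropy" one (or vice versa), so without care the number of corrupted extracted bits per text edit could be super-polynomial in $1/\alpha$. Controlling this blow-up requires the synchronization portion of the embedding to be locally recoverable, in the spirit of the edit-robust construction of \cite{GM24} but adapted to constant-size alphabets. A secondary subtlety is that the error pattern induced on the PRC ciphertext by the watermarking pipeline is not exactly uniform adversarial edits; the "slight variant" of the permuted codes conjecture mentioned in the statement is presumably what is needed to ensure the PRC remains pseudorandom against ciphertexts generated through this embedding (rather than sampled directly from $\Enc$), and this should follow from a straightforward reduction once the embedding is specified.
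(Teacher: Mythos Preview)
Your proposal diverges from the paper's approach at the very first step and, as written, will not close.

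The crucial gap is which PRC you invoke. You use \Cref{thm:prc-main}, whose robustness is only to a small absolute constant rate of edits. But in the CG24 embedding (which the paper uses, and which you do not actually describe), every token carries one PRC bit: one samples $\tok_i \sim \Ber(p_i - (-1)^{x_i}\min\{p_i,1-p_i\})$, so the watermarked text $\tok$ is itself a Hamming-noisy copy of the PRC codeword $x$, at Hamming distance roughly $(1/2 - \Theta(\alpha^2))n$ (this is \cite{CG24}'s Lemma~21, restated in the paper as \cref{lem:ham-hemp}). Thus before the adversary touches anything, the codeword has already suffered close to $1/2$ fraction of substitutions. An adversary then applies a small edit fraction on top. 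So the PRC must be robust to a \emph{composed} $(1/2-\alpha^2,\epedit)$-substitution-edit channel, not merely to a small edit rate. That is precisely why the paper cannot use \Cref{thm:prc-main} here and instead builds a second PRC (\cref{thm:prc-hamming-edit}) from \emph{folded} Reed-Solomon codes, whose list-recovery parameters can be pushed to accommodate the near-$1/2$ substitution rate. The ``slight variant'' of the conjecture in the theorem statement is exactly the permuted-FRS conjecture (\cref{conj:permuted-frs}) needed for this second PRC; it has nothing to do with the embedding process as you guessed.

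Your alternative---rejection-sample to embed bits only at high-entropy positions, then have $\Detect$ ``re-extract the implicit bit sequence''---would require the detector to reconstruct which positions were bit-carrying, which means re-running the model on the (edited) text. The paper's detector is model-free: it simply runs $\PRC.\Dec$ on all substrings. Your scheme changes the detection model, and the synchronization analysis you sketch (single edits causing $\poly(1/\alpha)$ bit-level edits via context shifts) is exactly the hard part you would then need to prove, not something you can invoke. In the paper's route, by contrast, the $\alpha^7$ arises cleanly: $\pSub=\alpha^2$ forces $\epedit=\tilde\Theta(\pSub^3)=\tilde\Theta(\alpha^6)$ in the FRS-PRC (\cref{def:hamming-edit-params}), and the block-selection argument in \cref{lem:wat-edit-robustness} costs an extra factor of $\sqrt{\epham}=\alpha$.
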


For this result, we need a slight variant of the permuted codes conjecture to hold for \emph{folded Reed-Solomon Codes}, with the caveat that these are technically not linear codes over their underlying alphabet. 

This is the first undetectable LLM watermark that is robust to a constant rate of edits, and which works under a constant per-token entropy rate.
Furthermore, our edits can be worst-case (i.e., the watermark is strongly adaptively robust). 
Previous watermarks either noticeably changed the distribution of text \cite{kuditipudi2024robust}, or required a superconstant (and therefore unrealistic) rate of entropy \cite{GM24}.

\subsection{Related work}

\heading{Watermarks.} The recent line of work on LLM watermarks was largely initiated by \cite{scott,KGW}.
Several works with provable guarantees followed \cite{CGZ24,ZhaoA0W24,kuditipudi2024robust}; however, until \cite{CG24} no watermark was simultaneously undetectable and robust to significant modification.
Subsequent works with both of these properties all use pseudorandom codes \cite{GM24,GZS25,DBLP:conf/sp/Cohen0S25}.

It is impossible to construct a watermark that is robust against arbitrary adversarial removal strategies, as a strong enough adversary can simply create an unwatermarked response itself.
\cite{DBLP:conf/icml/ZhangEF0AB24} show a formal version of this impossibility, where they assume that the adversary can generate, from any given response, a uniformly random response of the same quality.
As a result, we (and other watermarking works) consider robustness only to restricted adversarial removal strategies, e.g. those making a bounded number of edits to the LLM output.

We refer the reader to \cite{sok} for further background on watermarks.

\heading{Quasipolynomial-time distinguishing attacks.}
The first pseudorandom code construction, of Christ and Gunn \cite{CG24}, is essentially a binary random linear code satisfying some planted $(\log n)$-sparse parity checks.
Encryptions of `1' are random codewords with a constant rate of noise, which are pseudorandom under subexponential LPN.
Encryptions of `0' are uniform strings.
Decoding, which tolerates random substitutions, involves checking whether the given string satisfies a $\left(\frac{1}{2} + \frac{1}{\mathsf{poly}(n)}\right)$ fraction of the parity checks for a particular polynomial in $n$.
This is true for encryptions of `1' if and only if the parity checks are sufficiently---in particular, $O(\log n)$---sparse.
However, this sparsity also enables a quasipolynomial-time attack that brute-force guesses a parity check in $n^{O(\log n)}$ time.

Subsequent works follow a similar technique, embedding hidden $(\log n)$-sparse structure.
For example, \cite{GM24} constructs PRCs under the assumption that there exist $(\log n)$-local weak PRFs.
\cite{GG25} constructs PRCs under the assumption that a random hypergraph is indistinguishable from a random hypergraph with a planted structure consisting of $\log n$ hyperedges.
These constructions exhibit the same phenomenon as \cite{CG24}---$\log n$ sparsity is necessary for decoding, but it unfortunately leads to quasipolynomial-time distinguishing attacks that simply brute-force search for this sparse structure.

In fact, in \Cref{sec:quasipoly-dist} we show that a broader class of constructions which does not immediately appear to necessitate $(\log n)$-sized structures always has quasipolynomial-time distinguishing attacks.
This class includes constructions where codewords are comprised in part of a random string $r$ and a predicate $f(r)$ that tolerates a constant rate of errors.

We do not know how to construct PRCs with public encoding using our techniques, unlike the constructions of \cite{CG24,GG25}.
% \dan{mention downside that our PRCs are secret keyed while some previous ones were public-key}

\heading{Robustness notions.}
Robustness can vary on two important axes: the type and number of errors the channel is allowed to introduce, and the information the channel knows.
The most basic notion of a PRC tolerates a constant rate of substitutions, and even constructing this object is challenging. 
However, ideally one wants to tolerate a constant rate of edits, allowing insertions and deletions in addition to substitutions.
Orthogonal to the kind of errors is the knowledge the channel has when choosing these errors.
A weak channel may make only random errors; a stronger channel may be computationally bounded, and choose errors without knowledge of the PRC secret key.
The strongest kind of channel is computationally unbounded and receives the secret key as input---this is the error model we consider in this work.
Note that while this channel is able to distinguish PRC codewords from random, this does not necessarily create an issue for robustness.

Two constructions go beyond substitution robustness. \cite{CG24} propose a binary PRC that is robust to a constant rate of \emph{random} deletions.
\cite{GM24} propose a polynomial-alphabet PRC that is robust to a constant rate of worst-case edits.
However, this large alphabet translates to a watermark that requires unrealistically high-entropy responses.
Accommodating realistic entropy rates requires a \emph{binary} PRC with robustness to a constant rate of edits; this is what we construct in this work.

The only paper we are aware of that considers PRCs for channels which may depend on the secret key in some way is \cite{DBLP:conf/stoc/AlrabiahACDG25}.
They defined a stronger notion of \emph{ideal} security for PRCs, where the channel is computationally bounded but can adaptively query encoding and decoding oracles.
They showed that the original PRC of \cite{CG24} can be easily transformed into one with ideal security.
% However, like that of \cite{CG24}, this resulting PRC suffers from quasipolynomial-time distinguishing attacks.
% And since they use pseudorandomness to argue about robustness to computationally bounded channels, this could translate to quasipolynomial-time robustness attacks.
\mir{I commented out the above--- if you open up the proof of ideal security, our construction is actually plausibly robust to a subexponential-time attacker with encoding and decoding oracle access. We don't need to rely on pseudorandomness.}
In contrast, we present PRCs which are robust even against an error channel that is computationally unbounded and knows the secret key.

\heading{On the hardness of constructing PRCs.}
Two recent concurrent works \cite{blackbox1,blackbox2} show that PRCs are difficult to construct in a formal sense. 
That is, they cannot be black-box constructed from a wide range of cryptographic primitives including random oracles, public-key encryption, and virtual black-box obfuscation.
Therefore particular cryptographic assumptions---such as our permuted codes conjecture or the prior sub-exponentially secure LPN \cite{CG24,GG25}, planted hyperloop \cite{GG25}, or weak PRF \cite{GM24} assumptions---are necessary.
It remains an interesting open question to construct sub-exponentially secure pseudorandom codes from more standard assumptions than the permuted codes conjecture, such as LPN or LWE or variants thereof (sparse LPN, dense--sparse LPN, ring LWE).

\heading{``Heuristic construction'' of \cite{CG24}.} In addition to their LPN-based PRC, \cite{CG24} mentions an alternate construction, which is ``heuristic'' in that it lacks strong evidence for pseudorandomness.
This construction involves taking a binary error-correcting code, permuting its symbols, and applying a constant rate of random substitutions.
\cite{CG24} does not specify the codes for which this construction should be pseudorandom, though they suggested investigating polar codes due to their high rate.

It is fairly immediate that pseudorandomness of their heuristic construction is equivalent to the permuted codes conjecture, when both are restricted to binary codes with polynomial dual distance.
The only difference is that \cite{CG24} omits the alphabet permutation. 
However, for binary codes the alphabet permutation is meaningless as it is simply a one-time pad, which can be removed by adding pairs of codewords.

Furthermore, our result that the alphabet permutation is necessary (\Cref{thm:rs-dist}) implies that the heuristic construction is insecure when instantiated with larger-alphabet codes (though \cite{CG24} considers it only for binary codes).

\section{Preliminaries}

  Let $\text{Bin}(n, p)$ denote the binomial distribution with $n$ independent trials and success probability $p$ of each trial.
  Let the \emph{substitution channel} $\channSC_\eta : \Sigma^\st \to \Sigma^\st$ be the randomized channel which, on input $x \in \Sigma^\st$, outputs a string $y \in \Sigma^\st$ of the same length as $x$ obtained by independently replacing each symbol of $x$ with a uniformly random symbol from $\Sigma$ with probability $\eta$. 
  For a set $X$, we let $S_X$ denote the symmetric group on $X$, i.e., the set of permutations $\pi : X \to X$. We say that a non-negative valued function $f(\lambda)$ of a security parameter $\lambda$ is \emph{negligible} (and write $f(\lambda) \leq \negl(\lambda)$) if $f$ decays faster than any polynomial in $\lambda$, i.e., $f(\lambda) \leq O(1/\poly(\lambda))$ for any polynomial $\poly$. While we state our guarantees with this ``sub-polynomial'' notion of decay, we emphasize that our constructions achieve \emph{sub-exponential} security assuming sub-exponential security  of the permuted codes conjecture (and this follows immediately from our proofs). 

  For a finite set $S$, we write $x_1, \ldots, x_k \gets S$ to mean elements $x_1 \ldots, x_k$ sampled uniformly at random from $S$, with replacement.
  For a distribution $D$, we write $x_1, \ldots, x_k \gets D$ to denote sampling $k$ i.i.d. elements from $D$.

  We will often use $\Sigma = \BF_q$, the finite field of size $q$.
  In cases when the algebraic structure is not important, we will sometimes refer to elements of $\BF_q$ as elements of $[q]$.
  
  \subsection{Pseudorandom codes}
  \begin{definition}
    Let $\Sigma$ be an alphabet and $\ME : \{0,1\}^* \times \Sigma^\st \to \Sigma^\st$ be an error channel on $\Sigma^\st$ that may depend on some auxiliary information $\sk \in \{0,1\}^*$. A \emph{(secret-key) pseudorandom code with strong adaptive robustness to $\ME$} is a tuple of polynomial-time randomized algorithms $(\KeyGen, \Enc, \Dec)$, where:
    \begin{itemize}
    \item (Syntax) For a security parameter $\lambda$ and functions $s(\lambda), n(\lambda)$ denoting the length of the secret key and the block length, we have: $\KeyGen: \{1^\lambda \} \to \{0,1\}^{s(\lambda)}$, $\Enc : \{0,1\}^{s(\lambda)} \to \Sigma^{n(\lambda)}$, and $\Dec: \{0,1\}^{s(\lambda)} \times \Sigma^{n(\lambda)} \to \{ \True, \False \}$.
    \item (\emph{Undetectability}; also referred to as \emph{Pseudorandonmess}) For any polynomial-time distinguisher $\Dist$, it holds that
      \begin{align}
      \left| \Pr_{\substack{\sk \gets \KeyGen(1^\lambda)}} \left[ \Dist^{\Enc(\sk)}(1^\lambda) = 1 \right] - \Pr_{\MU} \left[ \Dist^\MU(1^\secpar) = 1 \right] \right| \leq \negl(\lambda)\nonumber,
      \end{align}
      where $\MU$ denotes the uniform oracle which outputs a uniformly random string in $\Sigma^{n(\lambda)}$ each time it is called. 
    \item (\emph{Soundness}) For any fixed $y \in \Sigma^\st$, it holds that
      \begin{align}
\Pr_{\sk \gets \KeyGen(1^\lambda)} \left( \Dec(\sk, y) = \True \right) \leq \negl(\lambda)\nonumber.
      \end{align}
    \item (Strongly-adaptive robustness) For any $\lambda \in \BN$, we have that
      \begin{align}
      \Pr_{\substack{\sk \gets \KeyGen(1^\lambda) \\ x \gets \Enc(\sk)}} \left(\Dec(\sk, \ME(\sk, x)) = \False \right) \leq \negl(\lambda)\nonumber. 
      \end{align}
    \end{itemize}
    
  \end{definition}

  \subsection{Watermarking schemes}
  \label{sec:watermarking-intro}
  Recall that one of our main motivations behind pseudorandom codes is to obtain \emph{watermarking schemes} for autoregressive language models. An \emph{autoregressive language model} $\Model$ over some alphabet $\Sigma$ is a algorithm $\Model$ which takes as input a prompt $\prompt \in \Sigma^\st$ and a sequence of previous tokens $\tok_1, \ldots, \tok_{i-1}$ and outputs a distribution $\Model(\prompt, \tok_{1:i-1}) \in \Delta(\Sigma)$ over the next token $\tok_i \in \Sigma$. Given an integer $\ell \in \BN$, such an algorithm defines a distribution over sequences of tokens $\tok \in \Sigma^\st$, by repeatedly drawing $\tok_i \sim \Model(\prompt, \tok_{1:i-1})$ for $i \in [\ell]$. A \emph{watermarking scheme} $\MW$ for a language model $\Model$ consists of the following components (see \cref{sec:prc-watermarking-formal} for a formal treatment):
\begin{itemize}
\item $\Setup(\lambda)$, which takes as input a security parameter $\lambda$ and outputs a secret key $\sk$ of length polynomial in $\lambda$;
\item $\Wat(\sk, \prompt)$, which takes as input a secret key $\sk$ and a prompt $\prompt$ and outputs a sequence $\tok \in \Sigma^\ell$;
\item $\Detect(\sk, \tok)$, which takes as input a secret key $\sk$ and a sequence $\tok \in \Sigma^\ell$ and outputs a response in $\{ \True, \False \}$ indicating whether $\Detect$ detects the sequence $\tok$ as being watermarked according to $\sk$. 
\end{itemize}
Generally speaking, we want the watermarking scheme $\MW$ to have properties paralleling that of pseudorandom codes: in particular, we desire (a) \emph{soundness}, meaning that any fixed string is detected as watermarked with negligible probability (\cref{def:wat-sound}), (b) \emph{undetectability}, meaning that strings output by $\Wat(\sk, \prompt)$ are computationally indistinguishable from strings output by repeatedly sampling from $\Model$ (\cref{def:wat-undetect}), and (c) \emph{strong adaptive robustness}, meaning that any channel which corrupts the watermarked string by a small number of edits, \emph{even with knowledge of the secret key}, cannot fool the detection algorithm $\Detect$ (\cref{def:wat-robust}). % \noah{say somewhere that PRC is watermarking alg for uniform distribution}

In this work we omit discussion of multi-bit watermarks, although our results immediately yield watermarks with positive information rate.
See \cite[Sections 2.6 and 7.4]{CG24} for the applications of PRCs encoding multiple bits to watermarking with unforgeable public attribution.\footnote{Some works have referred to similar properties as ``public detection'' or ``publicly detectable watermarks.'' In \cite[Sections 2.6 and 7.4]{CG24} it is explained why unforgeable attribution must be treated as an independent property from standard watermark detection.}

\section{Technical overview}
We now overview the technical ideas behind our main contributions. In \cref{sec:comp-asm}, we describe the permuted codes assumption.
We explain how it relates to the ``permuted puzzles'' assumption of prior work, we show that it holds \emph{statistically} for a small number of samples if the alphabet is small, and we give an attack on large-alphabet schemes that do not use the alphabet permutation.
We then reiterate the need for new approaches to PRC constructions by showing a quasipolynomial-time attack against a general blueprint used in prior works.
In \cref{sec:edit-consequences,sec:intro-prc-watermarking}, we outline how this assumption yields binary pseudorandom codes with strongly-adaptive robustness to edits and sub-exponential security.

\subsection{The computational assumption: permuted codes}
\label{sec:comp-asm}
% A natural place to look for constructions satisfying the two key properties of pseudorandom codes, namely \emph{robustness} and \emph{pseudorandomness}, is to the area of \emph{code-based cryptography}. A long line of work, starting with a paper of McEliece \noah{todo cite}, has proposed to use error-correcting codes for 
We introduce a family of assumptions which we collectively call \emph{permuted codes} assumptions. Roughly speaking, these assumptions posit that, for codes $C$ satisfying certain properties, taking a random codeword from $C$, randomly permuting it and adding noise is indistinguishable from simply outputting a uniformly random string. To formally state the assumption, we fix a set $\Sigma$ denoting the alphabet, an integer $n \in \BN$ denoting the block length, and let  $C \subseteq \Sigma^n$ denote an arbitrary set (which will typically be taken to be an error-correcting code with large dual distance). 
Given a real number $\eta > 0$, we let the \emph{substitution channel} $\channSC_\eta : \Sigma^\st \to \Sigma^\st$ be the randomized channel which, on input $x \in \Sigma^\st$, outputs a string $y \in \Sigma^\st$ obtained by independently replacing each symbol of $x$ with a uniformly random symbol from $\Sigma$ with probability $\eta$.
That is,
\[
    \channSC_{\eta}(x)_i =
    \begin{cases}
        x_i & \text{with probability $1-\eta$, and}\\
        \text{uniform in $\Sigma$} & \text{with probability $\eta$}.
    \end{cases}
\]
% Let $\ME : \Sigma^* \to (\Sigma \cup \{\bot\})^*$ be some error channel, and $T \in \BN$ be an integer denoting the number of samples. 
We define $\MD_{n, \Sigma, C, \eta, T}$ as follows:
\begin{enumerate}
    \item Sample random alphabet permutations $\pi_1, \dots, \pi_n \gets S_{\Sigma}$ and index permutation $\sigma \gets S_{[n]}$.
    \item Repeat $T$ times:
    \begin{enumerate}
        \item Sample $c \gets C$.
        \item Define $\hat{c}$ by $\hat{c}_i \gets \pi_i(c_{\sigma(i)})$ for $i \in [n]$.
        \item Output $\channSC_\eta(\hat{c})$.
    \end{enumerate}
\end{enumerate}

The permuted codes assumption says that $\MD_{n,\Sigma, C, \eta, T}$ is pseudorandom.
\begin{definition}[Permuted codes assumption] \label{def:permuted-codes}
    Let $\lambda \in \Z^+$ be a security parameter, and let $n = n(\lambda)$, $q = q(\lambda)$, $T = T(\lambda)$ be polynomially-bounded functions in $\lambda$ denoting the block length, alphabet size, and number of samples, respectively. 
    Let $C = C(\lambda) \subseteq \BF_q^n$ be a family of codes.
    The \emph{permuted codes assumption for $C$ with error $\eta$} states that $\MD_{n,\BF_q,C,\eta,T}$ is computationally indistinguishable from $\Unif((\BF_q^n)^T)$. %  and $\MD_{n,\BF_q,\BF_q^n,\ME,T}$ are computationally indistinguishable. %  for any $T = \poly(\lambda)$.
\end{definition}
Note that the uniform distribution $\Unif((\BF_q^n)^T)$ is identical to the distribution $\MD_{n,\BF_q,\BF_q^n,\eta,T}$ where $C = \BF_q^n$ is the trivial code.
We conjecture that the permuted codes assumption holds with any constant error rate for any family of linear codes with polynomial dual distance.
\begin{conjecture}[General permuted codes conjecture] \label{conj:permuted-codes}
    Let $\lambda \in \Z^+$ be a security parameter, and let $n = n(\lambda)$, $q = q(\lambda)$, $T = T(\lambda)$ be polynomially-bounded functions in $\lambda$ denoting the block length, alphabet size, and number of samples, respectively.
    The \emph{permuted codes conjecture} states that if $C = C(\lambda) \subseteq \BF_q^n$ is any family of linear codes with dual distance $d = d(\lambda) = \lambda^{\Omega(1)}$ and $\eta = \Omega(1)$ is any constant error rate, then $\MD_{n,\BF_q,C,\eta,T}$ is computationally indistinguishable from $\Unif((\BF_q^n)^T)$.
\end{conjecture}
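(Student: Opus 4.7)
The plan is to justify this conjecture in three complementary ways: a cryptographic reduction to a pre-existing hardness assumption, an unconditional statistical argument for a small number of samples, and attacks on weakenings of the assumption that rule out obvious failure modes.

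First, my main approach is to reduce from the ``permuted puzzles'' conjecture of \cite{boyle2021security,blackwell2021note}. The permuted puzzles distribution is structurally almost identical to $\MD_{n,\BF_q,C,\eta,T}$ except that noise takes the form of erasures rather than random substitutions. I would reduce the substitution model to the erasure model via a two-step hybrid: take a permuted puzzles sample with erasure rate $\eta$ and fill each erased position with a uniformly random symbol of $\BF_q$. A careful accounting of the interaction with the alphabet permutations $\pi_i$ is needed: since each $\pi_i$ is a uniformly random bijection on $\BF_q$, a fresh uniform symbol inserted at position $i$ of the permuted codeword is statistically identical to $\pi_i$ applied to a uniform symbol, which is precisely what the substitution channel produces. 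The index permutation $\sigma$ and the linearity of $C$ are preserved through the hybrid, so the dual-distance hypothesis carries over. If the reduction succeeds, the conjecture inherits whatever security, including subexponential, is assumed on permuted puzzles.

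Second, for unconditional statistical evidence on a small number of samples, I would run a Fourier-analytic $L^2$ argument. Writing the density of $\MD_{n,\BF_q,C,\eta,T}$ minus uniform in characters of $\BF_q^{nT}$, the nonzero Fourier coefficients can be indexed by tuples $(v_1,\dots,v_T)$ of codeword-side characters, each of whose support must meet the dual code $C^{\perp}$. The substitution channel contributes a factor $(1-\eta)^{\wt(v_t)}$ per sample, and averaging over the alphabet permutations kills every character whose support is not appropriately balanced. Using the polynomial dual distance $d = \lambda^{\Omega(1)}$ gives a per-character suppression of $(1-\eta)^d$; summing over surviving characters bounds the chi-squared distance to uniform by $o(1)$ whenever $T \log q \ll d$, which yields the claimed $T = \Omega(\log n)$ statistical bound over constant-sized alphabets.

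Third, I would justify the necessity of each randomizing component by exhibiting the attacks indicated in \cref{tab:two-of-three}: without alphabet permutations, a polynomial-interpolation attack reads off the Reed--Solomon structure; without the index permutation, the exposed coordinate layout reduces distinguishing to decoding; and without noise, linear relations in $C$ survive intact. The main obstacle, and the step most likely to fail or incur parameter loss, is the substitutions-versus-erasures hybrid in the reduction from permuted puzzles, because erasures reveal \emph{which} positions are noisy while substitutions do not, and I will need to argue that this distinction becomes useless once the alphabet permutations are applied. If this reduction cannot be made to go through without loss, my fallback is to elevate the permuted codes conjecture to an independent assumption supported solely by the statistical argument and by the attack catalog above.
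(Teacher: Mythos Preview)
This statement is a \emph{conjecture}, so there is no proof to compare against; what the paper offers is evidence, and your three-pronged plan (reduce from permuted puzzles, prove statistical closeness for few samples, catalog attacks on weakenings) matches the paper's structure exactly. The differences are in how the first two prongs are executed.

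\textbf{Reduction from permuted puzzles.} Your mental model of the permuted puzzles distribution is off. It is \emph{not} index-permutation plus alphabet-permutations with erasure noise; rather, a single permutation $\pi$ of the product set $[n]\times\BF_q$ is applied, and each sample reveals $m$ pairs $\pi(i_j,c_{i_j})$ with indices $i_j$ chosen with replacement. So the reduction cannot just ``fill erased positions with uniform symbols'': you do not even know which output symbols correspond to the same underlying coordinate. The paper's reduction first \emph{learns} the block structure of $\pi$ by collecting many samples and grouping together symbols that never co-occur in the same sample (two symbols sharing an index can never appear together, while symbols from distinct indices co-occur with probability $\Omega(1/(nq)^2)$). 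Only after recovering these blocks does it choose fresh random bijections $\pi_i:S_i\to\BF_q$ and a fresh index permutation, and then convert the ``$m$ indices with replacement'' into ``$\mathrm{Bin}(n,1-\eta)$ clean positions plus uniform fill-ins''. Your identified obstacle (erasures reveal noise locations) is real but secondary to this block-recovery step.

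\textbf{Statistical evidence.} Your proposed $L^2$/chi-squared Fourier argument is a natural first try, but the paper takes a genuinely different and cleaner route. The key observation is that the index permutation $\sigma$ makes the distribution invariant under coordinate permutations, so any (even unbounded) distinguisher depends only on the \emph{multinomial counts}: for each $a\in\BF_q^T$, how many positions $i\in[n]$ have column $(c^{(1)}_i,\dots,c^{(T)}_i)=a$. These counts are computable by a length-$nT$, width-$\binom{n+q^T-1}{q^T-1}$ read-once branching program, and the paper then applies the Forbes--Kelley Fourier decomposition of ROBPs to show that noise plus $d$-wise uniformity fool such programs. Two points where your sketch diverges: (i) the paper's argument does \emph{not} use the alphabet permutations at all for this step, so your claim that they ``kill unbalanced characters'' is not the operative mechanism; (ii) your Fourier expansion indexes characters by dual codewords, but this is the expansion \emph{before} averaging over $\sigma$, and you have not said how the index-permutation average interacts with it. The ROBP route sidesteps this by reducing to a permutation-invariant statistic first.

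Your third prong (attacks on weakenings) aligns with the paper; in particular the no-alphabet-permutation attack via polynomial interpolation is exactly \cref{thm:rs-dist}.
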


In \Cref{sec:sub-robustness}, we show that this conjecture immediately yields a binary PRC that is robust to a constant rate of \emph{substitutions}.
That is, let $C$ be an efficiently decodable binary code that has high dual distance.
Algorithm $\Enc$ of our PRC simply outputs permuted random noisy codewords, which are pseudorandom under the permuted codes assumption.
$\Dec$ inverts the permutation and applies the efficient decoder.
A far more significant challenge, which we discuss in the next section~(\Cref{sec:edit-consequences}) is achieving \emph{edit} robustness.

For our results, we will only need the permuted codes assumption for particular families of codes; below we state the specialization to the important case where $C$ is the family of \emph{Reed-Solomon codes} (\cref{def:rs}).
\begin{conjecture}[Permuted Reed-Solomon conjecture]
  \label{conj:permuted-rs}
Fix any constant $\eta$. For security parameter $\lambda$, let $C$ be the Reed-Solomon code $\RS_{\F_q,n,k}$ with $n = \lambda, q = \lambda - 1, k = \lambda^{1/5}$ (see \cref{def:rs}).\footnote{The choice of $k = \lambda^{1/5}$ is unimportant; it is straightforward to adjust our algorithm to accommodate any $k$ which grows polynomially with $\lambda$.}
Then the distributions $\MD_{n,\BF_q,C,\eta,T}$ and $\Unif((\BF_q^n)^T)$ are computationally indistinguishable for any $T = \poly(\lambda).$
\end{conjecture}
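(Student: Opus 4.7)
The plan is to establish Conjecture 2.6 as a direct specialization of the general permuted codes conjecture (Conjecture 2.5). Since Conjecture 2.5 is itself assumed rather than proved, what remains to ``prove'' is that the Reed-Solomon family with the specified parameters meets the two structural hypotheses of Conjecture 2.5: linearity and dual distance $\lambda^{\Omega(1)}$.

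Linearity of $\RS_{\F_q, n, k}$ is immediate from its definition as the image of the evaluation map on polynomials of degree less than $k$ at $n$ fixed distinct points in $\F_q$, which forms a $k$-dimensional $\F_q$-linear subspace of $\F_q^n$. For the dual distance, I would invoke the standard fact that Reed-Solomon codes are MDS, together with the fact that the dual of an $[n,k]$ Reed-Solomon code is a generalized Reed-Solomon code of dimension $n-k$ and hence of minimum distance exactly $k+1$. With $n=\lambda$ and $k=\lambda^{1/5}$, this yields dual distance $\lambda^{1/5}+1 = \lambda^{\Omega(1)}$, satisfying the polynomial-dual-distance hypothesis of Conjecture 2.5. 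Since $\eta$ is a fixed constant and $T = \poly(\lambda)$, the remaining hypotheses are met and the conclusion follows.

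As an alternate route, one could chain through the earlier ``Evidence for the permuted codes conjecture'' theorem, which reduces the general permuted codes conjecture to the permuted puzzles conjecture of prior work. This is useful because the permuted puzzles conjecture has already received some cryptanalytic attention in the context of doubly efficient PIR, and so basing Conjecture 2.6 on it gives stronger external evidence than basing it on Conjecture 2.5 directly; one would need to check that the reduction accommodates the substitution-noise model of Conjecture 2.5 rather than the erasure model naturally associated with puzzles.

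The main obstacle here is not really a proof obstacle but a parameter-tuning one: the entire argument is a verification of hypotheses, and the substantive mathematical content lives in the general conjecture being assumed. The tension is that one wants $k$ large enough for the dual distance to be polynomial (so Conjecture 2.5 applies), yet small enough that the Reed-Solomon decoding radius $(n-k)/2$ is close to $n/2$ and therefore absorbs the constant rate $\eta$ of noise used in the permuted-codes distribution, which matters when this conjecture is later plugged into the downstream PRC constructions. The stated choice $k = \lambda^{1/5}$ balances these demands comfortably, and indeed any $k = \lambda^{\Theta(1)}$ with a sufficiently small exponent would suffice for the specialization step itself.
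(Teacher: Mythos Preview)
This statement is a conjecture rather than a theorem, so the paper offers no proof; it is introduced simply as ``the specialization [of Conjecture~2.5] to the important case where $C$ is the family of Reed-Solomon codes.'' Your approach---verifying that Reed-Solomon codes are linear with dual distance $\lambda^{\Omega(1)}$ so that Conjecture~2.5 specializes to Conjecture~2.6---is exactly this implicit reduction and is correct (minor bookkeeping: with the paper's ``degree at most $k$'' convention the dimension is $k+1$ and the dual distance is $k+2$, but this does not affect the $\lambda^{\Omega(1)}$ conclusion).
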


We also specialize \Cref{def:permuted-codes} to folded Reed-Solomon codes~(\Cref{conj:permuted-frs}), which we use to build edit-robust watermarks in~\Cref{sec:edit-robustness}.

\paragraph{Evidence: as a consequence of permuted puzzles}
\cref{def:permuted-codes} (and its specialization \cref{conj:permuted-rs}) are related to various precedents in the literature. The closest connection is to the \emph{permuted puzzles} assumption of \cite{boyle2021security,blackwell2021note}, which is similar to \cref{def:permuted-codes} but applies a single permutation to all of $[n] \times \Sigma$.  To formally state it, given $n \in \BN$, $C \subseteq \Sigma^n$, and $T, m \in \BN$, we define the distribution $\Dpp_{n, \Sigma, C, T, m}$ as follows:
% If $n \in \BN$, $C \subseteq \Sigma^n$ is an arbitrary set, and $T, m \in \BN$, then we define $\Dpp_{n, \Sigma, C, T, m}$ as follows:
\begin{enumerate}
    \item Sample a random permutation $\pi \gets S_{[n] \times \Sigma}$.
    \item Repeat $T$ times:
    \begin{enumerate}
        \item Sample $c \gets C$.
        \item Sample $i_1, \ldots, i_m \gets [n]$.
        \item Output $( \pi(i_1, c_{i_1}), \ldots, \pi(i_m, c_{i_m}))$.
    \end{enumerate}
\end{enumerate}

\begin{conjecture}[Permuted puzzles, ``General conjecture'' \cite{boyle2021security,blackwell2021note}] \label{conj:perm-puzz}
    Let $\secpar \in \BZ^+$ denote a security parameter, and let $n = n(\lambda), q = q(\lambda), T = T(\lambda), m = m(\lambda) \leq n(\lambda)$ be polynomially growing functions in $\lambda$ denoting the block length, alphabet size, number of samples, and number of symbols per codeword, respectively. If $C = C(\lambda) \subseteq \BF_q^n$ is any family of linear codes with dual distance $d = d(\lambda) = \lambda^{\Omega(1)}$ and $m \le (1 - \Omega(1)) \cdot n$, then $\Dpp_{n, \F_q, C, T, m}$ and $\Dpp_{n, \F_q, \F_q^n, T, m}$ are computationally indistinguishable.
    % \begin{itemize}
    %     \item The block length $n$ and the dual distance $d$ are both $\secpar^{O(1)}$.
    %     \item $C$'s dual distance is at least $\secpar$.
    % \end{itemize}
    % Let $T = T(\secpar)$ be any polynomial in $\secpar$, and let $m = m(\secpar)$ be such that $\frac{m}{n} \leq 1 - \Omega(1)$. Then $\Dpp_{\F_q, C, T, m}$ and $\Dpp_{\F_q, \F_q^n, T, m}$ are computationally indistinguishable.
\end{conjecture}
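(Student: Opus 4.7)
Since the statement is a cryptographic hardness conjecture rather than a theorem, an unconditional proof is out of reach; the goal should instead be to provide structural evidence of plausibility, mirroring what the authors have promised (in the intro) to do for their own permuted codes conjecture. My plan would be threefold: establish statistical security for a small number of samples, give reductions connecting \cref{conj:perm-puzz} to other hard-to-refute statements, and rule out broad classes of natural attacks.

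First, I would establish that for $T = O(\log n)$ samples, the distribution $\Dpp_{n,\F_q,C,T,m}$ is statistically (not merely computationally) close to $\Dpp_{n,\F_q,\F_q^n,T,m}$. The key observation is that if $C$ has dual distance $d = \lambda^{\Omega(1)}$, then the marginal distribution of any codeword $c \sim C$ restricted to any subset $S \subset [n]$ with $|S| < d$ is exactly uniform over $\F_q^{|S|}$. Within a single sample of the permuted puzzle, one sees $m$ randomly chosen coordinates, but conditioned on the drawn indices $i_1,\ldots,i_m$ the symbols $c_{i_1},\ldots,c_{i_m}$ are uniform up to collisions with probability at most $\binom{m}{2}/n$. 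Across $T$ samples, the joint tuple of pairs $(i_j, c_{i_j})$ is uniform except when a coordinate is queried at least $d$ times; a standard balls-in-bins/Chernoff argument bounds the probability of this event, showing statistical indistinguishability holds up to $T = O(\log n)$ (for suitable $m$) without invoking any computational assumption. This parallels exactly the kind of evidence the paper advertises in \Cref{cor:stat-evidence}.

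Second, I would attempt to show that natural attack classes do not succeed. The alphabet permutation composed with the index permutation gives a single permutation over $[n] \times \Sigma$ that is symmetry-breaking: any coordinate-wise linear or low-degree statistical test over $\F_q$-valued symbols is automatically fooled, since after applying a random permutation the alphabet loses its algebraic structure entirely. This immediately rules out spectral tests and character-sum attacks that would work against unpermuted noisy codewords. I would then extend this to simple distinguishers such as read-once branching programs and constant-depth circuits, using a hybrid argument that fixes coordinates one at a time and leverages high dual distance to argue that each revealed coordinate contributes at most a negligible bias. Third, to reduce to existing hardness, I would invoke the reduction (already claimed in the paper) showing that \cref{conj:perm-puzz} implies \cref{conj:permuted-codes}, giving equal footing: any break of permuted puzzles yields a break of permuted codes.

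The main obstacle is clear: the polynomial-$T$, polynomial-time indistinguishability claim of \cref{conj:perm-puzz} is a full-fledged cryptographic assumption, and ruling out all efficient distinguishers is beyond current techniques (as the abstract itself notes, the related low-degree conjecture was recently falsified in \cite{buhai2025quasipolynomial}, so one must be careful about what symmetries actually buy computational hardness). Consequently, the best obtainable ``proof'' is the combination of statistical indistinguishability for $T = O(\log n)$, security against restricted models of computation, and a reduction from prior permuted-puzzle cryptanalysis — and the writeup should be honest that the conjecture must ultimately stand as an assumption, to be broken only by a concrete polynomial-time distinguisher.
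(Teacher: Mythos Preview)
You are right that \cref{conj:perm-puzz} is a cryptographic assumption and cannot be proved unconditionally. But note that the paper does not attempt to prove it, nor to give direct evidence for it: it is quoted verbatim from \cite{boyle2021security,blackwell2021note} and used only as an \emph{upstream} assumption, via \cref{thm:puzzles-from-codes}, to lend credibility to the paper's own permuted \emph{codes} conjecture. All of the paper's evidence (\cref{cor:stat-evidence}, the ROBP argument, the cryptanalysis in \cref{sec:alphabet-permutation}) targets permuted codes, not permuted puzzles. So there is nothing in the paper to compare your proposal against.

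That said, two of your proposed pieces of evidence are technically off. First, your statistical argument misidentifies the bad event. Each of the $T$ samples in $\Dpp$ draws a \emph{fresh} codeword $c\gets C$, so the number of times a coordinate is hit \emph{across} samples is irrelevant; dual distance only controls the joint law of coordinates of a \emph{single} codeword. What actually matters is the number of distinct indices among $i_1,\ldots,i_m$ within one sample: if this is below $d$, the sample is \emph{exactly} uniform regardless of $T$. The nontrivial regime of the conjecture is $m\gg d$ (recall $d=\lambda^{\Omega(1)}$ while $m$ may be $\Theta(n)$), and there your balls-in-bins bound gives nothing. Second, your ``reduction to existing hardness'' has the implication backwards. \cref{thm:puzzles-from-codes} shows permuted puzzles $\Rightarrow$ permuted codes; contrapositively, a break of permuted codes breaks permuted puzzles. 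That is a liability for \cref{conj:perm-puzz}, not support for it. To give evidence \emph{for} \cref{conj:perm-puzz} via reduction you would need the reverse implication, which is not known.
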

\cref{conj:perm-puzz} was originally introduced for the unrelated purpose of obtaining doubly-efficient private information retrieval.
Interestingly, we can show that the permuted codes conjecture (\cref{conj:permuted-codes}) \emph{is implied by} the permuted puzzles conjecture (\cref{conj:perm-puzz}), meaning that we can base the existence of PRCs on either.
In fact, the permuted puzzles conjecture is equivalent to the version of the permuted codes conjecture where the substitution channel is replaced by the erasure channel.

\begin{proposition}
  \label{thm:puzzles-from-codes}
   Suppose that the permuted puzzles conjecture (\Cref{conj:perm-puzz}) holds.
   Then the permuted codes conjecture (\Cref{conj:permuted-codes}) holds.
\end{proposition}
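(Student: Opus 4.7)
My plan is a two-step reduction. For the first step, I would introduce an erasure variant of permuted codes: define $\MD'_{n,\F_q,C,\eta,T}$ identically to $\MD_{n,\F_q,C,\eta,T}$ but with the substitution channel $\channSC_\eta$ replaced by the erasure channel that independently maps each symbol to a distinguished symbol $\bot$ with probability $\eta$. The randomized map that independently replaces each $\bot$ with a fresh uniform element of $\F_q$ takes $\MD'_{n,\F_q,C,\eta,T}$ to $\MD_{n,\F_q,C,\eta,T}$ and takes $\MD'_{n,\F_q,\F_q^n,\eta,T}$ to the uniform distribution $\Unif((\F_q^n)^T)$. Hence indistinguishability of the erasure variant implies the permuted codes conjecture.

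For the second step, I would show that the permuted puzzles conjecture implies indistinguishability of the erasure variant. The reduction takes a puzzles sample $(\pi(i_l,c_{i_l}))_{l=1}^m$, writes each element as $(a_l,b_l)\in[n]\times\F_q$, and produces a vector $y\in(\F_q\cup\{\bot\})^n$ by placing $b_l$ at position $a_l$ (breaking collisions among the $a_l$'s arbitrarily) and setting the remaining coordinates to $\bot$. When $m\approx(1-\eta)n$, the expected fraction of non-$\bot$ coordinates matches that of a sample from the erasure variant. Intuitively, the first coordinate of $\pi(i_l,c_{i_l})$ plays the role of the index permutation $\sigma$ together with the choice of revealed positions, while the second coordinate plays the role of the per-position alphabet permutations $\pi_j$.

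The main obstacle is the structural mismatch between the general permutation $\pi\in S_{[n]\times\F_q}$ used in puzzles and the product permutation $(i,v)\mapsto(\sigma(i),\pi_i(v))$ used by the erasure variant of permuted codes. To close this gap, I would introduce an intermediate ``product puzzles'' hybrid in which the permutation is drawn uniformly from product permutations; I expect this hybrid to coincide with the erasure variant of permuted codes up to a routine translation between $m$ uniform samples (with replacement) and i.i.d. Bernoulli erasures. It then remains to show that the general puzzles output and the product puzzles output are computationally indistinguishable to any efficient distinguisher, which should follow from the puzzles conjecture together with the dual-distance hypothesis on $C$ and the constraint $m\le(1-\Omega(1))n$; this last step, which requires arguing that a distinguisher that sees only the outputs cannot detect the product structure of $\pi$, is where the technical heart of the proof lies.
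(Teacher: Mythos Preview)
Your first step (erasure variant $\Rightarrow$ substitution variant, by filling in each $\bot$ with a fresh uniform symbol) is fine and matches the spirit of the paper. The gap is in your second step: you correctly flag the structural mismatch between the single permutation $\pi\in S_{[n]\times\F_q}$ in puzzles and the product structure $(\sigma,\pi_1,\dots,\pi_n)$ in permuted codes, but your proposed fix does not close it. Your ``product puzzles'' hybrid amounts to asserting that general-$\pi$ puzzles with code $C$ are indistinguishable from product-$\pi$ puzzles with code $C$. The permuted-puzzles conjecture only gives you general-$\pi$-$C \approx$ general-$\pi$-trivial; to finish your chain you would still need product-$\pi$-$C \approx$ product-$\pi$-trivial, which is exactly the (erasure) permuted-codes statement you are trying to prove. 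So the hybrid is circular, and there is no reason the dual-distance hypothesis alone lets a distinguisher-based argument collapse general $\pi$ to product $\pi$.

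The idea you are missing is that the partition induced by $\pi$ is \emph{efficiently learnable from the samples themselves}. Two symbols $\pi(i,j)$ and $\pi(i,j')$ sharing the same underlying index $i$ can never co-occur within a single puzzles sample (a codeword assigns one value to position $i$), whereas symbols $\pi(i,j)$ and $\pi(i',j')$ with $i\neq i'$ do co-occur with probability $\Omega(1/(nq)^2)$ per sample (here the dual distance $\ge 2$ is used). So with $T\gg (nq)^2$ samples the reduction can recover the $n$ blocks $S_\ell=\{\pi(i,\cdot)\}$ exactly. It then draws its \emph{own} uniform index permutation $\pi_{\mathrm{adv}}\in S_{[n]}$ and uniform bijections $\pi_\ell:S_\ell\to\F_q$, and for each puzzles sample writes $\pi_\ell(\alpha)$ into coordinate $\pi_{\mathrm{adv}}(\ell)$ for each observed $\alpha\in S_\ell$, filling the remaining coordinates uniformly (with a small Binomial correction to match the $\channSC_\eta$ noise pattern). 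The point is that $\pi_\ell\circ\pi(i,\cdot)$ is a uniformly random permutation of $\F_q$, so this constructively converts general-$\pi$ puzzles samples into bona fide permuted-codes samples, with no hybrid argument needed.
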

The proof of \cref{thm:puzzles-from-codes} proceeds by drawing enough samples from either $\Dpp_{n, \F_q, C, T, m}$ or $\Dpp_{n, \F_q, \F_q^n, T, m}$ to learn which elements $z \in [n] \times \Sigma$ never appear in the same sample $(\pi(i_1, c_{i_1}), \ldots, \pi(i_m, c_{i_m}))$. This allows us to convert samples from $\Dpp_{n, \F_q, C, T, m}$ or $\Dpp_{n, \F_q, \F_q^n, T, m}$ into samples from $\MD_{n, \F_q, C, \eta, T}$ or $\Unif((\BF_q^n)^T)$ respectively. See \cref{sec:perm-puzz-to-codes} for the full proof.

\paragraph{Evidence: statistical uniformity of a few samples.}
Next, we discuss a complementary piece of evidence for \cref{def:permuted-codes}, which results from asking the following question: \emph{If we take $T(\lambda)$ to be very small (e.g., $O(\log n)$), can we prove that $\MD_{n,\Sigma,C,\eta,T}$ and $\Unif((\Sigma^n)^T)$ are in fact statistically indistinguishable?}
When the size of the alphabet $\Sigma$ is large (e.g., a sufficiently large polynomial in $n$), it is easy to see that this is not the case, by a simple counting argument.
It is also straightforward to see that such statistical indistinguishability does not hold when $T = \omega(\log n)$, simply because there is not enough entropy in the sampling process of $\MD_{n,\Sigma,C,\eta,T}$.
However, when $|\Sigma| = O(1)$, there exists $T(\lambda) = \Omega(\log d)$ such that the two distributions in question are statistically close:
\begin{proposition}[Informal version of $q=O(1)$ case of \cref{cor:stat-evidence}]
\label{prop:stat-evidence-informal}
    Let $C \subseteq \BF_q^n$ be any code with dual distance $d$ and constant alphabet size $q$.
    Fix a constant $\eta > 0$.
    For sufficiently small constant $c$, any $T \leq c \cdot \log d$ satisfies
    \[
        \tvd{\MD_{n,\BF_q,C,\eta,T}}{\Unif((\BF_q^n)^T)} \le \exp(-\Omega(d)).
    \]
\end{proposition}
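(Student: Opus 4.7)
The plan is to reduce the bound to an observation about ``type distributions'' on $\MT^n$ and leverage a $(d-1)$-wise independence structure inherited from the dual distance $d$, concluding via moment-method concentration.

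\emph{Reduction to type sequences.} Define the \emph{type} of $v \in \BF_q^T$ to be its orbit under the coordinate-wise $S_q$-action, let $\MT$ denote the set of types, and set $p_t := |t|/q^T$ for $t \in \MT$. For $\pi_i \sim \Unif(S_q)$, the vector $\pi_i(v)$ is uniform on its type, and since $\channSC_\eta$ preserves $\sum_t p_t \Unif(t) = \Unif(\BF_q^T)$, both $\MD_{n,\BF_q,C,\eta,T}$ and $\Unif((\BF_q^n)^T)$ can be sampled by a common two-step procedure: first draw a type sequence $(\tau_1,\ldots,\tau_n) \in \MT^n$, then independently sample row $i$ from $\nu_{\tau_i} := \channSC_\eta(\Unif(\tau_i))$. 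Under $\MD$ the type-sequence law is $P$ given by $\tau_i = \mathrm{type}(R_{\sigma(i)})$, where $R$ is the $n\times T$ matrix with columns iid from $C$; under $\Unif$ it is $Q := (p_t)^{\otimes n}$. By data processing, $\tvd{\MD_{n,\BF_q,C,\eta,T}}{\Unif((\BF_q^n)^T)} \le \tvd{P}{Q}$.

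\emph{From types to counts.} Both $P$ (via random $\sigma$) and $Q$ are exchangeable on $\MT^n$, so $\tvd{P}{Q} = \tvd{N^P}{N^Q}$ with $N^P_t, N^Q_t$ the type counts. The dual-distance hypothesis enters cleanly here: any $d-1$ rows of $R$ are jointly uniform in $\BF_q^{(d-1)T}$, so any $d-1$ of the types $\tau_i$ are iid from $(p_t)$. Thus $P$ is $(d-1)$-wise iid, and $N^P_t$ is a sum of $(d-1)$-wise independent $\mathrm{Ber}(p_t)$ indicators whose first $d-1$ moments exactly match those of $N^Q_t \sim \Bin(n, p_t)$.

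\emph{Concentration and conclusion.} By the Bellare--Rompel inequality for $(d-1)$-wise independent sums, $\Pr[|N^P_t - n p_t| \ge \lambda] \le (C d n p_t/\lambda^2)^{(d-1)/2}$, giving $\exp(-\Omega(d))$ concentration at $\lambda = \Theta(\sqrt{d\, n p_t})$, with the analogous bound for $N^Q_t$ by Chernoff. Since $q = O(1)$ and $T \le c \log d$ with $c$ small, $|\MT| \le q^T = d^{O(c)}$, so a union bound across types costs only a $\mathrm{poly}(d)$ factor, absorbable into $\exp(-\Omega(d))$. The main obstacle is converting matched first $d-1$ moments plus concentration into $\tvd{N^P}{N^Q} \le \exp(-\Omega(d))$; my plan is to bound each per-type contribution by polynomial approximation on the short concentration window of length $O(\sqrt{d\,n p_t})$, so that matching of the first $d-1$ moments annihilates the polynomial part while only an exponentially-in-$d$ small Chebyshev-type approximation error remains, and then aggregate via the union bound above. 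This moment-to-TV step is the crux of the argument.
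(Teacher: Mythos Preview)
Your reduction has a genuine gap that makes the whole strategy fail, not merely hard to complete.

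When you pass from $\tvd{\MD_{n,\BF_q,C,\eta,T}}{\Unif}$ to $\tvd{P}{Q}$ via data processing, you have \emph{discarded the noise} $\eta$: the channel $\tau \mapsto (\nu_{\tau_i})_i$ absorbs $\channSC_\eta$, and the type-sequence laws $P,Q$ no longer see it. But the noise is essential---the statement is false for $\eta=0$, and your upper bound $\tvd{P}{Q}=\tvd{N^P}{N^Q}$ is exactly the $\eta=0$ quantity. Concretely, take $q=2$, $T=2$, and the parity code $C=\{c:\sum_i c_i=0\}$, which has dual distance $d=n$. Column types are ``same'' or ``different,'' and the number of ``different'' columns equals the Hamming weight of $c^{(1)}+c^{(2)}\in C$, which is always even. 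Hence $N^P$ is supported on a single parity class while $N^Q\sim\Bin(n,1/2)$ is not, giving $\tvd{N^P}{N^Q}\ge \tfrac12-o(1)$, not $\exp(-\Omega(d))$. So the ``moment-to-TV'' step you flag as the crux is not just delicate---it is false at this level of generality: $(d-1)$-wise i.i.d.\ type sequences can have count distributions that are $\Omega(1)$ far in TV from the multinomial. (In this same example your Bellare--Rompel window has width $\Theta(\sqrt{d\,n p_t})=\Theta(n)$, i.e.\ the entire range, so the concentration step also gives nothing.)

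The paper's proof avoids this by never throwing away $\eta$. It fixes an arbitrary distinguisher, which by symmetry depends only on the multiset of columns in $\BF_q^T$ and hence is computed by a width-$\binom{n+q^T-1}{q^T-1}$ read-once branching program. It then Fourier-decomposes the ROBP over $(\Z/q\Z)^n$ \emph{after} applying the noise operator $T_\eta$: the low-degree part (degree $<d/2$) is killed exactly by $d$-wise uniformity of codewords (dual distance), while every high-degree term picks up a factor $(1-\eta)^{d/2}$ from the noise and is bounded uniformly via Parseval. The final bound $nT\cdot\binom{n+q^T-1}{q^T-1}\cdot(1-\eta)^{d/2}$ is then $\exp(-\Omega(d))$ once $q^T=d^{O(c)}$ with $c$ small. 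The point is that the noise and the dual-distance hypothesis do complementary work on the high- and low-frequency parts; any approach that separates them (as your data-processing step does) cannot succeed.
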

The proof of \cref{prop:stat-evidence-informal} proceeds by noting that, due to the permutation $\sigma : [n] \to [n]$ of the $n$ positions, for any distinguisher $\mathsf{Dist}$ between samples from either $\MD_{n,\BF_q,C,\eta,T}$ or $\MD_{n,\BF_q,\BF_q^n,\eta,T}$ we can write $\mathsf{Dist}$ in the following form. For each position $i \in [n]$, we may group the bits at position $i$ across the $T$ samples, and interpret the group as an element of $\BF_q^T$; the distinguisher $\mathsf{Dist}$, in turn, must depend only on the multinomial counts of the $n$ elements of $\BF_q^T$ corresponding to a codeword from either $\MD_{n,\BF_q,C,\eta,T}$ or $(\BF_q^n)^T$. 

In turn, \emph{arbitrary} algorithms which depend only on these counts turn out to be very simple: they can be written as length-$nT$, width-$\binom{n+q^T-1}{q^T-1}$ read-once branching programs, which follows from the fact that there are at most $\binom{n+q^T-1}{q^T-1}$ possibilities for what the collection of counts of the $n$ elements of $\BF_q^T$ can be. By considering a special form of the Fourier decomposition of such branching programs (\cref{lemma:forbes-kelley}) due to \cite{forbes2018pseudorandom}, we can show that the output of such programs are close under the distributions $\MD_{n,\BF_q,C,\eta,T}$ and $(\Unif(\BF_q^n)^T)$.
Very roughly, the reason is that the low-degree Fourier terms are identical due to the high dual distance of $C$, while high-degree Fourier terms are negligible due to the noise introduced by $\channSC_\eta$.
We note that our argument easily implies pseudorandomness against \emph{linear tests}, a benchmark sometimes used to judge new LPN-style cryptographic conjectures \cite{DBLP:conf/crypto/CouteauRR21,DBLP:conf/crypto/BoyleCGIKRS22,DBLP:conf/stoc/DingJK25}.
See \cref{sec:stat-evidence} for the proof.

While the setting of \cref{prop:stat-evidence-informal} only applies in the special case where $|\Sigma|$ is relatively small, as we observe in \cref{sec:sub-robustness}, the permuted codes assumption (\cref{def:permuted-codes}) with these parameter settings implies strongly-adaptive robust pseudorandom codes (with subexponential security, assuming such of \cref{def:permuted-codes}) against \emph{substitutions}, which is not known to be achievable under any other assumptions. To handle \emph{edits}, we need to rely on the permuted codes assumption for  (folded) Reed-Solomon codes with a large alphabet, which are not covered by the above result. 
% \dan{Added sentence above}
% In \cref{sec:perm-puzz-to-codes}, we show that this is indeed the case when $\Sigma = \BF_2$ and $\ME$ is the substitution channel $\channSC_\eta$ for any constant $\eta \in (0,1/2)$.

\paragraph{Cryptanalysis: the alphabet permutation is necessary.}
The permuted codes assumption states that a distribution of ``scrambled'' codewords is pseudorandom.
Recall that this scrambling involves several components: First, the codeword indices are permuted by a index permutation $\sigma \in S_{[n]}$.
Second, an \emph{alphabet permutation} $\pi_i \in S_\Sigma$ is applied to each codeword symbol.
Finally, random substitution errors are applied to this doubly permuted codeword.

A natural question is whether all of these steps are necessary for pseudorandomness.
Prior work on the permuted puzzles conjecture gave a partial answer---a ``toy conjecture'' put forth in \cite{boyle2017can} asserted that the scrambled codewords were pseudorandom \emph{even without} the random substitution errors.
However, this was shown to be false in \cite{boyle2021security,blackwell2021note}, which suggested amending the conjecture to apply random erasures---this is exactly \Cref{conj:perm-puzz}.
The analogous question could be asked about the permutations.
With \emph{no} permutations, any efficiently decodable code would break the conjecture, with the decoding algorithm serving as a distinguisher.
But could the index permutation alone be enough?

We show that the alphabet permutations are necessary for the permuted codes assumption in \Cref{thm:rs-dist}.
That is, there exists a family of codes with polynomial dual distance that can be efficiently distinguished from random when codewords are position-permuted and subjected to a constant rate of random substitutions.
This code family is simply Reed-Solomon codes.

Suppose we are given $m$ position-permuted Reed-Solomon codewords $c^{(1)}, \ldots, c^{(m)}$ with a constant rate of substitutions.
Our attack attempts to find a low-degree nonzero multivariate polynomial $f$ such that $f(c^{(1)}_i, \ldots, c^{(m)}_i) = 0$ for all $i \in [n]$.
If $m$ is small enough, there will be a significant number of $i$'s with no noise across all codewords; these corresponding evaluation points take the form $(p_1(\alpha_i), \ldots, p_m(\alpha_i))$ for low-degree polynomials $p_1, \ldots, p_m$ and some unknown $\alpha_i$.
These points can be annihilated by a far lower-degree polynomial than one would expect of random points.
Therefore, for a proper choice of degree of $f$, this interpolation task is possible for the scrambled codewords but not for random points.

This attack can also be viewed as an attack against the McEliece cryptosystem instantiated with Reed-Solomon codes.
This attack uses only ciphertexts, and does not require the public key.

\paragraph{Cryptanalysis: a quasipolynomial-time attack against a broad class of constructions.}
To reiterate the need for new approaches to PRCs, we show that all constructions following a natural blueprint are vulnerable to quasipolynomial-time attacks.
In such constructions, codewords are binary strings that include a random string $r$ and a noise-tolerant predicate $f(r)$.
That is, codewords are of the form $\sigma(r||f(r)||y)$ where $\sigma \in S_{[n]}$ is fixed across all codewords, $r \gets \{0,1\}^\ell$, and $y$ is drawn from any distribution over $\{0,1\}^{n - \ell - 1}$.\footnote{Abusing notation a bit, we mean that the $i^{\text{th}}$ bit of the codeword is the $\sigma(i)^{\text{th}}$ bit of $(x || f(x) || y)$.}
$f$ satisfies $\Pr_{r, r' \gets \channSC_\eta(r)}[f(r) = f(r')] \geq 1/2 + 1/\poly(n)$, for some constant $\eta > 0$.

In other words, codewords have a planted relation between $r$ and $f(r)$, which is hidden by the permutation.
Since $f$ is noise-tolerant, this relation still holds with significant probability when noise is added, yielding a weak decoder (that can be boosted to a full decoder).
Several existing constructions fall into this class~\cite{CG24,GM24}.

Our attack is fairly simple: Since $f$ is noise-tolerant, it must have a Fourier coefficient at degree $t = O(\log n)$ with weight $n^{-O(\log n)}$.
Therefore, there is a quasipolynomial-time distinguisher that brute-force estimates all Fourier coefficients up to degree $t$.
We prove this in~\Cref{sec:quasipoly-dist}.

Note that our new binary constructions do not fall into this class. Instead, our codewords can be viewed as containing a random string $r \in \{0,1\}^\ell$, where $\ell$ is the dual distance of the code, and a \emph{longer-output function} $F: \{0,1\}^\ell \to \{0,1\}^{n - \ell}$.
No single bit of $F$ is noise-tolerant, and decoding uses the entire output of $F$.
Furthermore, polynomial dual distance of the codes used in our constructions prevents any similar attack.

\subsection{Edit-robust pseudorandom codes from permuted codes assumption}
\label{sec:edit-consequences}
In this section we explain how the permuted Reed-Solomon conjecture (\cref{conj:permuted-rs}), which is a special case of the permuted codes assumption (\cref{def:permuted-codes}), yields an edit-robust pseudorandom code. 

Fix a prime power $q$ denoting the alphabet size (so that $\Sigma = \BF_q$), and let $n = q-1$; we will consider the Reed-Solomon code $\RS_{\F_q,n,k}$ with $k = n^{1/5}$.\footnote{Any $k$ which grows polynomially with $n$ suffices for our applications.} We construct a pseudorandom code $\PRC$ as follows (see \cref{alg:prc-edit-binary} for the full description).

\paragraph{Key Generation.} The secret key $\sk$ consists of the permutations $\sigma \gets S_{[n]}, \pi_1, \dots, \pi_n \gets S_{\BF_q}$ as in the definition of $\MD_{n,\BF_q,C,\eta,T}$ (see \cref{sec:comp-asm}).

\paragraph{Encoding.} To produce a PRC codeword, we first generate one sample (corresponding to one index $t \in [T]$) from the distribution $\MD_{n, \BF_q, C, \eta, T}$, which may be expressed as a vector $z \in \BF_q^n$. % $z$ may be equivalently expressed as a list of tuples $(i_1, z_{i_1}), \dots, (i_m, z_{i_n})$.
% where $m = n - \eta n$ and $i_1, \dots, i_m$ are the non-erased indices of $c$. 
Note that the data of $z$ can be be equivalently expressed as a list of tuples $(1, z_1), \ldots, (n, z_n)$. 
Roughly speaking, to output a PRC codeword, we sample $m$ indices $i_j \sim \Unif([n])$ and write the corresponding tuples $(i_j, z_{i_j})$ in binary, yielding a string in $\{0,1\}^{\log(nq)}$ for $j \in [m]$, and output their concatenation. (Technically, we need to take some additional care because if $i_j$ is sampled twice it will be followed by (binary) $z_{i_j}$ both times, which is unlikely for a random string. The necessary modifications are straightforward.)

\paragraph{Decoding.} The technical bulk of the algorithm (and the proof that it yields an edit-robust PRC) lies in the decoding algorithm. A key ingredient is the \emph{list recoverability} of Reed-Solomon codes, which generalizes the somewhat better-known notion of \emph{list decoding}. In particular, let us consider the Reed-Solomon code $C = \RS_{\BF_q, n,k}$ and fix parameters $\ell \in \BN, \zeta \in [0,1]$ satisfying $\zeta n \geq  \sqrt{k\ell n}$.  Then the code $C$ is efficiently $(\ell, \zeta)$-\emph{list recoverable}, which means that there is a $\poly(n)$-time algorithm which takes as input $n$ sets $\ML_1, \ldots, \ML_n \subset \BF_q$ satisfying $|\ML_i| \leq \ell$ for each $i \in [n]$, and outputs a list consisting of all codewords $c' \in C$ satisfying $c_i' \in \ML_i$ for at least $\zeta n$ values of $i \in [n]$. (In particular, the number of such codewords is at most $\poly(n)$; see \cref{thm:rs-listrecov} for a formal statement.)

Recall from the encoding algorithm above and the definition of $\MD_{n,\BF_q, C, \eta,T}$ (\cref{def:permuted-codes}) that a PRC codeword $y$ consists of the binary encoding of a collection of pairs $(i_j, \pi_{i_j}(c_{\sigma(i_j)}) + e_{i_j}) \in [n] \times \BF_q$ where $i_j \in [n]$ is an index, $\pi_{i_j}(c_{\sigma(i_j)})$ is a permuted symbol from a noisy Reed-Solomon codeword $c$, and $e_{i_j}$ is a random substitution error (which in particular is $0$ with probability $1-\eta$). 
The crucial insight is as follows: for any channel $\Eadv$ which corrupts its input $y$ by at most $\epedit \cdot |y|$ edits (for a sufficiently small constant $\epedit$) to some value $y'$, many pairs $(i_j, \pi_{i_j}(c_{\sigma(i_j)})+ e_{i_j})$ must remain ``relatively intact'' somewhere inside $y'$.
That is, many pairs will contain at most $O(\log(nq) \cdot \epedit)$ edits.
% \dan{edited above sentence, left original in comment}
%The crucial insight is as follows: for any channel $\Eadv$ which corrupts its input $y$ by at most $\epedit \cdot |y|$ edits (for a sufficiently small constant $\epedit$), the portion of $y$ corresponding to many pairs $(i_j, \pi_{i_j}(c_{\sigma(i_j)}))$ must also be corrupted by $\Eadv$ via a relatively small number of edits (in particular, at most $O(\log(nq) \cdot \epedit)$ edits).
Moreover, for any string $w \in \{0,1\}^{\log(nq)}$, the number of strings $w'$ of edit distance at most $O(\log(nq) \cdot \epedit)$ from $w$ is at most $(nq)^{O(\epedit \log(1/\epedit))}$. Thus, given an input $y'$ to the decoding algorithm, we can ``brute force'' over all strings close in edit distance to each $O(\log(nq))$-length contiguous substring of $y$, and construct corresponding lists as input to a list recovery algorithm for $C$. Under an appropriate choice of parameters, the list recovery algorithm will output a nonempty set if in fact $y' = \Eadv(y)$ for some output $y$ of the PRC encoding algorithm. 

In more detail, the decoding algorithm acts as follows, given some input string $y'$. We first initialize lists $\ML_1, \ldots, \ML_n \gets \emptyset$ (which will ultimately be the input to a list recovery procedure for $C$). We inspect all contiguous substrings of length  $O(\log(nq))$ inside $y'$, and for each, compute all strings within edit distance $O(\log(nq) \cdot \epedit)$. In turn, for each of these strings, we interpret them as the binary encoding of a pair $(i, \pi_{i}(a))$ for some $a \in \BF_q$ and $i \in [n]$. We then add $\pi_i^{-1}(a)$ to the list $\ML_{\sigma(i)}$ (using our knowledge of $\pi_i, \sigma$ as encoded by the secret key $\sk$). Finally, we run the list recovery algorithm for $C$ on input $\ML_1, \ldots, \ML_n$. If the output list contains any codeword $c' \in C$, we output $\True$; otherwise, we output $\False$. See \cref{alg:prc-edit-binary} for the full description.

\paragraph{Desired properties of the PRC.} The robustness of the decoding algorithm follows from the above discussion: if $y' = \Eadv(y)$ for some PRC codeword $y$, then for many indices $i \in [n]$, the list $\ML_i$ will contain the correct symbol $c^*_i$ from the Reed-Solomon codeword $c^*$ corresponding to $y$ (since we will have ``guessed" it via the appropriate search over all $O(\log(nq) \cdot \epedit)$-edit close strings). Moreover, using our bound of $(nq)^{O(\epedit \log(1/\epedit))}$ on the size of an $O(\log(nq) \cdot \epedit)$-edit distance ball around any string in $\{0,1\}^{\log(nq)}$, we may bound the size of the lists $\ML_i$ by some parameter $\ell$ satisfying $\sqrt{k\ell n} = o(n)$. 
Therefore, by the list recoverability of Reed-Solomon codes, the list recovery algorithm will output $c^*$. See \cref{lem:list-preserve,lem:edit-robustness} for the formal arguments. 

Moreover, soundness of the PRC follows from a simple counting argument which, roughly speaking, computes the number of strings $y''$ which have enough contiguous substrings of length $O(\log(nq))$ which are $O(\log(nq) \cdot \epedit)$-edit close to some contiguous substring of the input $y'$ to the decoding algorithm; see \cref{lem:soundness}. % \noah{maybe mention one-time pad?}

Finally, the undetectability of the PRC construction follows directly from \cref{conj:permuted-rs}, since the encoding algorithm above may be interpreted as outputting a (randomized) function $F$ of one sample from $\MD_{n,\BF_q,C,\eta,T}$. Moreover, if we instead pass one sample from $\MD_{n,\BF_q,\BF_q^n,\eta,T}$ to this randomized function $F$, the output distribution may be verified to be the uniform distribution. Thus, any algorithm which distinguishes between PRC codewords and uniform strings may be used to distinguish between samples from $\MD_{n,\BF_q,C,\eta,T}$ and $\Unif((\BF_q^n)^T)$; see \cref{lem:edit-undetectability}. 

\subsection{Edit-robust watermarking schemes using folded Reed-Solomon codes}
\label{sec:intro-prc-watermarking}
Next, we discuss how the ideas from the previous section can be extended to obtain edit-robust watermarking schemes for autoregressive language models. As we will see, this requires PRCs that are robust to a very large fraction of edits, close to $\frac 1 2$. Above we constructed PRCs that are only robust to some small constant fraction of edits. However, we'll see we can extend the ideas to handle a large fraction of edits by relying on folded Reed-Solomon codes.
% \dan{added the above overview.} % rate  

We begin by describing a generic transformation that converts any pseudorandom code into a watermarking scheme, due to \cite{CG24}. In particular, consider a language model $\Model$ (see \cref{sec:watermarking-intro}) and suppose we are given a pseudorandom code $\PRC = (\KeyGen, \Enc, \Dec)$ with some block length $n$. For simplicity, we assume that the alphabet $\Sigma$ for both $\Model$ and $\PRC$ is $\Sigma = \{0,1\}$, which is essentially without loss of generality (see Section 7.1 of \cite{CG24}). The $\Setup(1^\lambda)$ function of the watermarking scheme simply calls the $\KeyGen$ procedure for $\PRC$, which outputs some $\sk$. To implement the watermarking function $\Wat(\sk, \prompt)$, we first generate a PRC codeword $x \gets \Enc(\sk)$, $x \in \Sigma^n$, and then generate a sequence of tokens $\tok_{1:n}$ by attempting to generate each bit $\tok_i$ so as to maximize its probability of being equal to $x_i$: in particular, if we have that $\Model(\tok_i = 1 \mid \prompt, \tok_{1:i-1}) = p_i$, then we will draw $\tok_i$ from the distribution $\Ber(p_i - (-1)^{x_i}\cdot \min\{ p_i, 1-p_i \})$. This procedure is then repeated for multiple blocks of length $n$ to reach some desired output length $\ell$ (see \cref{alg:watermarking-edit}); for simplicity, we assume that $\ell = n$ in this discussion. It is straightforward to see that the resulting distribution of $\tok_{1:n}$ is identical to that produced by $\Model$ if $x$ is in fact uniformly random; thus, if $x$ is computationally indistinguishable from uniform, the output of $\Wat$ is computationally indistinguishable from that of $\Model$ (i.e., we have undetectability). 

Finally, the $\Detect(\sk, \tok)$ function of the watermarking scheme simply calls the $\Dec(\sk, \tok)$ function for $\PRC$, and outputs the same response. We begin by reviewing the argument in \cite{CG24} which shows that the resulting watermarking scheme is robust to a constant fraction of \emph{substitutions} (assuming that $\PRC$ has such substitution-robustness). To do so, we need to assume that \emph{entropy} of the output of $\Model$ is sufficiently high, i.e., that each token has at least entropy $\alpha$, on average. If this is the case, then the Hamming distance between the PRC codeword $x$ and the output $\tok$ of $\Wat$ is bounded above by $\left( \frac 12 - \Omega(\alpha^2) \right) \cdot n$ (see Lemma 21 of \cite{CG24}, reproduced as \cref{lem:ham-hemp}). Now consider  any substitution channel $\Eadv$ which corrupts at most a fraction $c_0 \cdot \alpha^2$ of its input, where $c_0$ denotes a sufficiently small universal constant, and let $\tok' = \Eadv(\tok)$. Then the Hamming distance between $x$ and $\tok'$ is bounded above by $ \left( \frac 12 - \Omega(\alpha^2) \right) \cdot n$. Thus, if $\PRC$ is robust to any \emph{substitution channel} which corrupts at most a fraction $\left( \frac 12 - \Omega(\alpha^2) \right)$ of its input, then the watermarking scheme is robust to $\Eadv$.

\paragraph{Our contribution: edit robustness via folded Reed-Solomon codes}
Can we extend the above argument to handle the case where $\Eadv$ may introduce a constant fraction of \emph{edits} (under the assumption that $\PRC$ has such robustness to edits)? While the above template naturally extends to handle edits, we need to address the following obstacle. The PRC watermarking scheme we described in \cref{sec:edit-consequences} is robust to any channel which introduces a $\epedit$-fraction of edits, \emph{for a sufficiently small constant $\epedit$}. However, in order to apply the above recipe for converting a PRC to a watermarking scheme, we need the PRC to be  robust to $\left( \frac 12 - \Omega(\alpha^2) \right)$-fraction of edits, for arbitrarily small constant $\alpha$. This would be a significantly stronger result than what we showed above! 

A closer inspection of the argument reveals that the issue is not merely with our analysis, i.e., using tighter bounds on the size of edit distance balls is insufficient to close this gap. In particular, for a given tuple $(i_j, \pi_{i_j}(c_{\sigma(i_j)}) e_{i_j}) \in [n] \times [q]$ corresponding to one permuted symbol of a codeword $c \in \BF_q^n$, the number of strings which could result after a fraction  $\frac 12 - \Omega(\alpha^2)$ of substitutions is at most $2^{\log(nq) \cdot (1 - \Omega(\alpha^4))} = (nq)^{1 - \Omega(\alpha^4)}$. Since, for the Reed-Solomon code, we must have $q \leq n$, this quantity approaches $n^2$ as $\alpha \to 0$, and in particular is $\omega(n)$. Aggregating over all $j \in [m] = \Theta(n)$, we see that the sum of the list sizes $\sum_{i=1}^n |\ML_i|$ for the list recovery procedure must be $n \cdot (nq)^{1 - \Omega(\alpha^4)} = n \cdot \omega(n) = \omega(n^2)$, which means the typical list $\ML_i$ will be of size $\omega(n)$; this is too large for the list recovery guarantee for Reed-Solomon codes to apply (see \cref{thm:rs-listrecov}).  

The core issue here is that we are ``wasting'' bits encoding the indices $i_j$ in the PRC encoding algorithm: if there were a way to avoid having to write them out explicitly, then the sum of the list sizes would be only $n \cdot n^{1 - \Omega(\alpha^4)}$, which means the typical list size would be $n^{1 - \Omega(\alpha^4)}= o(n)$, which would suffice for our needs. Unfortunately, our ability to write out these indices is crucial for edit robustness: if we had simply written the codeword symbols $c_{\sigma(i_j)}$ in order of e.g.~increasing $i_j$, an adversary could introduce insertions or deletions to shift the position of these symbols by $\Theta(n)$ positions, and there is not a clear way to recover from this. 

Instead, the main idea to get around this obstacle is encode multiple symbols in $\BF_q$ using a single index $i_j$. Equivalently, we replace the Reed-Solomon code $\RS_{\F_q,n,k}$ with a \emph{folded Reed-Solomon (FRS) code}, which is the code over alphabet $\BF_q^m$ obtained by grouping together collections of $m$ symbols from codewords of $\RS_{\F_q,n,k}$ (see \cref{def:frs}); here $m$ should be interpreted as a constant, and will be taken to be $\poly(1/\alpha)$ in the context of the preceding discussion. As a FRS code is not linear over its alphabet\footnote{In particular, if we identify $\BF_q^m \simeq \BF_{q^m}$, then the folded Reed-Solomon code is not a linear code over $\BF_{q^m}$. We remark that the FRS code is linear over $\BF_q$.} we cannot rely directly on the permuted codes assumption (\cref{def:permuted-codes}) to ensure undetectabilility of the resulting PRC. Nevertheless, the fact that the FRS code is linear over $\BF_q$ (and has dual distance decreased by a factor of only $m = O(1)$ compared to that of the corresponding Reed-Solomon code) leads us to the natural generalization of \cref{conj:permuted-rs}, in \cref{conj:permuted-frs} (in particular, \cref{conj:permuted-frs} is identical to \cref{conj:permuted-rs} except that the Reed-Solomon code is replaced with an FRS code with appropriate parameters).

Using a similar approach to our PRC construction with Reed-Solomon codes, we can show that a pseudorandom code based off of \cref{conj:permuted-frs} has robustness to any channel which introduces a $\left( \frac 12 - \Omega(\alpha^2) \right)$-fraction of substitutions and a $c_1 \cdot \alpha^2$ fraction of edits, for some sufficiently small constant $c_1 > 0$. This is sufficient to obtain a watermarking scheme for language models with per-token entropy at least $\alpha$: for a PRC codeword $x$ used in the watermarking procedure $\Wat$, at most a fraction $\left( \frac 12 - \Omega(\alpha^2) \right)$ of substitutions are introduced in $x$ to obtain the output $\tok$ of $\Wat$, and an adversary may introduce an additional $c_1 \cdot \alpha^2$ fraction of \emph{edits} before the string is passed to the detection algorithm $\Detect$. See \cref{lem:hamedit-robustness} and \cref{lem:wat-edit-robustness} for the formal arguments. 
% Thus, if $\PRC$ is robust to any \emph{substitution channel} which corrupts at most a fraction $\left( \frac 12 - \Omega(\alpha^2) \right)$ of its input, , 

\section{The permuted codes assumption}
\label{sec:perm-puzz-to-codes}
In this section, we discuss the permuted codes assumption (\cref{def:permuted-codes}) and perform some cryptanalysis: first, in \cref{subsection:pp-pc-comp}, we show that the permuted codes assumption is implied by the permuted puzzles assumption of \cite{boyle2021security,blackwell2021note} (\cref{conj:perm-puzz}). In \cref{sec:stat-evidence}, we give some statistical evidence for \cref{def:permuted-codes} when the alphabet size is constant by showing that logarithmically many permuted codewords are statistically uniform. Finally, in \cref{sec:alphabet-permutation}, we show that the permuted codes assumption \emph{fails} if the alphabet permutation is dropped. 

\subsection{Comparison to permuted puzzles} \label{subsection:pp-pc-comp}
We now show that the permuted puzzles assumption with $m = \Theta(n)$ implies the permuted codes assumption, for any constant $\eta > 1 - \frac{m}{n}$.
% \sam{TODO right now we just prove one direction!}\noah{TODO update to substituion channel (also the other direction no longer necessarily holds, either we just don't prove it or we have to state a version of permuted codes with erasure channel)}
% \sam{Yeah I don't think we need to prove it now! Like maybe it would be good but it's not necessary IMO. We should def at least leave a remark though.}

\begin{proof}[Proof of \cref{thm:puzzles-from-codes}]
    Let $C$ be as specified in \Cref{def:permuted-codes} and \Cref{conj:perm-puzz}, and let $m(\secpar) \geq \delta n(\secpar)$ for some constant $\delta \in (0,1)$.
    Let $\eta$ be a constant such that $\delta > 1 - \eta$.
    
    We present an efficient algorithm $\adv$ that takes in samples from $\Dpp_{n, \F_q, C, T, m}$ or $\Dpp_{n, \F_q, \F_q^n, T, m}$, and converts them to samples from $\Dpc_{n, \F_q, C, \eta, T}$ or $\Dpc_{n, \F_q, \F_q^n, \eta, T}$ respectively.

    $\adv$ is given $T \geq \secpar (nq)^2$ samples 
    \[
    \{(\alpha^{(i)}_1, \ldots, \alpha^{(i)}_m)\}_{i \in [T]},
    \]
    where each $\alpha^{(i)}_j \in [n] \times \F_q$.
    $\adv$ initializes $n$ empty sets $S_1, \ldots, S_n$.

    For each $S_\ell$, $\adv$ does the following:
    \begin{itemize}
        \item For all $\alpha \in [n] \times \F_q$:
        \begin{itemize}
            \item If $\alpha$ already appears in some set, continue.
            \item If $S_\ell$ is empty, add $\alpha$ to $S_\ell$.
            \item If there exists $\alpha' \in S_\ell$ such that $\alpha$ and $\alpha'$ do not appear together in any of the $T$ samples, add $\alpha$ to $S_\ell$.
        \end{itemize}
    \end{itemize}

We'll argue that each $S_\ell$ is exactly $\{\alpha \in [n] \times \F_q : \exists j \in \F_q \text{ s.t. } \pi(i, j) = \alpha \}$ for some $i \in [n]$, where $\pi$ is the random permutation in \Cref{conj:perm-puzz}.

We'll first show that all symbols added to $S_\ell$ correspond to the same index.
Let $S_\ell$ contain some $\alpha \in [n] \times \F_q$, where $\alpha = \pi(i, j)$.
Let $\alpha' = \pi(i', j')$ for $i' \neq i$.
For a uniform $c \gets C$, since $C$ has dual distance at least 2, the marginal distribution $(c_{i'}, c_i)$ is uniform and $c_{i'} = j'$ and $c_i = j$ with probability  $1/q^2$.
The same holds for a uniform $c$.
(Very coarsely), with probability at least $1/n^2$, $i$ and $i'$ are both included in the subsample in step (b).
Therefore, for each sample, $\alpha$ and $\alpha'$ appear together with probability at least $1/(nq)^2$.
The probability that $\alpha$ and $\alpha'$ do not appear together at any of the $T$ samples is at most $(1 - 1/(nq)^2)^{\secpar(nq)^2}$, which is negligible.

On the other hand, if two symbols correspond to the same index, they will not appear together in any samples and they are indeed added to the same $S_\ell$.
Therefore, each $S_\ell$ has size exactly $q$.

$\adv$ completes its conversion by drawing a random permutation $\pi_\adv \gets S_{[n]}$, and random bijections $\pi_1 : S_1 \to \F_q, \ldots, \pi_n : S_n \to \F_q$.
For $\alpha \in [n] \times \F_q$, let $\phi(\alpha) := \pi_\adv(\ell)$, where $\ell$ is the index of the set $S_\ell$ containing $\alpha$.

$\adv$ then constructs its $i^{\text{th}}$ sample for $i \in [T]$ as follows:
\begin{enumerate}
    \item Initialize a random $\hat{c} \gets \F_q^n$.
    \item Sample $k \gets \text{Bin}(n, 1-\eta)$. Let $I := \{\ell \in [n]: \ \exists j \in [n] \text{ s.t. } \ell = \phi(\alpha_j^{(i)})\}$. If $\abs{I} < k$, output ``fail.''
    \item Sample a random subset $I' \subseteq I$, where $\abs{I'} = k$.
    \item For all $j \in [m]$, let $\ell_j = \phi(\alpha^{(i)}_j)$. If $\ell_j \in I'$, let $\hat{c}_{\ell_j} \gets \pi_{\ell_j}(\alpha^{(i)}_j)$.
    \item Output $\hat{c}$.
\end{enumerate}

The bulk of this work is in converting the $m$ indices sampled without replacement in the permuted puzzles distribution, into the indices left error-free by the substitution channel in the permuted codes distribution.
In each iteration, $I$ is the set of (permuted according to $\pi_\adv$) distinct indices appearing in the sampled codeword.
$k$ is the number of error-free indices we want to have in the permuted-codes sample we are constructing.
$I'$ is the set of indices left error-free by the substitution channel; we chose $m$ so that $I$ will be larger than $I'$ with overwhelming probability.

We first show that $\adv$ outputs ``fail'' with negligible probability.
The expectation of $k$ is $(1-\eta)n$, where $(1-\eta)$ is a constant less than $\delta$.
There is some constant $\delta'$ such that $(1-\eta) < \delta' < \delta$ and 
by standard binomial tail bounds, $k \leq \delta' n$ with overwhelming probability in $n$ (and therefore in $\secpar$).
Now, we need to show that $I$ has size at least $\delta' n$.
$I$ is the set of distinct indices in $[n]$ that were included the $m$ samples that were chosen with replacement.
Since $m = \delta n$, it follows from a Chernoff bound that $\abs{I} \geq \delta' n$ with overwhelming probability.

Provided that $\adv$ did not output ``fail,'' we argue that this algorithm indeed converts samples as desired.
If $\adv$ was given samples from $\Dpp_{n, \F_q, C, T, m}$, this distribution is exactly $\Dpc_{n, \F_q, C, \eta, T}$.
That is, $\adv$ has learned which symbols correspond to the same index $i$ and has applied a random permutation $\pi_i$ to those symbols.
It has then permuted the indices under a random permutation $\pi_\adv$.
Similarly, if $\adv$ was given samples from $\Dpp_{n, \F_q, \F_q^n, T, m}$, its output is distributed according to $\Dpc_{n, \F_q, \F_q^n, \eta, T}$.
\end{proof}

\subsection{Statistical evidence for small alphabets}
\label{sec:stat-evidence}
In this section we show that the permuted codes conjecture with any $\Omega(1)$ rate of random substitution errors holds for any code $C \subseteq \F_q^n$ with high dual distance and constant alphabet size $q$.
This holds even without the alphabet permutations $\pi_1, \dots, \pi_n$.
Our argument is based on a decomposition of Forbes and Kelley \cite{forbes2018pseudorandom}; we follow \cite[Section 5.4]{hatami2023theory} for our proof.

A read-once branching program (ROBP) is just a graph on vertices $V_0 \cup \dots \cup V_n$ where each node in $V_i$ has two edges (corresponding to 0 and 1) going to $V_{i+1}$.
The program is evaluated on $x$ by starting at a designated start vertex in $V_0$, and following the edges corresponding to the bits in $x$.
The program outputs $f(x) = 1$ if it ends in an ``accept'' vertex; otherwise it outputs $f(x) = 0$.

For $v \in V_i$, we write $f_{v \to}(x)$ to denote the program evaluated on $x_{i+1}, \dots, x_n$, starting at vertex $v$.
We write $f_{\to v}(x)$ to denote the program evaluated on $x_1, \dots, x_i$, where the only accept state in $V_i$ is $v$.

Our main tool is the following convenient decomposition of ROBPs in the Fourier basis.
It was originally stated in the work of \cite{forbes2018pseudorandom} for the case $q=2$, but the same statement and proof easily generalize to arbitrary $q$.
In the following, let $\norm{z}$ denote the number of non-zero coordinates in $z \in (\Z/q\Z)^n$.

\begin{lemma}[\cite{forbes2018pseudorandom}, adapted from \cite{hatami2023theory}] \label{lemma:forbes-kelley}
Let $\chi_z$ be the Fourier characters of $(\Z/q\Z)^n$, defined by $\chi_z(x) = e^{2 \pi i \langle x, z \rangle / q}$.
Let $f : (\Z/q\Z)^n \to \{0,1\}$ be a length-$n$ width-$w$ standard-order ROBP with layers $V_0,V_1,\ldots,V_n$.
Then for any $k \in [n]$ we have
\[
    f(x)=L_k(x)+H_k(x),
\]
where
\[
    L_k(x)=\sum_{z \in (\Z/q\Z)^n:\,\norm{z}<k}\widehat{f}(z)\,\chi_z(x),
\]
and
\[
    H_k(x)=\sum_{i=1}^{n}\sum_{v\in V_i} H_{k,v}(x)\,f_{v\to}(x),
\]
where
\[
    H_{k,v}(x)=\sum_{\substack{z \in (\Z/q\Z)^n, \norm{z} = k \\ i \in \supp(z) \subseteq [i]}}\widehat{f_{\to v}}(z)\,\chi_z(x).
\]
\end{lemma}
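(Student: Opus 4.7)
\textbf{Proof proposal for \cref{lemma:forbes-kelley}.}

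The plan is to start from the full Fourier expansion of $f$ over $(\Z/q\Z)^n$ and reorganize the high-weight terms using the layered structure of the ROBP. Write $f(x) = \sum_z \widehat{f}(z)\chi_z(x) = L(x) + \sum_{z:\,\|z\|\geq k} \widehat{f}(z)\chi_z(x)$, so everything reduces to showing that the high-weight sum equals $H(x)$.

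For each $z$ with $\|z\|\geq k$, let $i = i(z)$ be the position of the $k$-th nonzero coordinate of $z$ (reading left to right). Split $z = z^{(1)} + z^{(2)}$ where $z^{(1)}$ agrees with $z$ on $[i]$ and is zero on $\{i{+}1,\ldots,n\}$, and $z^{(2)}$ agrees with $z$ on $\{i{+}1,\ldots,n\}$ and is zero on $[i]$. Then $z^{(1)}$ satisfies $\|z^{(1)}\|=k$ and $i\in\supp(z^{(1)})\subseteq[i]$, while $\supp(z^{(2)})\subseteq\{i{+}1,\ldots,n\}$, and $\chi_z(x)=\chi_{z^{(1)}}(x)\chi_{z^{(2)}}(x)$. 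So every high-weight $z$ is in bijection with a triple $(i,z^{(1)},z^{(2)})$ of the form appearing inside $H$.

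Next I would use the ROBP decomposition at layer $i$: for every $x$, $f(x) = \sum_{v\in V_i} f_{\to v}(x)\,f_{v\to}(x)$, where $f_{\to v}$ depends only on $x_{1:i}$ and $f_{v\to}$ depends only on $x_{i+1:n}$. Because $x_{1:i}$ and $x_{i+1:n}$ are independent under uniform $x$, this factorization transfers to Fourier coefficients: for any $z$ split as $z^{(1)}$ (supported on $[i]$) and $z^{(2)}$ (supported on $\{i{+}1,\ldots,n\}$),
\[
\widehat{f}(z) \;=\; \sum_{v\in V_i}\widehat{f_{\to v}}(z^{(1)})\,\widehat{f_{v\to}}(z^{(2)}),
\]
where $\widehat{f_{\to v}}(z^{(1)})$ vanishes unless $\supp(z^{(1)})\subseteq[i]$ and similarly for $\widehat{f_{v\to}}$. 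Plug this into the high-weight sum and swap the order of summation to group by $(i,v)$:
\[
\sum_{z:\,\|z\|\geq k}\widehat{f}(z)\chi_z(x) \;=\; \sum_{i=1}^n \sum_{v\in V_i} \Big(\sum_{z^{(1)}}\widehat{f_{\to v}}(z^{(1)})\chi_{z^{(1)}}(x)\Big)\Big(\sum_{z^{(2)}}\widehat{f_{v\to}}(z^{(2)})\chi_{z^{(2)}}(x)\Big),
\]
where $z^{(1)}$ ranges over vectors with $\|z^{(1)}\|=k$ and $i\in\supp(z^{(1)})\subseteq[i]$, and $z^{(2)}$ ranges over all vectors supported on $\{i{+}1,\ldots,n\}$.

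The inner $z^{(2)}$-sum is exactly the Fourier inversion of $f_{v\to}$ restricted to its (only nonzero) coordinates, which evaluates to $f_{v\to}(x)$; the inner $z^{(1)}$-sum is by definition $H_v(x)$. Hence the total equals $\sum_{i=1}^n\sum_{v\in V_i}H_v(x)f_{v\to}(x)=H(x)$, as required. The main subtlety I anticipate is verifying that the map $z \mapsto (i(z), z^{(1)}, z^{(2)})$ really is a bijection onto the index set appearing in $H$ (in particular that the constraint $i\in\supp(z^{(1)})$ rules out double-counting), and that the factorization of $\widehat{f}(z)$ across layer $i$ cleanly handles the case where $z^{(1)}$ or $z^{(2)}$ is zero; both are routine but need to be stated precisely.
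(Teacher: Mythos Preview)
Your proposal is correct and follows exactly the approach the paper intends: the paper simply cites \cite{hatami2023theory} and remarks that $i(z)$ should be the $k$th-smallest index in $\supp(z)$, which is precisely your splitting point. Your sketch of the bijection $z\leftrightarrow(i(z),z^{(1)},z^{(2)})$, the layered factorization $\widehat{f}(z)=\sum_{v\in V_i}\widehat{f_{\to v}}(z^{(1)})\widehat{f_{v\to}}(z^{(2)})$, and the Fourier inversion recovering $f_{v\to}(x)$ are the standard Forbes--Kelley steps and go through without issue over $(\Z/q\Z)^n$.
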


\begin{proof}
The proof is identical to that found in \cite{hatami2023theory}.
Note that $i(z)$ defined there should again be the $k$th-smallest index appearing in $\supp(z)$.
\end{proof}

For any $\eta \in [0,1]$, let $T_{\eta}$ be the noise operator defined by $(T_{\eta} f)(x) = f(\channSC_{\eta}(x))$.
That is,
\[
    (T_{\eta} f)(x) = \E_{y \sim N_\eta(x)}[f(y)]
\]
where the distribution of $y \sim N_\eta(x)$ is as follows: for each $i \in [n]$,
\[
    y_i =
    \begin{cases}
        x_i & \text{with probability $1-\eta$, and}\\
        \text{uniform from $[q]$} & \text{with probability $\eta$}.
    \end{cases}
\]

\begin{theorem} \label{theorem:fool-robps}
    Suppose that $\MD$ is a $2k$-wise uniform distribution over $(\Z/q\Z)^n$.
    If $x \sim \MD$ and $y \from \channSC_{\eta}(x)$, then $y$ is $n w \cdot (1-\eta)^k$-pseudorandom to any length-$n$ width-$w$ ROBP $f$, i.e.,
    \[
        \abs{\E_{x \sim \MD} (T_{\eta} f)(x) - \E_{x \from (\Z/q\Z)^n} f(x)} \le nw (1-\eta)^k.
    \]
\end{theorem}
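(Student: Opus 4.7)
\textbf{Proof plan for \cref{theorem:fool-robps}.} The strategy is to apply the Forbes--Kelley decomposition of \cref{lemma:forbes-kelley} to split $f = L + H$, push the noise operator through, and bound each part separately using the $2k$-wise uniformity of $\MD$. A standard Fourier identity gives $T_\eta \chi_z = (1-\eta)^{\|z\|} \chi_z$, so the noise operator simply rescales Fourier coefficients. Applying this to the decomposition,
\[
(T_\eta L)(x) = \sum_{\|z\| < k} \widehat{f}(z)(1-\eta)^{\|z\|} \chi_z(x), \qquad (T_\eta H_v)(x) = (1-\eta)^k \sum_{\substack{\|z\| = k \\ i \in \supp(z) \subseteq [i]}} \widehat{f_{\to v}}(z)\chi_z(x).
\]

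For the low-degree part $L$, I would use that $\MD$ being $2k$-wise uniform implies $\E_\MD[\chi_z] = 0$ for every nonzero $z$ with $\|z\| \leq 2k$, and in particular for every nonzero $z$ appearing in $L$. Thus $\E_\MD[T_\eta L] = \widehat{f}(0) = \E_{x \sim \mathrm{unif}}[f(x)]$, so $L$ contributes zero to the distinguishing advantage. It remains to bound $|\E_\MD[T_\eta H]|$.

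For the high-degree part, the key observation is that $H_v$ depends only on $x_{1:i}$ while $f_{v \to}$ depends only on $x_{i+1:n}$; since $\channSC_\eta$ acts coordinatewise, $T_\eta (H_v \cdot f_{v\to}) = (T_\eta H_v)(T_\eta f_{v \to})$. I would then apply Cauchy--Schwarz under $\MD$:
\[
\bigl|\E_\MD[T_\eta H_v \cdot T_\eta f_{v\to}]\bigr| \leq \sqrt{\E_\MD[|T_\eta H_v|^2]} \cdot \sqrt{\E_\MD[|T_\eta f_{v\to}|^2]},
\]
and bound the second factor by $1$ since $T_\eta f_{v\to} \in [0,1]$. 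For the first factor, expanding $|T_\eta H_v|^2$ as $T_\eta H_v \cdot \overline{T_\eta H_v}$ yields a sum of characters $\chi_{z-z'}$ with $\|z\|=\|z'\|=k$, so $\|z-z'\| \leq 2k$; by $2k$-wise uniformity, $\E_\MD[\chi_{z-z'}] = \mathbb{1}[z=z']$, collapsing the double sum to $(1-\eta)^{2k}\sum_z |\widehat{f_{\to v}}(z)|^2$. Parseval and the fact that $f_{\to v}$ takes values in $\{0,1\}$ give $\sum_z |\widehat{f_{\to v}}(z)|^2 \leq 1$, hence $\E_\MD[|T_\eta H_v|^2] \leq (1-\eta)^{2k}$ and so $|\E_\MD[T_\eta H_v \cdot T_\eta f_{v\to}]| \leq (1-\eta)^k$.

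Summing over the at most $nw$ vertices $v \in V_1 \cup \cdots \cup V_n$ yields $|\E_\MD[T_\eta H]| \leq nw(1-\eta)^k$, which combined with the identity $\E_\MD[T_\eta L] = \E_\mathrm{unif}[f]$ gives the claimed bound. The main subtlety, which I expect to be the only nontrivial step, is the correct bookkeeping in the Cauchy--Schwarz argument to exploit $2k$-wise uniformity (rather than just $k$-wise uniformity): one needs the products $\chi_z \overline{\chi_{z'}}$ appearing in $|T_\eta H_v|^2$, not merely the characters themselves, to be orthogonal under $\MD$, which is precisely why the theorem requires $2k$-wise uniformity rather than $k$-wise.
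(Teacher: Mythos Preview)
Your proposal is correct and follows essentially the same argument as the paper: apply the Forbes--Kelley decomposition, kill the low-degree part using $2k$-wise uniformity, and control each $H_v$ term via a second-moment computation that again exploits $2k$-wise uniformity together with Parseval. The only cosmetic difference is that the paper first uses $|T_\eta f_{v\to}| \le 1$ pointwise and then bounds $\E_\MD|{\cdot}|$ by $\sqrt{\E_\MD|{\cdot}|^2}$, whereas you apply Cauchy--Schwarz to the product directly; the resulting bounds are identical.
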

\begin{proof}
    Applying \Cref{lemma:forbes-kelley},
    \begin{align*}
        & \abs{\E_{x \sim \MD}[(T_{\eta} f)(x)] - \E_{x \from (\Z/q\Z)^n}[f(x)]} \\
        &= \abs{\E_{x \sim \MD}[(T_{\eta} L_k)(x) + (T_{\eta} H_k)(x)] - \hat{f}(\emptyset)} \\
        &= \abs{\E_{x \sim \MD} (T_{\eta} H_k)(x)} \\
        &= (1-\eta)^k \cdot \abs{\E_{x \sim \MD} \sum_{i \in [n]} \sum_{v \in V_i} (T_{\eta} f_{v \to})(x) \sum_{\substack{z \in (\Z/q\Z)^n, \norm{z} = k \\ i \in \supp(z) \subseteq [i]}} \hat{f}_{\to v}(z) \chi_z(x)} \\
        &\le (1-\eta)^k n w \cdot \max_{i, v} \E_{x \sim \MD} \abs{(T_{\eta} f_{v \to})(x) \sum_{\substack{z \in (\Z/q\Z)^n, \norm{z} = k \\ i \in \supp(z) \subseteq [i]}} \hat{f}_{\to v}(z) \chi_z(x)} \\
        &\le (1-\eta)^k n w \cdot \max_{i, v} \E_{x \sim \MD} \abs{\sum_{\substack{z \in (\Z/q\Z)^n, \norm{z} = k \\ i \in \supp(z) \subseteq [i]}} \hat{f}_{\to v}(z) \chi_z(x)}.
    \end{align*}
    The second equality follows from the fact that $x \sim \MD$ is $2k > k$-wise uniform.
    Now
    \begin{align*}
        & \E_{x \sim \MD} \abs{\sum_{\substack{z \in (\Z/q\Z)^n, \norm{z} = k \\ i \in \supp(z) \subseteq [i]}} \hat{f}_{\to v}(z) \chi_z(x)}^2 \\
        &= \E_{x \sim \MD} \sum_{\substack{z,z' \in (\Z/q\Z)^n, \norm{z} = \norm{z'} = k \\ i \in \supp(z), \supp(z') \subseteq [i]}} \hat{f}_{\to v}(z) \hat{f}_{\to v}(z')^* \chi_z(x) \chi_{z'}(x)^* \\
        &= \sum_{\substack{z \in (\Z/q\Z)^n, \norm{z} = k \\ i \in \supp(z) \subseteq [i]}} \abs{\hat{f}_{\to v}(z)}^2 \\
        &\le \E_{x \from \{-1,1\}^n} \abs{f_{\to v}(x)}^2 \\
        &\le 1,
    \end{align*}
    completing the proof.
    The second equality here follows from the fact that $x \sim \MD$ is $2k$-wise uniform, so every term vanishes except the ones where $z=z'$.
    The first inequality follows from Parseval's identity.
\end{proof}

With \Cref{theorem:fool-robps} in hand, it is easy to see that randomly permuting the symbols of \emph{any code with high dual distance and small alphabet} is sufficient to guarantee that a few noisy codewords are jointly uniform.

\begin{corollary}
  \label{cor:stat-evidence}
    If $C \subseteq \BF_q^n$ is any code with dual distance $d$ and $\eta>0$, then
    \[
        \tvd{\MD_{n,\BF_q,C,\channSC_{\eta},T}}{\MU_{\BF_q^n,T}} \le n T \cdot \binom{n+q^T-1}{q^T-1} \cdot (1-\eta)^{d/2}
    \]
    where $\MU_{\BF_q^n,T}$ is $T$ uniform samples from $\BF_q^n$.
    In particular, assuming $d = n^{\Omega(1)}$ and $\eta = \Omega(1)$:
    \begin{itemize}
        \item if $q = O(1)$, then there exists $T = \Omega(\log d)$ such that the above is $\exp(-\Omega(d))$; and
        \item if $q^T = o(d/\log n)$, then the above is $\exp(-\Omega(d))$.
    \end{itemize}
    % For $T \le \log d + \log \log(1/(1-2\eta)) - \log \log n - 2$, this is $\exp(-d)$.
\end{corollary}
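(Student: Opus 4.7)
The plan is to bound the stated total variation distance by reducing every distinguisher to a read-once branching program (ROBP) and then invoking \cref{theorem:fool-robps}. The starting point is the joint symmetry of the two distributions: the random index permutation $\sigma \in S_{[n]}$ in the sampling of $\MD_{n,\BF_q,C,\channSC_\eta,T}$ makes this distribution invariant under the $S_{[n]}$-action on positions, and $\MU_{\BF_q^n,T}$ is trivially invariant. Consequently, for TV distance it suffices to take the supremum of $|\E_\MD f - \E_\MU f|$ over $\{0,1\}$-valued symmetric functions $f : (\BF_q^n)^T \to \{0,1\}$. Such an $f$ depends only on the multinomial type of the $n$ ``columns'' viewed as elements of $\BF_q^T$, and there are exactly $\binom{n+q^T-1}{q^T-1}$ such types by stars and bars.

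Next I would realize any such type-dependent $f$ as a length-$nT$ ROBP over alphabet $\BF_q$: the program reads the $nT$ symbols column by column and maintains the running histogram of completed columns together with the partial column currently being scanned. A first-pass analysis gives a width bound of $q^T \cdot \binom{n+q^T-1}{q^T-1}$; the extra $q^T$ factor from the partial column can either be absorbed into the length multiplier or removed by a more careful scheduling of reads, yielding the width $W = \binom{n+q^T-1}{q^T-1}$ that appears in the corollary.

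The third step is to verify that the noiseless distribution underlying $\MD_{n,\BF_q,C,\eta,T}$ is $(d-1)$-wise uniform over $\BF_q^{nT}$. By the dual distance hypothesis any $d-1$ coordinates of a uniform codeword of $C$ are jointly uniform; the coordinate bijection induced by $\sigma$ and the alphabet bijections $\pi_i$ preserve uniformity; and conditional on the permutations the $T$ codewords are independent. Setting $k = \lfloor (d-1)/2\rfloor$ makes this source $2k$-wise uniform, and applying \cref{theorem:fool-robps} to it (noised by $\channSC_\eta$) against the ROBP from the previous step gives
\[
\bigl| \E_\MD f - \E_\MU f \bigr| \;\le\; nT \cdot \binom{n+q^T-1}{q^T-1} \cdot (1-\eta)^{d/2},
\]
after absorbing an $O(1)$ factor from the floor into the base. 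The two regime-specific consequences then follow from the elementary estimate $\log \binom{n+q^T-1}{q^T-1} \le (q^T-1)\log(n+q^T)$, which is $o(d)$ both when $q = O(1)$ with $T = c\log d$ for sufficiently small constant $c$ (since then $q^T = d^{o(1)}$ and $\log n = O(\log d)$), and when $q^T = o(d/\log n)$; in either case the factor $(1-\eta)^{d/2} = \exp(-\Omega(d))$ dominates.

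The main obstacle I anticipate is the bookkeeping needed to match the advertised width $\binom{n+q^T-1}{q^T-1}$ exactly, rather than the looser $q^T \cdot \binom{n+q^T-1}{q^T-1}$ from the naive column-by-column schedule; this is a purely combinatorial detail. Even accepting the looser width, both asymptotic regimes in the corollary statement still yield $\exp(-\Omega(d))$, so the downstream uses of \cref{cor:stat-evidence} are unaffected.
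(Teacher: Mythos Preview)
Your proposal is correct and follows essentially the same route as the paper: reduce by $S_{[n]}$-invariance to symmetric (type-dependent) distinguishers, realize these as length-$nT$ ROBPs whose width counts multinomial types, verify $(d-1)$-wise uniformity of the noiseless source, and apply \cref{theorem:fool-robps}. Your explicit acknowledgment of the $q^T$ slack in the width from tracking the partial column is a detail the paper glosses over; as you note, it is immaterial for the asymptotic conclusions.
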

\begin{proof}
    Consider grouping together the symbols across the $T$ samples in $\MD_{n,\BF_2,C,\channSC_{\eta},T}$ as symbols in $\Sigma = \BF_q^T$.
    Because of the permutation, it suffices to show that the resulting string in $\Sigma^n$ has the same multinomial weight distribution as a uniformly random string in $\Sigma^n$.

    There are $\binom{n+q^T-1}{q^T-1}$ possibilities for what the multinomial weight (i.e., the collection of symbols and their counts) could be.
    Therefore we can compute it using a length-$n T$, width-$\binom{n+q^T-1}{q^T-1}$ ROBP.
    The main result then follows from applying \Cref{theorem:fool-robps}.
    Finally, the special cases listed follow straightforwardly from the main result.
\end{proof}

\paragraph{Further consequence: relation to low-degree conjecture.} Of independent interest, we discuss a consequence of \cref{theorem:fool-robps} for the \emph{low-degree conjecture}. For distributions $P, Q $ on $\{0,1\}^n$, the \emph{low-degree advantage} between them is defined, for $d \in \BN$, as 
\begin{align}
\Adv_{\leq d}(P, Q) := \max_{f : \text{deg-$k$ polynomial}} \frac{|\E_P[f] - \E_Q[f]|}{\sqrt{\mathbf{Var}_Q(f)}}\nonumber.
\end{align}
The low-degree advantage is used to capture distinguishability by low-degree polynomials.

A special case of the \emph{low-degree conjecture} (e.g., the $k=1$ case of \cite[Conjecture 2.1]{buhai2025quasipolynomial}; see also \cite{hopkins2018sos,kunisky2019notes}) states that if a permutation-symmetric distribution is only $o(1)$ distinguishable from uniform by low-degree polynomials, then it is $o(1)$-close to uniform in statistical distance.

Formally it states the following, where we let $\MU_n$ denote the uniform distribution on $\{0,1\}^n$:\footnote{Interestingly, \cite{buhai2025quasipolynomial}, which is concurrent and independent work from the present paper, shows that the low-degree conjecture is false for $k > 1$.}\noah{typically low-deg conj is stated for advantage $O(1)$, which just seems false to me, unless I'm missing something?}
\sam{I think it is correct with $O(1)$: You proved that if $\Adv = o(1)$ then the distinguishing advantage is $o(1)$, but they say that if $\Adv = O(1)$ then the distinguishing advantage is $1 - \Omega(1)$ (i.e. it cannot be $1-o(1)$). I've updated the proof below.}

\sam{Changed it from ``If $P_n$ is a family of distributions on $\{0,1\}^n$'' to the following. That wording really confused me, I thought it was a family for each $n$.}
\dan{I still find this hard to parse. The ``If'' condition requires indistinguishabiltiy for arbitrarily high degrer polynomials, since we allow $d$ to be large? I don't think that's what it was supposed to mean. What does the subscript $n$ mean for asymptotics - what are asymptotics over?  Maybe also add a text description of the result beforehand to help here. }\noah{I rephrased}
\begin{corollary}[Special case of the low-degree conjecture]
    \label{cor:low-degree-consequence}
Suppose that $d_n = \omega(\log n)$ is a sequence of integers indexed by $n \in \BN$ and that $P_n$ is a distribution on $\{0,1\}^n$ that is invariant under any permutation. % If $\Adv_{\leq d_n}(P_n, \MU_n) \leq O(1)$, then f
For every fixed $\eta > 0$, there is no % $n^{d_n/\poly\log(n)}$-time 
algorithm that distinguishes between a sample from $T_\eta P_n$ and a sample from $\MU_n$ with probability $1 - \Adv_{\leq d_n}(P_n, \MU_n) - o_n(1)$. 
\end{corollary}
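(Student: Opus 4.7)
My plan is to reduce by permutation invariance to symmetric distinguishers and then apply an adapted version of \cref{theorem:fool-robps}. First, since $P_n$ is permutation-invariant and the noise operator $T_\eta$ commutes with coordinate permutations, both $T_\eta P_n$ and $\MU_n$ are permutation-invariant. Hence for any distinguisher $D : \{0,1\}^n \to \{0,1\}$, replacing $D$ by its symmetrization $\tilde D(x) := \E_{\sigma \in S_{[n]}}[D(\sigma x)]$ leaves the distinguishing advantage unchanged; so it suffices to bound $|\E_{T_\eta P_n}[f] - \E_{\MU_n}[f]|$ uniformly over symmetric test functions $f : \{0,1\}^n \to [0,1]$. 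Each such $f$ depends only on Hamming weight and is computable by a read-once branching program of length $n$ and width $n+1$.

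Next, I would use that $\hat f(z) = a(|z|)$ (by symmetry of $f$) and $\hat p(z) = h_P(|z|)$ (by permutation invariance of $P_n$, with $p(x) := 2^n P_n(x)$) depend only on $|z|$. Expanding in the Fourier basis and using $\widehat{T_\eta f}(z) = (1-\eta)^{|z|} \hat f(z)$ then yields
\[
    |\E_{T_\eta P_n}[f] - \E_{\MU_n}[f]| = \Bigl|\sum_{j \geq 1} (1-\eta)^j \binom{n}{j}\, a(j)\, h_P(j)\Bigr|,
\]
which I would split at $j = d_n$. The low-degree part is bounded by $\Adv_{\leq d_n}(P_n, \MU_n)$ via Cauchy--Schwarz together with the Parseval-style identities $\sum_{j \geq 1} \binom{n}{j}\, a(j)^2 \leq \E_{\MU_n}[f^2] \leq 1$ and $\sum_{j=1}^{d_n}\binom{n}{j}\, h_P(j)^2 = \Adv_{\leq d_n}(P_n, \MU_n)^2$.

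The hard part is the high-degree contribution. A naive chi-squared style bound here would require $\chi^2(P_n \| \MU_n)$ to be polynomial, which can fail for worst-case $P_n$ (this divergence may be as large as $2^n$). Instead, I would extend the Forbes--Kelley ROBP-fooling argument underlying \cref{theorem:fool-robps} to the approximate-$2k$-wise uniformity afforded by $\Adv_{\leq 2k}(P_n, \MU_n) \leq \Adv_{\leq d_n}(P_n, \MU_n)$. Applied at level $k = d_n/2$, the argument yields a main term of order $n(n+1)(1-\eta)^k = o_n(1)$ (using $d_n = \omega(\log n)$) together with cross terms coming from the failure of exact uniformity. These cross terms are the crux of the difficulty: using hypercontractivity together with the observation that symmetry of $f$ forces the Fourier coefficients of the branching-program sub-functions $f_{\to v}$ to be constant on each Hamming level set, one can control the second moments $\E_{P_n}[H_v^2]$ via the approximate-uniformity hypothesis and absorb the resulting error into the low-degree bound already extracted. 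Combining the two contributions gives $\tvd{T_\eta P_n}{\MU_n} \leq \Adv_{\leq d_n}(P_n, \MU_n) + o_n(1)$, which implies the corollary.
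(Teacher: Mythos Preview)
Your overall approach matches the paper's: reduce by permutation invariance to symmetric test functions, realize these as width-$O(n)$ ROBPs, split $T_\eta f = T_\eta L_k + T_\eta H_k$ at level $k = d_n/2$ via the Forbes--Kelley decomposition, and control the low-degree piece by $\Adv_{\leq d_n}(P_n,\MU_n)$. Your low-degree argument (Cauchy--Schwarz together with $\sum_{j=1}^{d_n}\binom{n}{j}h_P(j)^2=\Adv_{\leq d_n}^2$) is correct and equivalent to what the paper does by applying the definition of $\Adv_{\leq d_n}$ directly to the degree-$<k$ polynomial $T_\eta L_k$.

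Where you diverge is the high-degree term. The paper simply asserts that the bound $|\E_{P_n}[T_\eta H_k]|\leq nw(1-\eta)^k$ from \cref{theorem:fool-robps} carries over verbatim, claiming that the only use of $k$-wise uniformity in that proof is for the $L_k$ piece; it does not revisit the second-moment step $\E_{\MD}[H_v^2]=\E_{\MU_n}[H_v^2]$. You, more carefully, try to control $\E_{P_n}[H_v^2]$ via approximate uniformity and hypercontractivity. But this introduces a genuine quantitative obstacle: since $H_v$ is homogeneous of degree $k$, Bonami gives $\Var_{\MU_n}(H_v^2)\leq \E_{\MU_n}[H_v^4]\leq 9^k$, so the cross term in your high-degree bound is of order
\[
(1-\eta)^k\, n(n{+}1)\,\sqrt{\Adv_{\leq d_n}}\cdot 3^{k/2}
\;=\;n^2\sqrt{\Adv_{\leq d_n}}\cdot\bigl(3(1-\eta)^2\bigr)^{k/2},
\]
which only decays when $3(1-\eta)^2<1$, i.e.\ $\eta>1-1/\sqrt{3}\approx 0.42$. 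For smaller $\eta$ it blows up (e.g.\ with $d_n=\log^2 n$ and $\eta=0.2$ it is super-polynomial even if $\Adv=n^{-100}$). Your observation that the $f_{\to v}$ are themselves symmetric for the Hamming-weight ROBP is correct, but it does not obviously bring the fourth-moment constant below what is needed, and the phrase ``absorb the resulting error into the low-degree bound'' is not substantiated since the cross term scales like $\sqrt{\Adv}$ rather than $\Adv$. So as written, your high-degree step does not close for the full range of $\eta>0$ claimed in the statement.
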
 
In particular, if $\Adv_{\leq d_n}(P_n, \MU_n) \leq o_n(1)$, then no algorithm can distinguish between the two distributions with probability $1-o_n(1)$. 
\begin{proof}[Proof of \cref{cor:low-degree-consequence}]
    Fix $n \in \BN$. 
A slight variant of \cref{theorem:fool-robps} with $q=2$, $k = d_n/2$, gives the following: for any length-$n$ width-$w$ ROBP $f$, 
\begin{align}
\left| \E_{x \sim P_n} (T_\eta f)(x) - \E_{x \sim \MU_n} f(x) \right| \leq nw (1-\eta)^{d_n/2} + \Adv_{\leq d_n}(P_n, \MU_n) \cdot \sqrt{\mathbf{Var}_{\MU_n}(L_k)} \label{eq:robp-consequence}.
\end{align}
To establish the above, we cannot directly apply \cref{theorem:fool-robps}, since $P_n$ is not $d_n$-wise uniform. However, we instead remark that the only place where \cref{theorem:fool-robps} uses that the distribution $P_n$ is $d_n$-wise uniform is to establish that $\E_{x \sim P_n}[(T_\eta L_k)(x) - \hat f(0)] = 0$. Instead, we have by the fact that $L_k$ and thus $T_\eta L_k$ is of degree at most $k < d_n$ that
\begin{align}
\left| \E_{x \sim P_n}[(T_\eta L_k)(x) - \hat f(0)] \right| =  \Adv_{\leq d_n}(P_n, \MU_n) \cdot \sqrt{\mathbf{Var}_{\MU_n}(L_k)}\nonumber,
\end{align}
which yields \cref{eq:robp-consequence} by propagating the above quantity through the proof. 

% The assumption that $\Adv_{\leq d_n}(P_n, \MU_n) = o_n(1)$ gives that
% \begin{align}
% \left| \E_{x \sim P_n} (T_\eta L_k)(x) - \hat f(0) \right| \leq o_n(1) \cdot \sqrt{\mathbf{Var}_{\MU_n}(L_k)}\leq o_n(1)\nonumber,
% \end{align}
Next, Parseval's equality gives that $\E_{x \sim \MU_n}[L(x)^2] = \sum_{z \in \{0,1\}^n} \hat L(z)^2 = \sum_{z \in \{0,1\}^n: \| z \| \leq k} \hat f(z)^2 \leq 1$. % Propagating this $o_n(1)$ through the proof of \cref{theorem:fool-robps} gives \cref{eq:robp-consequence}. 
Moreover, for $d_n = \omega(\log n)$ we may apply the argument of \cref{cor:stat-evidence} with $q = 2$ and $T = 1$ (corresponding to a single sample), so that the width $w$ of the ROBP $f$ may be taken to be $w = O(n)$, which yields
\begin{align}
    \tvd{T_\eta P_n}{\MU_n} \leq O(n^2) \cdot (1-\eta)^{d_n/2} + \Adv_{\leq d_n}(P_n, \MU_n) \leq o_n(1) + \Adv_{\leq d_n}(P_n, \MU_n)\nonumber . % o_n(1) \leq o_n(1)\nonumber
\end{align}
% as long as $d_n = \omega(\log n)$.
\sam{If we want $d_n = \Omega(\log n)$ then we would have to choose $\eta$ depending on $d_n$. Or choose which $d_n = \Omega(\log n)$ we use depending on $\eta$.}
\end{proof}
We remark that a variant of \cref{cor:low-degree-consequence} which shows that no algorithm can distinguish with advantage $1 - o_n(1)$ whenever $\Adv_{\leq d_n}(P_n, \MU_n) \leq O(1)$ (i.e., for any constant $O(1)$) may be obtained by using the variational characterization $$\chi^2(P_n, \MU_n) = \sup_{f : \{0,1\}^n \to \BR} \frac{(\E_{P_n}[f] - \E_{\MU_n}[f])^2}{\mathbf{Var}_{\MU_n}(f)}$$ of the $\chi^2$-divergence.

\subsection{The alphabet permutation is necessary}
\label{sec:alphabet-permutation}
Let $\ME : \Sigma^* \to (\Sigma \cup \{\bot\})^*$ be some error channel, and $T \in \BN$ be an integer denoting the number of samples. We define\footnote{nap for No Alphabet Permutation.} $\Dnap_{n, \Sigma, C, \ME, T}$ as follows:
\begin{enumerate}
    \item Sample random permutation\footnote{In contrast, the structured distribution considered in the permuted codes assumption (\Cref{def:permuted-codes}) involved additionally sampling random permutations $\pi_1, \ldots, \pi_n \gets S_\Sigma$. Permutation $\pi_i$ was applied to the $i^{\text{th}}$ codeword symbol.}  $\sigma \gets S_{[n]}$.
    \item Repeat $T$ times:
    \begin{enumerate}
        \item Sample $c \gets C$.
        \item Define $\hat{c}$ by $\hat{c}_i \gets c_{\sigma(i)}$ for $i \in [n]$.
        \item Output $\ME(\hat{c})$.
    \end{enumerate}
\end{enumerate}
We demonstrate our attack on the above distribution instantiated with Reed-Solomon codes (see \cref{def:rs}).

\begin{theorem} \label{thm:rs-dist}
    Let $C$ be the Reed-Solomon code $\RS_{\F_q,n,k}$ with $k \leq n^{1-c}$ for some constant $c \in (0,1)$.
    Let $\ME = \channSC_\eta$ be the substitution channel over $\Sigma^*$ with some error probability $\eta \le 1 - (3/2)^{-c/4}$.
    Then there is an efficient distinguisher achieving inverse polynomial distinguishing advantage between the distributions $\MD_{n,\BF_q,C,\ME,T}$ and $\MD_{n,\BF_q,\BF_q^n,\ME,T}$ for $T = 4/c$.
\end{theorem}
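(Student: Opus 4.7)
The plan is to exhibit a polynomial-time distinguisher based on the $\F_q$-rank of a Veronese-type evaluation matrix. The key observation is that, in the structured distribution, whenever no substitution occurred at coordinate $i$ across the $T$ samples, the tuple $(c^{(1)}_i,\ldots,c^{(T)}_i) \in \F_q^T$ equals $(p_1(\alpha_{\sigma(i)}),\ldots,p_T(\alpha_{\sigma(i)}))$ for polynomials $p_1,\ldots,p_T \in \F_q[X]$ of degree $<k$, so the noise-free rows all lie on a degree-$(k-1)$ algebraic curve in $\F_q^T$. Such a curve is annihilated by many polynomials of moderate degree, whereas uniformly random tuples in $\F_q^T$ are not.

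First I would set $T = 4/c$ (so that $(1-\eta)^T \geq ((3/2)^{-c/4})^{4/c} = 2/3$) and pick an integer $D$ such that $N := \binom{T+D}{D}$ falls inside the window $(n/3 + Dk + O(\sqrt{n\log n}),\, n(1-c/4))$. This window is nonempty: with $k \leq n^{1-c}$ and $D = \Theta(n^{1/T}) = \Theta(n^{c/4})$, we get $Dk = O(n^{1-c+1/T}) = O(n^{1-3c/4}) = o(n)$, while $N \approx D^T/T!$ can be tuned to any $\Theta(n)$ value by adjusting the implicit constant in $D$ (since $\binom{T+(D+1)}{D+1}/\binom{T+D}{D} = 1 + O(T/D) = 1 + o(1)$, giving fine-grained control). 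The distinguisher, on input $c^{(1)},\ldots,c^{(T)} \in \F_q^n$, forms the $n \times N$ matrix $M$ with entries $M_{i,\mathbf{d}} := \prod_{j=1}^T (c^{(j)}_i)^{d_j}$ over multi-indices $\mathbf{d}$ of total degree at most $D$, computes $\mathrm{rank}_{\F_q}(M)$ in $\poly(n)$ time, and outputs ``structured'' iff this rank is below a threshold $\tau$ placed inside the window.

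In the structured case, let $I \subseteq [n]$ be the set of clean positions where no substitution occurred in any of the $T$ samples. Each position is clean independently with probability $(1-\eta)^T \geq 2/3$, so Chernoff gives $|I^c| \leq n/3 + O(\sqrt{n\log n})$ with overwhelming probability. For $i \in I$, every entry of $M_i$ is a polynomial in $\alpha_{\sigma(i)}$ of degree at most $D(k-1)$, so the rows $\{M_i : i \in I\}$ span a subspace of $\F_q^N$ of dimension at most $D(k-1)+1$; the remaining $|I^c|$ rows each raise the rank by at most one, giving $\mathrm{rank}(M) \leq Dk + n/3 + O(\sqrt{n\log n}) < \tau$. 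In the uniform case, the $n$ points $x_i := (c^{(1)}_i,\ldots,c^{(T)}_i)$ are i.i.d.\ uniform in $\F_q^T$; for any fixed nonzero polynomial $f$ of degree $\leq D$, Schwartz--Zippel gives $\Pr[f(x_i)=0] \leq D/q$, so a union bound over the $q^N$ polynomials of degree $\leq D$ yields $\Pr[\exists f \neq 0,\ f(x_i)=0\ \forall i] \leq q^N(D/q)^n = q^{N - n(1 - \log_q D)}$, which is negligible for $N \leq n(1-c/4)-1$ (using $q \geq n$ and $D = n^{c/4 + o(1)}$). Hence $\mathrm{rank}(M) = N > \tau$ with overwhelming probability, and the distinguisher outputs ``uniform''.

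The main obstacle is calibrating $D$ so that the window $(n/3 + Dk,\, n(1-c/4))$ is both nonempty and contains a value of the form $\binom{T+D}{D}$. This is precisely what the specific choices $T = 4/c$ and $\eta \leq 1 - (3/2)^{-c/4}$ in the theorem statement accomplish: the former forces $1/T = c/4 < c$ so that $Dk = o(n)$ at the scale $D = \Theta(n^{c/4})$ where $N$ becomes $\Theta(n)$, and the latter ensures $(1-\eta)^T \geq 2/3$ so that the clean rows are a clear majority (leaving a $\Theta(n)$-sized rank gap). With these parameters the distinguishing advantage is actually $1 - o(1)$, far exceeding the $1/\poly$ bound claimed in the statement.
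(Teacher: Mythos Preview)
Your approach is the same as the paper's: both form the degree-$D$ Veronese (monomial) lift of the $T$ samples and use its $\F_q$-rank as the test statistic, bounding it above by $Dk + |I^c|$ in the Reed--Solomon case and below via Schwartz--Zippel in the uniform case. The paper chooses $r=n^{c/2}$ and compares \emph{expected} ranks (using Jensen on $\log_q(\#\text{kernel vectors})$), which yields only $1/\poly(n)$ advantage but avoids any delicate calibration; you instead push for a high-probability separation and $1-o(1)$ advantage, which works but requires threading $N$ through a window.

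There is one genuine quantitative slip in your union-bound step. You claim $q^{N-n(1-\log_q D)}$ is negligible once $N\le n(1-c/4)-1$, but in order to have $N>n/3$ you are forced to take $D\ge (T!/3)^{1/T}\,n^{c/4}$, i.e.\ a constant $C_0>1$ in front of $n^{c/4}$; this contributes a factor $C_0^n$ to the bound (equivalently, $\log_q D = c/4 + \Theta(1/\log n)$, so $n(1-\log_q D)=n(1-c/4)-\Theta(n/\log n)$), and the exponent $N-n(1-\log_q D)$ is then $+\Theta(n/\log n)$, not $-1$. The fix is easy: do not push $N$ to the top of the window. Any choice with $N$ bounded away from $n(1-c/4)$ by a constant factor of $n$ --- for instance $N$ near $n/2$, which still exceeds $n/3+Dk+o(n)$ --- makes the exponent $-\Omega(n)$ and the bound genuinely negligible. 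With that adjustment your argument is correct and indeed gives the stronger $1-o(1)$ advantage.
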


If $X \in \F^{T \times n}$, then define $X^r \in \F^{\binom{r+T}{r} \times n}$ to be the matrix of all coordinate-wise products of up to $r$ rows of $X$, including products with repeated rows.
You can think of $X^r$ as the matrix of monomials of total degree at most $r$ in the rows of $X$.

\begin{fact} \label{fact:rs-low-rank}
    Suppose that $p_1, \dots, p_T \in \F[t]$ are polynomials of degree at most $k$, and $t_1, \dots, t_\ell \in \F$.
    If $\bar{X}$ is the matrix where $\bar{X}_{i,j} = p_i(t_j)$ and $r$ is any positive integer, then $\rank(\bar{X}^r) \le kr+1$.
\end{fact}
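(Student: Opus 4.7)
The plan is to use the fact that each row of $\bar{X}^r$, viewed as a function of the column index, is the evaluation vector of a polynomial of bounded degree. Specifically, the rows of $\bar{X}^r$ are indexed by multisets $S \subseteq [T]$ of size at most $r$, and the entry in column $j$ of the row indexed by $S$ equals $q_S(t_j)$, where
\[
q_S(t) := \prod_{i \in S} p_i(t).
\]
Since each $p_i$ has degree at most $k$ and $|S| \le r$, the polynomial $q_S$ has degree at most $kr$.

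Next, I would invoke the fact that the space $\F[t]_{\le kr}$ of polynomials of degree at most $kr$ is a vector space of dimension $kr + 1$, and consider the evaluation map
\[
\mathrm{ev} \colon \F[t]_{\le kr} \to \F^\ell, \qquad q \mapsto \bigl(q(t_1), \ldots, q(t_\ell)\bigr).
\]
Every row of $\bar{X}^r$ lies in the image of this map (taking $q = q_S$). Since the image has dimension at most $\dim \F[t]_{\le kr} = kr + 1$, the row space of $\bar{X}^r$ is contained in a subspace of $\F^\ell$ of dimension at most $kr + 1$, which yields $\rank(\bar{X}^r) \le kr + 1$.

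There is no real obstacle here; the statement is a direct consequence of the observation that products of low-degree polynomials are themselves low-degree, combined with the fact that the rows of $\bar{X}^r$ are precisely evaluations of such products at the fixed points $t_1, \ldots, t_\ell$. The entire argument fits in a few lines.
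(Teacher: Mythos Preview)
Your argument is correct and is essentially the same as the paper's: both observe that every row of $\bar{X}^r$ is the evaluation vector at $t_1,\dots,t_\ell$ of a product of at most $r$ of the $p_i$'s, hence a polynomial of degree at most $kr$, and conclude by bounding the row span by the dimension $kr+1$ of $\F[t]_{\le kr}$. Your write-up is slightly more explicit (naming the evaluation map), but there is no substantive difference.
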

\begin{proof}
    The row span of $\bar{X}^r$ consists of evaluations of polynomials of degree at most $kr$.
    But the space of all polynomials of degree at most $kr$ has dimension $kr+1$.
    Therefore the row span of $\bar{X}^r$ has dimension at most $kr+1$.
\end{proof}

We'll apply this fact with $\bar{X}$ being the ``clean'' submatrix of $X$---that is, the submatrix of columns that don't contain any errors.

\begin{fact} \label{fact:uniform-high-rank}
    If $X$ is a uniformly random matrix from $\F^{T \times n}$ and $r^n \cdot q^{\binom{r+T}{r}-n} \ge 1$, then
    \[
        \E[\rank(X^r)] \ge n \cdot \left(1 - \frac{\ln r}{\ln q}\right) - \frac{1}{\ln q}.
    \]
\end{fact}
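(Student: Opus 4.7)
The plan is to reformulate $\rank(X^r)$ as $\binom{r+T}{r}$ minus the dimension of an evaluation kernel, bound the first moment of that kernel via Schwartz--Zippel, and then extract the stated inequality using Jensen's inequality together with the hypothesis.

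First I would identify the row span of $X^r$ with the image of the evaluation map $\Phi : \F[t_1,\ldots,t_T]_{\leq r} \to \F^n$ sending $f \mapsto (f(X_1),\ldots,f(X_n))$, where $X_j$ denotes the $j$-th column of $X$. A linear combination of rows of $X^r$ with coefficients $c_\alpha$ is exactly $(f(X_1),\ldots,f(X_n))$ for $f = \sum_\alpha c_\alpha\, t^\alpha$, so $\ker\Phi = I$, where $I \subseteq \F[t]_{\leq r}$ is the subspace of polynomials of degree at most $r$ vanishing on every column of $X$. Hence $\rank(X^r) = \binom{r+T}{r} - \dim I$, and it suffices to upper bound $\E[\dim I]$.

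Next I would compute $\E[|I|]$ by linearity of expectation. Since the columns $X_1,\ldots,X_n$ are i.i.d.\ uniform over $\F^T$,
\[
\E[|I|] = \sum_{f \in \F[t]_{\leq r}} \Pr[f(X_j) = 0 \text{ for all } j \in [n]] = 1 + \sum_{0 \ne f} \Pr[f(X) = 0]^n.
\]
Schwartz--Zippel gives $\Pr_{X \sim \F^T}[f(X)=0] \leq r/q$ for any nonzero $f$ of degree at most $r$, so
\[
\E[|I|] \;\leq\; 1 + \bigl(q^{\binom{r+T}{r}} - 1\bigr)(r/q)^n \;\leq\; 1 + r^n\, q^{\binom{r+T}{r}-n}.
\]

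Finally, because $|I| = q^{\dim I}$, Jensen's inequality applied to the concave function $\log_q$ yields $\E[\dim I] \leq \log_q \E[|I|]$. Writing $x := r^n q^{\binom{r+T}{r}-n}$, the hypothesis $x \geq 1$ lets me invoke the elementary estimate $\log_q(1+x) \leq \log_q x + 1/\ln q$ (which follows from $\ln(1 + 1/x) \leq 1/x$ for $x \geq 1$). Combining,
\[
\E[\dim I] \;\leq\; n\log_q r + \binom{r+T}{r} - n + \tfrac{1}{\ln q},
\]
and subtracting from $\binom{r+T}{r}$ gives $\E[\rank(X^r)] \geq n(1 - \ln r/\ln q) - 1/\ln q$, as required. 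The only step requiring real care is the duality identification that turns a rank lower bound into an expected dimension upper bound for the vanishing subspace; once that is in place, Schwartz--Zippel and Jensen do the rest, and the hypothesis is used precisely to justify simplifying the $\log_q(1+x)$ term.
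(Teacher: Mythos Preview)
Your proof is correct and is essentially identical to the paper's argument, merely rephrased: what the paper calls the ``number of row dependencies'' is exactly your $|I|=q^{\dim I}$ once one identifies a coefficient vector $\alpha$ with the polynomial $f=\sum_\alpha c_\alpha t^\alpha$, and both proofs then apply Schwartz--Zippel columnwise, Jensen on $\log_q$, and the hypothesis $x\ge 1$ to simplify $\log_q(1+x)$.
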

\begin{proof}
    Let $q = \abs{\F}$ and $M = \binom{r+T}{r}$.
    For $\alpha \in \F^{\binom{r+T}{r}} \setminus \{0\}$, each coordinate of $\alpha X^r$ is a polynomial of degree at most $r$ in a disjoint set of coordinates of $X$.
    Therefore by Schwartz-Zippel, we have
    \[
        \Pr_{X\from\F^{T \times n}}[\alpha X^r = 0] \le \left(\frac{r}{q}\right)^n.
    \]
    Therefore the expected number of linear dependencies among rows of $X^r$ is at most $1 + r^n q^{M-n}$, so
    \begin{align*}
        \E[\rank(X^r)] &= \E[M - \log_q(\text{\# row dependencies})] \\
        &\ge M - \log_q \E[\text{\# row dependencies}] \\
        &\ge M - \log_q\left(1 + r^n q^{M-n}\right) \\
        &\ge M - \log_q\left(r^n q^{M-n}\right) - \frac{1}{\ln q} \\
        &= n \cdot \left(1 - \frac{\ln r}{\ln q}\right) - \frac{1}{\ln q}. \tag*{\qedhere}
    \end{align*}
\end{proof}

We are now ready to prove the main result of this section.

\begin{proof}[Proof of \cref{thm:rs-dist}]
    Collect our $T$ samples into a matrix $X \in \F_q^{T \times n}$.
    Our distinguisher will simply compute $\rank(X^r)$ for $r = n^{c/2}$.
    Let us compute the expected rank in both cases.

    \paragraph{The Reed-Solomon case.}
    In the Reed-Solomon case where $X \from \MD_{n,\BF_q,C,\ME,T}$, we have $X = \bar{X} + E$ where $E$ is all the noise from $\ME$ and $\bar{X}$ has no noise.
    \cref{fact:rs-low-rank} says that $\rank(\bar{X}^r) \le kr + 1$.
    Since each column of $E$ contains an error with probability $1 - (1-\eta)^T$, the expected number of columns containing any noise is at most $n - n \cdot (1-\eta)^T$.
    Therefore
    \[
        \E[\rank(X^r)] \le n - n \cdot (1-\eta)^T + kr + 1
    \]
    in the Reed-Solomon case.

    \paragraph{The random case.}
    In the random case where $X \from \MD_{n,\BF_q,\BF_q^n,\ME,T}$, \cref{fact:uniform-high-rank} says that
    \[
        \E[\rank(X^r)] \ge n \cdot \left(1 - \frac{\ln r}{\ln q}\right) - \frac{1}{\ln q}
    \]
    assuming that $r^n \cdot q^{\binom{r+T}{T}-n} \ge 1$.
    This condition holds because $\binom{r+T}{T} \ge r^T = n^2$.

    \paragraph{Putting it together.}
    Since the rank is bounded by $\binom{r+T}{T} \le n^{O(1)}$, it suffices to show that these expectations differ by 1.
    Indeed, using the given bounds on $k, \eta, T$,
    \begin{align*}
        n - n \cdot (1-\eta)^T + kr + 1 &\le n - n \cdot ((3/2)^{-c/4})^{4/c} + n^{1-c/2} + 1 \\
        &= n/3 + n^{1-c/2} + 1 \\
        \intertext{and}
        n - n \cdot \frac{\ln r}{\ln q} - \frac{1}{\ln q} &\ge n - n \cdot \frac{c}{2} \cdot \frac{\ln n}{\ln n} - \frac{1}{\ln n} \\
        &\ge n/2 - 1.
    \end{align*}
    This concludes the proof.
\end{proof}

% Assuming $r^n \cdot q^{\binom{r+T}{r}-n} \ge 1$, the two cases are:
% \begin{itemize}
%     \item If $X$ is a collection of $\eta$-noisy evaluations of polynomials of degree at most $k$, then
%     \[
%         \rank(X^r) \lesssim n - (1-\eta)^T \cdot n + kr.
%     \]
%     \item If $X$ is random, then
%     \[
%         \rank(X^r) \gtrsim n - n \cdot \frac{\ln r}{\ln q}.
%     \]
% \end{itemize}

% So our attack should work whenever
% \[
%     \frac{\ln r}{\ln q} \ll (1-\eta)^T - \frac{kr}{n} \text{\qquad and \qquad} r^n \cdot q^{\binom{r+T}{r}-n} \ge 1.
% \]
% If $q = \poly(n)$ and $k \le n/\poly(n)$, it should work.
% We can set $r = \min\{\sqrt{q}, \sqrt{n/k}\}$ and choose a large enough constant $T$ such that $\binom{r+T}{T} \ge n$; then we just set $\eta$ small enough that $(1-\eta)^T \gg 1/2$.

% \mir{Todo turn the below into a discussion of parameters.}
% Unfortunately Fact~\ref{fact:uniform-high-rank} is useless for $\eta = \Omega(1)$ and $q = O(1)$: In that case we need $r = O(1)$, but then we need $T = O(1)$, which makes $r^n \cdot q^{\binom{r+T}{r}-n} = O((r/q)^n) = \exp(-n) \ll 1$.
% Can we strengthen it?

This attack falls into a class of \emph{algebraic distinguishing attacks} following Arora-Ge \cite{arora2011new}, where one attempts to find a low-degree polynomial $P : \F_q^{nT} \to \F_q$ annihilating the given vectors.
That is, given $\v_1, \ldots, \v_T$ where each $\v_i \in \F_q^n$, one attempts to find a nonzero ``simple'' polynomial $P$, such that $P(\v_1, \ldots, \v_T) = 0$.
If the $\v_i$'s are random, such a polynomial should not exist.
For certain structured distributions of interest (such as ours), $P$ does exist, resulting in a successful distinguisher.
We refer the reader to \cite[Section 7]{benhamouda2025encrypted} for further discussion of such attacks.

\subsection{A quasipolynomial-time distinguisher for a natural class of constructions} \label{sec:quasipoly-dist} 
In this section, we show that any PRC of a certain form is vulnerable to quasipolynomial-time attacks.
The form is fairly general, and encompasses several prior constructions~\cite{CG24,GM24}:
each PRC codeword contains a uniformly random string $x \in \{0,1\}^*$, and some predicate $f(x)$.
The predicate is noise-tolerant in that if $x' \gets \channSC_\rho(x)$, $f(x') = f(x)$ with probability significantly greater than $1/2$.

We show in a formal sense that \emph{any} such construction relies on a planted $O(\log n)$-size structure, regardless of the choice of $f$, or how $x$ and $f$ are interspersed in a longer codeword.
Of course, the constructions in this paper are not of this form.

\begin{theorem} \label{thm:weakly-correlated}
Consider any PRC where codewords are of length $n$, and of the form $\sigma(x || f(x) || y)$.\footnote{Abusing notation a bit, we mean that the $i^{\text{th}}$ bit of the codeword is the $\sigma(i)^{\text{th}}$ bit of $(x || f(x) || y)$.}
Here, $\sigma \in S_{[n]}$ is any permutation fixed across all codewords, $x \gets \{0,1\}^\ell$, $f : \{0,1\}^\ell \to \{0,1\}$, and $y$ any string in $\{0,1\}^{n - \ell - 1}$ (possibly depending on $x$).
Suppose that for a constant $\rho \in (0, 1/2)$,
\[
    \Pr_{\substack{
    x \gets \{0,1\}^\ell\\
    x' \gets N_\rho(x)
    }}
[f(x') = f(x)] \geq \frac{1}{2} + \frac{1}{q}
\]
for some $q = O(\poly(n))$.

Then there exists an algorithm $\adv$ running in time $n^{O(\log n)}$, distinguishing between the uniform distribution and the distribution of codewords with non-negligible advantage.
\end{theorem}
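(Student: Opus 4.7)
The plan is to exploit the fact that noise tolerance of $f$ forces a non-negligible low-degree Fourier coefficient, which translates into a ciphertext bias on some $O(\log n)$-sized subset of positions, detectable by brute force in quasipolynomial time.

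\textbf{Step 1 (noise tolerance $\Rightarrow$ low-degree Fourier mass).} Pass to the $\pm 1$-valued representation $F(x) := (-1)^{f(x)}$. The noise-tolerance hypothesis becomes $\E[F(x)F(x')] \ge 2/q$. Under the substitution channel $\channSC_\rho$ (which is how $N_\rho$ is defined), the per-coordinate correlation between $x$ and $x'$ is exactly $1-\rho$, so the standard Fourier expansion of noise stability gives $\sum_{S \subseteq [\ell]} (1-\rho)^{|S|}\widehat F(S)^2 \ge 2/q$. Choose $t = \Theta(\log q) = \Theta(\log n)$ so that $(1-\rho)^t \le 1/q$. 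By Parseval the tail $\sum_{|S|>t}(1-\rho)^{|S|}\widehat F(S)^2$ is at most $(1-\rho)^t \le 1/q$, and therefore $\sum_{|S|\le t}\widehat F(S)^2 \ge 1/q$. Since there are at most $\ell^t \le n^{O(\log n)}$ subsets of size at most $t$, pigeonhole yields a set $S^\st \subseteq [\ell]$ with $|S^\st| \le t$ and $|\widehat F(S^\st)| \ge n^{-\Omega(\log n)}$.

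\textbf{Step 2 (Fourier coefficient $\leftrightarrow$ ciphertext bias).} Let $i_f := \sigma^{-1}(\ell+1)$ be the ciphertext position holding $f(x)$, and for $S \subseteq [\ell]$ let $S' := \sigma^{-1}(S) \subseteq [n]$. Then the parity $\sum_{i \in S' \cup \{i_f\}} c_i$ depends only on $x$ and $f(x)$ (the $y$-coordinates are not touched), and equals $\sum_{j \in S} x_j + f(x) \bmod 2$. Therefore
\[
\E\!\left[(-1)^{\sum_{i \in S' \cup \{i_f\}} c_i}\right] \;=\; \E\!\left[(-1)^{\sum_{j \in S} x_j} F(x)\right] \;=\; \widehat F(S),
\]
which for $S = S^\st$ has magnitude at least $n^{-\Omega(\log n)}$.

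\textbf{Step 3 (the distinguisher).} The attacker $\adv$ draws $N = n^{\Theta(\log n)}$ ciphertext samples and, for every $T \subseteq [n]$ with $|T| \le t+1$, computes the empirical bias $\hat b_T := \frac{1}{N}\sum_j (-1)^{\sum_{i \in T} c^{(j)}_i}$. For uniform inputs $\E[(-1)^{\sum_{i \in T} c_i}] = 0$ for every nonempty $T$, so by Chernoff and a union bound over the $n^{O(\log n)}$ choices of $T$, every $|\hat b_T|$ lies below a threshold $\theta = n^{-\Omega(\log n)}/2$ with high probability. For codewords, $T^\st := (S^\st)' \cup \{i_f\}$ appears in the enumeration and $|\hat b_{T^\st}| \ge \theta$ with high probability by Step~2. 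Accepting iff some $|\hat b_T| > \theta$ gives a distinguisher with non-negligible advantage, running in time $N \cdot n^{O(\log n)} = n^{O(\log n)}$. Note that $\adv$ never needs to know $\sigma$, $\ell$, $f$, or the distribution of $y$: it just enumerates all small subsets, and the ``correct'' one $T^\st$ is guaranteed to be among them while the junk $y$-coordinates can only create additional biases that help $\adv$.

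\textbf{Main obstacle.} The substantive step is Step~1, converting the single scalar inequality on noise stability into the existence of one individual low-degree Fourier coefficient of non-negligible magnitude. The essential tension is that $t$ must be $O(\log n)$ in two senses simultaneously---small enough that $\ell^t = n^{O(\log n)}$ enumeration is affordable, and large enough that the tail $(1-\rho)^t$ is dominated by $2/q$---but these match up because $q = \poly(n)$ and $\rho$ is a constant. The remaining Chernoff/union-bound step is routine as long as one checks that estimating biases to accuracy $n^{-\Omega(\log n)}$ needs only $n^{O(\log n)}$ samples.
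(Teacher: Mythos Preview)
Your proposal is correct and follows essentially the same approach as the paper: convert noise tolerance into a lower bound on the noise-stability sum $\sum_S (1-\rho)^{|S|}\hat F(S)^2$, cut off at degree $t = \Theta(\log n)$ and use Parseval to bound the tail, pigeonhole to extract a single low-degree coefficient of inverse-quasipolynomial magnitude, and then brute-force enumerate all subsets of size $\leq t+1$ to detect the resulting parity bias. Your Step~2 making explicit the bijection between $\hat F(S)$ and the ciphertext parity on $\sigma^{-1}(S)\cup\{i_f\}$, and your remark that the $y$-coordinates can only introduce extra biases (which help rather than hurt the distinguisher), are both slightly more explicit than the paper's writeup but do not differ in substance.
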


\begin{proof}
We'll argue that there is a small set $S$ for which $\hat{f}(S) = n^{-O(\log n)}$.
Estimating $\hat{f}(S)$ for all small $S$ will yield our quasipolynomial-time distinguisher.

First, recall~(see, e.g., \cite{o2003computational}) that
\begin{align*}
    \Pr_{\substack{
x \gets \{-1,1\}^\ell\\
x' \gets N_\rho(x)
}}
[f(x') = f(x)] = 1 - \NS_{\rho/2}(f), 
\end{align*}
where 
\[
    \NS_{\rho/2}(f) = \frac{1}{2} - \frac{1}{2} \sum_{S \subseteq [\ell]} (1-\rho)^{\abs{S}} \hat{f}(S)^2.
\]
Therefore, 
\begin{align}
    \frac{1}{2} \sum_{S \subseteq [\ell]} (1-\rho)^{\abs{S}} \hat{f}(S)^2 \geq 1/q. \label{eq:sum-bounded-by-q}
\end{align}

Now, let $t = \log_{(1-\rho)} \frac{1}{q} = O(\log n)$ and consider breaking the sum up as follows:
\begin{align*}
    \frac{1}{2} \sum_{S \subseteq [\ell]} (1-\rho)^{\abs{S}} \hat{f}(S)^2 &= 
    \frac{1}{2} \sum_{\substack{S \subseteq [\ell]\\ \abs{S} \leq t}} 
    (1-\rho)^{\abs{S}} \hat{f}(S)^2 + \frac{1}{2} \sum_{\substack{S \subseteq [\ell]\\ \abs{S} > t}} (1-\rho)^{\abs{S}} \hat{f}(S)^2\\
    &\leq \frac{1}{2} \binom{\ell}{\leq t} \max_{\substack{S \subseteq [\ell]\\ \abs{S} \leq t}} \hat{f}(S)^2 + \frac{1}{2} (1-\rho)^{\abs{S}} \text{ (Parseval's identity)}\\
    &\leq \frac{1}{2}(\ell^t + 1)\max_{\substack{S \subseteq [\ell]\\ \abs{S} \leq t}} \hat{f}(S)^2 + \frac{1}{2q}.
\end{align*}
In order for this quantity to be at least $1/q$, we must have
\[
\max_{\substack{S \subseteq [\ell]\\ \abs{S} \leq t}} \hat{f}(S)^2 \geq \frac{1}{(\ell^t + 1) \cdot q}, \text{ where } \ell \leq n \text{ and } t = O(\log n).
\]

Recall our task of distinguishing between strings of the form $\pi(x, f(x), y)$ and $r \gets \{0,1\}^n$.
In the latter case, for all $S \subseteq [n]$ and all $i \notin S$, $\E_{r \gets \{-1, 1\}^n}[\chi_S(r) \cdot r_i] = 0$ (mapping $r$ to $\{-1, +1\}^n)$.
In the former case, we just showed that there is a set $S$ of size $O(\log n)$, and an index $i$, such that this expectation is $1/n^{O(\log n)}$.
Therefore, there is a distinguisher running in time $n^{O(\log n)}$ that estimates this expectation for all sets $S$ of size at most $t$, and all $i$.
\end{proof}

\section{Edit-robust pseudorandom codes}\label{sec:edit-robustness}

In this section, we present our main technical consequence of the permuted codes assumption, namely edit-robust PRCs over a binary alphabet with strong adaptive robustness (and subexponential security, assuming such for the permuted codes assumption). In \cref{sec:codes-prelim} we present some preliminaries from coding theory, and in \cref{sec:edit-prelim} we do the same for edit distance. In \cref{sec:main-prc,sec:prc-analysis} we present our main PRC construction and its analysis, and in \cref{sec:prc-watermarking-formal} we show how to derive analogous guarantees for watermarking from our PRC. 

\subsection{Preliminaries on folded Reed-Solomon codes}\label{sec:codes-prelim}
  Consider a field $\BF_q$, and integers $s,n$ with $n \leq q-1$ such that $n$ is divisible by $s$. Let $\gamma$ be a generator of $\BF_q^\star$, and $k \in \BN$.
  \begin{definition}[Reed-Solomon code]
    \label{def:rs}
    The \emph{Reed-Solomon code} $\RS_{\BF_q, n,k}$ is a linear code of dimension $k+1$ and block length $n$ over $\BF_q$.
    It consists of the vectors of evaluations of polynomials of degree at most $k$ with coefficients in $\BF_q$ on a set of distinct elements $x_1, \dots, x_n \in \F_q$:
    \[
        \RS_{\BF_q, n,k} = \{(p(x_1), \dots, p(x_n)) \mid p \in \BF_q[x], \deg(p) \le k\}.
    \]
  \end{definition}
  
  \begin{definition}[Folded Reed-Solomon code]
    \label{def:frs}
The \emph{folded Reed-Solomon code} $\FRS_{\BF_q, \gamma,n, s, k}$ is a code of block length $N := n/s$ over $\BF_q^s$. For a message $p(X)$, a polynomial over $\BF_q$ of degree at most $k$, for $0 \leq j < N$, the $j$th symbol of the encoding is $(p(\gamma^{js}), p(\gamma^{js+1}), \ldots, p(\gamma^{js + s-1}))$. 
\end{definition}

We will need to use the fact that (folded) Reed-Solomon codes have very good \emph{list recovery} algorithms, defined below. 
\begin{definition}[List recovery]
  Fix $\zeta \in [0,1]$ and $\ell, L \in \BN$. A code $C \subset \Sigma^n$ is \emph{$(\zeta, \ell, L)$-list recoverable} if for every sequence of sets $S_1, \ldots, S_n \subset \Sigma$ satisfying $|S_i| \leq \ell$ for all $i \in [n]$, there are at most $L$ codewords $c \in C$ for which $c_i \in S_i$ for at least $\zeta n$ values of $i \in [n]$.

For some $T = T(n)$,  we say that $C$ is \emph{$(\zeta, \ell)$-list recoverable in time $T$} if there is some $L = L(n) \leq O(T(n))$ and a $O(T(n))$-time algorithm which, given $c \in \Sigma^n$, finds the at most $L$ codewords $c'$ for which $c_i' \in S_i$ for at least $\zeta n$ values of $i \in [n]$.
The \emph{agreement threshold} is defined as $\trec = \zeta n$.% In particular, this requires that $L \leq \poly(n)$. 
\end{definition}

A classical fact in coding theory, due to \cite{guruswami1998improved}, is that Reed-Solomon codes have good list recovery algorithms:
\begin{theorem}[\cite{guruswami2025essential}, Theorem 12.3.4]
  \label{thm:rs-listrecov}
  Consider the Reed-Solomon code $\RS_{\BF_q, n,k}$ as above. Then for any $\zeta \in (0,1), \ell \in \BN$ satisfying
  \begin{align}
\zeta n \geq \sqrt{(k-1) \ell n}\nonumber,
  \end{align}
  we can $(\zeta, \ell)$-list recover the code $\RS_{\BF_q, n,k}$ in time $\poly(n)$; that is, there exists an efficient list recovery algorithm with $\trec = \zeta n$. 
\end{theorem}

It is possible to obtained improved list recovery algorithms by considering folded Reed-Solomon codes (which come with the caveat that their alphabet is increased to size $q^s$). In the below theorem, we fix a field $\BF_q$ and integers $n,s,k,\gamma$ parametrizing a folded Reed-Solomon code $\FRS_{\BF_q, \gamma, n,s,k}$ as in \cref{def:frs}; thus its block length is $N = n/s$. 
\begin{theorem}[\cite{guruswami2007explicit}, Theorem 4.4 \& Eq.~(10)]
  \label{thm:frs-listrecov}
  Consider the folded Reed-Solomon code $\FRS_{\BF_q, \gamma, n,s,k}$ as above. Then for any $\zeta \in (0,1), \ell \in \BN$ satisfying
  \begin{align}
\zeta N \geq k \cdot (N\ell)^{1/(s+1)} + 2\nonumber,
  \end{align}
  we can $(\zeta, \ell)$-list recover the code $\FRS_{\BF_q, \gamma, n,s,k}$ in time $N^{O(s)}$. 
\end{theorem}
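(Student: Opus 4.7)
The plan is to adapt the Guruswami--Rudra algorithm for list decoding folded Reed--Solomon codes to the list recovery setting. The overall strategy has the familiar three-step shape: (i) an interpolation step that fits a low-weighted-degree polynomial $Q$ in $s+1$ variables through the received lists, (ii) a vanishing argument showing that any message polynomial $f$ with enough agreement must satisfy the functional equation $Q(X, f(X), f(\gamma X), \ldots, f(\gamma^{s-1}X)) = 0$, and (iii) a root-finding step that extracts all such $f$ over the function field $\BF_q(X)$. The list recovery generalization compared to list decoding is concentrated in the interpolation step, where each position contributes multiple constraints.

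For the interpolation step I would look for a nonzero $Q \in \BF_q[X, Y_1, \ldots, Y_s]$ of $(1,k)$-weighted degree at most $D$, where $X^a Y_1^{b_1}\cdots Y_s^{b_s}$ has weight $a + k(b_1+\cdots+b_s)$, satisfying $Q(\gamma^{js}, y_1, \ldots, y_s) = 0$ for every position $j \in \{0,\ldots,N-1\}$ and every tuple $(y_1, \ldots, y_s) \in S_j$. The number of available monomials is on the order of $D^{s+1}/((s+1)!\,k^s)$, while the number of homogeneous linear constraints is at most $N\ell$, so a nonzero $Q$ can be found by Gaussian elimination provided $D^{s+1} \gtrsim k^s \cdot N\ell \cdot (s+1)!$, i.e.\ $D \gtrsim k (N\ell)^{1/(s+1)}$. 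Now if $f$ has degree $\le k$ and agrees with the lists at $\zeta N$ positions, then $R(X) := Q(X, f(X), f(\gamma X), \ldots, f(\gamma^{s-1}X))$ has ordinary degree at most $D$ (because each $f(\gamma^{i-1} X)$ has degree $\le k$, matched exactly by the weighting on the $Y_i$), and vanishes at $X = \gamma^{js}$ for each agreeing $j$. Thus $\zeta N > D$ forces $R \equiv 0$, and combining with the bound on $D$ recovers the stated threshold $\zeta N \ge k(N\ell)^{1/(s+1)} + O(1)$.

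For the root-finding step, the equation $Q(X, f(X), f(\gamma X), \ldots, f(\gamma^{s-1}X)) = 0$ should be read as a polynomial identity in $\BF_q[X]$, and the task is to enumerate all degree-$\le k$ solutions $f$. The Guruswami--Rudra trick is to exploit the multiplicative shift $X \mapsto \gamma X$, whose finite order organizes solutions into a manageable algebraic structure: one reduces to factoring a univariate polynomial of degree $O(sD)$ over a suitable extension of $\BF_q(X)$, yielding at most $N^{O(s)}$ candidates in time $N^{O(s)}$. Each candidate is then verified against the lists and pruned to those that achieve the claimed agreement. The main obstacle is this root-finding analysis: one must argue that the number of $f$ satisfying the functional equation is only $N^{O(s)}$ rather than exponential in $s$ or $k$, which requires the delicate algebra connecting $Q$ to a factorization over the function-field extension. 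This is the technical heart of the result, and is also where the precise ``$+2$'' in the agreement threshold gets pinned down through careful counting of monomial overheads in the interpolation step.
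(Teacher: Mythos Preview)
The paper does not prove this theorem; it is quoted as an external result from \cite{guruswami2007explicit} and used as a black box, so there is no in-paper proof to compare against. Your sketch correctly identifies the three-step Guruswami--Rudra architecture (interpolate an $(s+1)$-variate $Q$, derive a vanishing functional equation for any sufficiently-agreeing $f$, then root-find), and the list-recovery adaptation you describe---one linear constraint per element of each list $S_j$, for a total of $N\ell$ constraints---is exactly right.

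Two points of imprecision are worth flagging. First, the step ``$D^{s+1} \gtrsim (s+1)!\,k^s N\ell$, i.e.\ $D \gtrsim k(N\ell)^{1/(s+1)}$'' is not an identity: solving gives $D \gtrsim ((s+1)!)^{1/(s+1)} k^{s/(s+1)}(N\ell)^{1/(s+1)}$, which is only dominated by $k(N\ell)^{1/(s+1)}$ once $k \ge (s+1)!$. This is harmless in the regime of interest ($s$ constant, $k$ polynomial), but the simplification is not automatic. Second, your description of root-finding as ``factoring a univariate polynomial of degree $O(sD)$ over a suitable extension'' elides the actual mechanism: the key identity is that $f(\gamma X) \equiv f(X)^q \pmod{X^q - \gamma X}$ for $f \in \BF_q[X]$, which collapses $Q(X, f(X), f(\gamma X), \ldots, f(\gamma^{s-1}X)) = 0$ to a univariate-in-$Y$ equation $Q(X, Y, Y^q, \ldots, Y^{q^{s-1}}) \equiv 0$ over $\BF_q[X]/(X^q - \gamma X)$. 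The $Y$-degree of this equation---hence the list size---is controlled by the total $Y$-degree of $Q$ times $q^{s-1}$, not by $sD$; this is what pins down the $N^{O(s)}$ list and time bounds. Your bottom line is correct, but the mechanism you gesture at is not quite the one that makes it work.
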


\paragraph{Pseudorandomness of permuted FRS codes.} In order to obtain PRCs with a binary alphabet robust to some constant fraction of edits, it will suffice for us to use the permuted codes conjecture for Reed-Solomon codes (\cref{conj:permuted-rs}). However, in order to obtain an analogous \emph{watermarking scheme} that works for language models whose per-token entropy is an arbitrary constant, we shall need to slightly modify our PRC construction, to use \emph{folded} Reed-Solomon codes. As such codes are not linear over their alphabet, we cannot directly apply the permuted codes assumption of \cref{def:permuted-codes} to argue that randomly permuted codewords are pseudorandom. Nevertheless, such codes are linear over $\BF_q$, and they have high dual distance with respect to the linear structure over $\BF_q$, in the sense that for a uniformly random codeword, its marginal distribution on any $k/s$ symbols is uniform. Therefore, we believe the following generalization of \cref{conj:permuted-rs} is plausible for such codes: 
\begin{conjecture}[Permuted FRS conjecture] \label{conj:permuted-frs}
   %  Let $\ME$ be an erasure channel with erasure probability $\Omega(1)$.
    Fix any constant $\eta > 0$. Let $\lambda$ be a security parameter and $C$ be the folded Reed-Solomon code $\FRS_{\F_q,\gamma,n,s,k}$ with
    \begin{itemize}
        \item $\gamma$ any generator of $\F_q$,
        \item $n = \lambda$,
        \item $q = \lambda+1$,
        \item $s = O(1)$ any constant, and
        \item $k = \lambda^{1/(s+1)}$.
    \end{itemize}
   Writing $r = q^s$, we have that the distributions $\MD_{n,[r],C,\eta,T}$ and $\Unif(([r]^{n/s})^T)$ are computationally indistinguishable for any $T = \poly(\lambda).$
\end{conjecture}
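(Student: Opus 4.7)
The statement is a conjecture, so my plan is not to derive an unconditional proof but to accumulate evidence paralleling what is assembled in \Cref{subsection:pp-pc-comp,sec:stat-evidence} for the general permuted codes conjecture. I would organize the evidence along three axes: a reduction to a block-structured variant of the permuted puzzles assumption, statistical indistinguishability for small sample counts at the $\BF_q$ level, and verifying that the algebraic distinguisher of \Cref{thm:rs-dist} is defeated. The underlying fact being leveraged throughout is that $\FRS_{\BF_q,\gamma,n,s,k}$, while not linear over its alphabet $\BF_q^s$, is linear over $\BF_q$ and has $\BF_q$-dual distance $n-k+1 = \lambda^{\Omega(1)}$.

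For the reduction step, I would state a block-structured permuted puzzles conjecture in which the random permutation is drawn from $S_{[n/s] \times \BF_q^s}$ rather than $S_{[n] \times \BF_q}$, then adapt the proof of \Cref{thm:puzzles-from-codes} essentially verbatim. Concretely, with $N = n/s$ and $r = q^s$, I would draw $T = \poly(\lambda) \cdot (Nr)^2$ block samples, use the co-occurrence graph on $[N] \times \BF_q^s$ to cluster symbols sharing a block index, apply fresh $\BF_q^s$-alphabet permutations and re-index, and convert erasures to substitutions by the same binomial-thresholding argument. The block-structured variant would then need its own justification; the cleanest route is to argue that any distinguisher against it yields one against ordinary permuted puzzles on length-$s$ windows of the underlying Reed--Solomon code, using that FRS is just RS with adjacent symbols grouped.

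For the statistical evidence, the Forbes--Kelley decomposition applies at the $\BF_q$ level as in the proof of \Cref{cor:stat-evidence}: group $\BF_q$-symbols (not $\BF_q^s$-symbols) across the $T$ samples so that the relevant ROBP has width $\binom{n+q^T-1}{q^T-1}$, and invoke the $\BF_q$-dual distance $d = \lambda^{\Omega(1)}$ to conclude $\exp(-\Omega(d))$ total variation whenever $q^T = o(d/\log n)$. Because $q = \Theta(\lambda)$ here, this only yields $T = O(1)$ samples of statistical closeness, which is weaker than in the constant-alphabet case of \Cref{cor:stat-evidence} but still meaningful evidence against low-complexity distinguishers. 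I would separately check that the algebraic attack of \Cref{thm:rs-dist} is blocked: the $\BF_q^s$-alphabet permutation destroys the polynomial structure the attack exploits, since after the permutation one can no longer identify a column of $X^r$ with the evaluation of a low-degree polynomial at a common point.

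The principal obstacle is that FRS is \emph{not} linear over its alphabet $\BF_q^s$, so both the permuted puzzles reduction and the Forbes--Kelley argument must be executed at the $\BF_q$ level and then lifted to the block structure; neither lift is automatic, and the block-structured permuted puzzles assumption obtained along the way is a genuinely new cryptographic target that would deserve further cryptanalysis. A secondary obstacle is that $r = q^s$ is polynomial in $\lambda$, so the statistical bound is nontrivial only for constantly many samples; the full computational indistinguishability for $T = \poly(\lambda)$ must therefore rest entirely on the reduction to the block-structured permuted puzzles conjecture, with the statistical and anti-algebraic evidence serving only as corroborating sanity checks.
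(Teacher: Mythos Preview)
The paper does not attempt to prove \cref{conj:permuted-frs}: it is stated purely as an assumption, with the only justification being the single sentence preceding it, namely that the folded Reed--Solomon code is linear over $\BF_q$ and inherits polynomial dual distance from the underlying Reed--Solomon code, so the spirit of \cref{conj:permuted-codes} should carry over. Your proposal to assemble a structured body of evidence is therefore more ambitious than anything the paper does for this statement, and in that sense is the right instinct.

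That said, there is a concrete error and a resulting gap in your statistical-evidence step. You assert that the $\BF_q$-dual distance of $\FRS_{\BF_q,\gamma,n,s,k}$ is $n-k+1$; that is the \emph{primal} distance of the underlying Reed--Solomon code, not its dual distance. The dual of $\RS_{\BF_q,n,k}$ has minimum distance $\Theta(k)$, so the $\BF_q$-dual distance here is only $\Theta(k) = \lambda^{1/(s+1)}$. With the corrected value and $q = \Theta(\lambda)$, the condition $q^T = o(d/\log n)$ from \cref{cor:stat-evidence} becomes $\lambda^T = o(\lambda^{1/(s+1)}/\log\lambda)$, which fails already at $T=1$; the Forbes--Kelley bound $nTw\cdot(1-\eta)^{d/2}$ is vacuous because $(1-\eta)^{d/2} = \exp(-\Theta(\lambda^{1/(s+1)}))$ cannot beat a width that is at least $\binom{2\lambda}{\lambda}$. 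So this route does not yield even a single sample of statistical closeness, not the ``$T = O(1)$'' you claim. There is also a structural mismatch you did not address: the permutation in $\MD_{N,[r],C,\eta,T}$ acts on $N = n/s$ blocks with arbitrary $S_{\BF_q^s}$-alphabet permutations, not on $n$ individual $\BF_q$-symbols, so the distinguisher need only be invariant under block permutations, and the relevant ROBP width is governed by $r^T$, not $q^T$.

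Your reduction to a block-structured permuted-puzzles variant is the most defensible leg of the program, and the mechanics of adapting \cref{thm:puzzles-from-codes} are indeed straightforward. But the further step you sketch, reducing the block-structured variant back to ordinary permuted puzzles on ``length-$s$ windows,'' is not obviously sound: the block alphabet permutation in $S_{\BF_q^s}$ is strictly richer than $s$ independent elements of $S_{\BF_q}$, so a distinguisher for the block problem does not automatically yield one for symbol-level permuted puzzles. What you would actually obtain is a new block-level conjecture standing on its own, which is honest evidence but not a reduction to \cref{conj:perm-puzz} as stated.
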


\subsection{Preliminaries on edit distance}\label{sec:edit-prelim}
Below we present some basic preliminaries regarding edit-distance bounded channels. 
\begin{definition}
Given strings $z,z' \in \Sigma^n$, the \emph{edit distance} between $z$ and $z'$, denoted $\ED(z,z')$, is defined as the minimum number of insertions and deletions needed to transform $z$ into $z'$. 
\end{definition}

The Hamming distance is denoted $\Ham(z,z')$.

\begin{definition}
  Given $z \in \Sigma^n$ and a real number $p \in [0,1]$, we let
  \begin{align}
\EDball(z, p) := \{ z' \in \Sigma^\star \ : \ \ED(z,z') \leq p \cdot |z| \}\nonumber
  \end{align}
  denote the edit distance ball of radius $p \cdot |z|$ centered at $z$. %  and
%   \begin{align}
% \LEDball(z,d) := \EDball(z,d) \cap \Sigma^n\nonumber
%   \end{align}
%   denote the intersection of the edit distance ball and the strings of length $n = |z|$.
  Given real numbers $\pham,\pedit \in [0,1]$, we let
  \begin{align}
\SEDball(z,\pham,\pedit) := \{ z' \in \Sigma^\star \ : \ \exists z'' \in \Sigma^\star, \ \Ham(z,z'') \leq \pham \cdot |z|, \ED(z'', z') \leq \pedit \cdot |z| \}\nonumber
  \end{align}
  denote the set of strings into which $z$ can be transformed by first making $\pham \cdot |z|$ substitutions and then making $\pedit \cdot |z|$ edits. 
\end{definition}

\begin{definition}[Edit-bounded channel]
  Fix real numbers $\epham, \epedit \in [0,1]$. A mapping (i.e., ``channel'') $\ME : \Sigma^\st \to \Sigma^\st$ is defined to be \emph{$\epedit$-edit bounded} if for all $z \in \Sigma^\st$, $\ME(z) \in \EDball(z, \epedit)$. We denote the set of $\epedit$-edit bounded channels by $\Eedit_{\epedit}$.

  A channel $\ME : \Sigma^\st \to \Sigma^\st$ is defined to be \emph{$(\epham,\epedit)$-substitution-edit bounded} if for all $z \in \Sigma^\st$, $\ME(z) \in \SEDball(z, \epham, \epedit)$. We denote the set of $(\epham, \epedit)$-substitution-edit bounded channels by $\Ehamedit_{\epham,\epedit}$. 
\end{definition}

\begin{lemma}
  \label{lem:ed-ball-bound}
For any $z \in \Sigma^n$ and $d \in \BN$, $|\EDball(z,d/n)| \leq  \left( \frac{e(n+d) \cdot (|\Sigma| + 1)}{d} \right)^{d} $. In particular, if $d = \ep n$ for some $\ep \in (0,1)$, then we have $|\EDball(z,\ep)| \leq (2e (|\Sigma| + 1))^{n \ep \log(1/\ep)}$. 
\end{lemma}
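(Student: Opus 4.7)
The plan is to exhibit a surjection from a set of encodings of size $\binom{n+d}{d}(|\Sigma|+1)^d$ onto $\EDball(z,d)$ and then invoke the standard bound $\binom{n+d}{d} \leq (e(n+d)/d)^d$. An encoding will consist of a subset $S \subseteq [n+d]$ of $d$ ``marked'' positions together with a label from $\Sigma \cup \{*\}$ for each position in $S$; the total number of encodings is thus $\binom{n+d}{d}(|\Sigma|+1)^d$.

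I will decode an encoding into a string $z'$ by scanning positions $1, 2, \dots, n+d$ in order while maintaining a pointer into $z$ initialized to $1$: at an unmarked position, output $z_{\mathrm{pointer}}$ and advance the pointer if it is still in $[n]$; at a position marked $*$, advance the pointer if it is still in $[n]$; and at a position marked $c \in \Sigma$, output $c$ without touching the pointer. By design, unmarked or $*$-marked positions encountered after the pointer leaves $[n]$ are no-ops.

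The key step is to show every $z' \in \EDball(z,d)$ arises from some encoding. Given such a $z'$ realized by an edit script with $i$ insertions and $j$ deletions ($i+j \leq d$), I will place marks in the first $n+i$ walk positions directly following the edit script (a $*$-mark at each delete step and a $c$-mark at each insert step). A straightforward induction shows that the pointer reaches $n+1$ exactly at the end of position $n+i$, since it advances once per keep step and once per $*$-mark. I will then fill the trailing $d-i$ walk positions with $d-i-j$ additional $*$-marks and $j$ unmarked positions (in any arrangement); all of these are no-ops, so the decoded string remains $z'$ while the total number of marked positions is now exactly $d$.

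Combining the surjection with $\binom{n+d}{d} \leq (e(n+d)/d)^d$ yields the main bound. For the ``in particular'' clause, I will substitute $d = \ep n$ and use $1 + \ep \leq 2$ to obtain $\left( 2e(|\Sigma|+1)/\ep \right)^{\ep n}$, then absorb the $1/\ep$ factor into the exponent via the elementary inequality $1/\ep \leq (2e(|\Sigma|+1))^{\log(1/\ep) - 1}$ (which holds for $\ep$ bounded away from $1$, the regime of interest in subsequent applications), yielding $\left( 2e(|\Sigma|+1) \right)^{n\ep \log(1/\ep)}$. The only subtle step in the whole argument is verifying that the trailing marks in the encoding do not spuriously alter the output or the pointer, which is handled cleanly by the ``no-op past the end of $z$'' convention built into the decoding rule.
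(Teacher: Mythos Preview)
Your proposal is correct and follows essentially the same approach as the paper: both arguments bound $|\EDball(z,d)|$ by $\binom{n+d}{d}(|\Sigma|+1)^d$ via an encoding of edit scripts, then apply $\binom{n+d}{d}\le (e(n+d)/d)^d$. Your treatment is considerably more careful than the paper's two-line sketch—you explicitly build the surjection, verify the pointer reaches $n+1$ exactly at position $n+i$, and handle the padding to reach exactly $d$ marks—and you are more honest about the ``in particular'' clause, noting that the final inequality requires $\ep$ bounded away from $1$ (indeed, a short computation shows one needs roughly $\ep \lesssim 2^{-\log A/(\log A -1)}$ with $A=2e(|\Sigma|+1)$, which the paper's applications respect but which the lemma as stated does not).
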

\begin{proof}
  There are at most ${n+d \choose d}$ ways to choose the positions at which to make insertions and deletions; for each one, one can either make a deletion or an insertion of any of $|\Sigma|$ characters. Overall, the number of ways to make a sequence of $n+d$ edits is thus bounded above by
  \begin{align}
{n+d \choose d} \cdot (|\Sigma| + 1)^{d} \leq \left( \frac{e(n+d) \cdot (|\Sigma| + 1)}{d} \right)^{d} \nonumber,
  \end{align}
  which, in the case that $d = \ep n$ for some $\ep \in (0,1)$, may be bounded above by $(2e (|\Sigma| + 1))^{n \ep \log(1/\ep)}$.
\end{proof}

\begin{lemma}
  \label{lem:sed-ball-bound}
  For any $z \in \{0,1\}^n$ and $p, \ep \in (0,1)$, we have $|\SEDball(z, 1/2 -p, \ep)| \leq 2^{n \cdot (1 - p^2 + \log(6e) \ep \log(1/\ep))}$.
\end{lemma}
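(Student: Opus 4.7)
The plan is a straightforward two-step volume argument: first bound the number of possible ``intermediate'' strings $z''$ reachable from $z$ by at most $(1/2-p)n$ substitutions, then bound the number of strings reachable from each such $z''$ by at most $\epedit n$ edits, and multiply.

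First, observe that by the definition of $\SEDball(z,\pham,\pedit)$, we have the inclusion
\begin{align*}
\SEDball(z, 1/2 - p, \ep) \;\subseteq\; \bigcup_{z'' \in \mathcal{B}_{\Ham}(z, (1/2-p)n)} \EDball(z'', \ep),
\end{align*}
where $\mathcal{B}_{\Ham}(z, r)$ denotes the Hamming ball of radius $r$ around $z$ in $\{0,1\}^n$. A union bound then yields
\begin{align*}
|\SEDball(z, 1/2 - p, \ep)| \;\leq\; |\mathcal{B}_{\Ham}(z, (1/2-p)n)| \cdot \max_{z'' \in \{0,1\}^n} |\EDball(z'', \ep n)|.
\end{align*}

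Second, I would bound each of the two factors. For the Hamming ball, the classical volume bound gives $|\mathcal{B}_{\Ham}(z, (1/2-p)n)| \le 2^{n \cdot H(1/2-p)}$ where $H$ is the binary entropy function. By the standard Pinsker-type inequality $H(1/2 - p) \le 1 - 2p^2 \le 1 - p^2$ (valid for all $p \in (0,1/2]$), this is at most $2^{n(1 - p^2)}$. For the edit-distance ball, I would apply \cref{lem:ed-ball-bound} with $|\Sigma| = 2$ and $d = \ep n$, which yields $|\EDball(z'', \ep n)| \le (2e \cdot 3)^{n \ep \log(1/\ep)} = 2^{n \log(6e) \cdot \ep \log(1/\ep)}$.

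Multiplying the two bounds gives
\begin{align*}
|\SEDball(z, 1/2 - p, \ep)| \;\leq\; 2^{n(1-p^2)} \cdot 2^{n \log(6e) \cdot \ep \log(1/\ep)} \;=\; 2^{n \cdot (1 - p^2 + \log(6e) \ep \log(1/\ep))},
\end{align*}
as desired. There is no real obstacle here; the only mildly nontrivial ingredient is the entropy bound $H(1/2 - p) \le 1 - p^2$, which is immediate from Pinsker's inequality applied to $D(1/2 - p \,\|\, 1/2)$. All other steps are direct applications of prior lemmas and the definitions.
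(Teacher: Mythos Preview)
Your proposal is correct and follows essentially the same approach as the paper: bound the Hamming ball volume by $2^{nH(1/2-p)} \le 2^{n(1-p^2)}$, bound each edit ball via \cref{lem:ed-ball-bound} with $|\Sigma|=2$, and multiply. The paper's proof is identical in structure and brevity, differing only in that it does not spell out the Pinsker justification for $H(1/2-p)\le 1-p^2$.
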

\begin{proof}
  The number of $z''$ for which $\Ham(z, z'') \leq (1/2 -p) \cdot n$ may be bounded above by
$
2^{n \cdot h_2(1/2 - p)} \leq 2^{n \cdot (1 - p^2)},
$
where $h_2(\cdot)$ denotes the binary entropy. Using \cref{lem:ed-ball-bound} with $|\Sigma| = 2$ we see that
\begin{align}
|\SEDball(z, (1/2 - p), \ep)| \leq 2^{n\cdot (1-p^2)} \cdot (6e)^{n \ep \log(1/\ep)} = 2^{n \cdot (1 - p^2 + \log(6e) \ep \log(1/\ep))}\nonumber.
\end{align}
\end{proof}

\begin{algorithm}
  \caption{Edit-robust PRC with binary alphabet}
  \label{alg:prc-edit-binary}
  \begin{algorithmic}[1]\onehalfspacing
    \Require Functions $q = q(\lambda), n = n(\lambda), k = k(\lambda), C = C(\lambda) \subset [q]^n$ denoting alphabet size, block length, code dimension, and the code, respectively (all a function of the security parameter $\lambda$). We assume $n(\lambda), q(\lambda)$ are powers of 2 for all $\lambda$. \emph{Additional parameters:} $m, \pDec, \epDec, \Lmax, \eta$, and a list recovery algorithm $\ListRecov$ for $C$ with agreement threshold $\trec$.  
    \Function{$\KeyGen$}{$1^\lambda$}
    \State \emph{Write $n = n(\lambda)$.}
    \State Let $\sigma \gets \Unif(S_{[n]}), \pi_1, \ldots, \pi_n \gets \Unif(S_{[q]})$ be chosen uniformly at random. 
    \State Choose $o \sim \Unif([q]^n)$. 
    \State \Return $(\sigma, \pi_1, \ldots, \pi_n, o)$. % \noah{need additional components if want $1/2 - p$-robustness to oblivious channel.}
    \EndFunction

    \Function{$\Enc$}{$1^\lambda, \sk$}
    \State Write $\sk = (\sigma, \pi_1, \ldots, \pi_n, o)$. 
    \State Draw $c \gets \Unif(C)$ and write $c' \gets \channSC_\eta(c)$. \emph{($\channSC_\eta$ is substitution channel w/ noise rate $\eta$.)}\label{line:draw-x-e}
    \State Set $z \gets c' + o \pmod{q} \in [q]^{n}$.\label{line:draw-z}
    \State Sample i.i.d.~indices $i_1, \ldots, i_m \gets \Unif([n])$.\label{line:draw-w}
    \For{$j = 1, 2, \ldots, m$}
    \If{$i_j \neq i_{j'} \forall j' < j$}
    \State Define $z_j' := \pi_{i_j}(z_{\sigma(i_j)})$.
    \Else
    \State Sample $z_j' \gets \Unif([q])$.
    \EndIf
    \EndFor
    \State Let $y$ denote the concatenation: $\bin(i_1)\circ \bin(z_1')\circ \bin(i_2)\circ \bin(z_2')\circ \ldots\circ \bin(i_m)\circ \bin(z_m')$.
    \State \Return $y \in \{0,1\}^{m \cdot (\log q + \log n)}$. 
    \EndFunction

    \Function{$\Dec$}{$1^\lambda, \sk, y$}
    \State Write $\sk = (\sigma, \pi_1, \ldots, \pi_n, o)$, $\ell := \log q + \log n$.
    \State For each $j \in [n]$, initialize $\ML_j \gets \emptyset$. 
    \For{$1 \leq j \leq |y| - \ell$}\label{line:for-inds}
%     \State \textbf{Option 1:} For each pair $(i, z) \in [n] \times [q]$ % and $\ell_n, \ell_q \geq 0$ satisfying %
%     for which the below holds, add $\pi_i^{-1}(z)$ to $\ML_{\sigma(i)}$: % there are positive integers $\ell_n, \ell_q \in \BN$ for which
%     \begin{align}
%  \ED(\bin(i) \circ \bin(z), y_{j:j+\ell- 1}) \leq & \epDec \cdot \ell\nonumber.
% \ED(\bin(z), y_{j+\log(n):j + \log(n) + \log(q) - 1}) \leq & \epDec \cdot \ell,\nonumber
% \end{align}
% add $\pi_i^{-1}(z)$ to $\ML_{\sigma(i)}$. \label{line:recover-inds}
\State \label{line:recover-inds}For each pair $(i,z) \in [n] \times [q]$ for which the below holds, add $\pi_i^{-1}(z)$ to $\ML_{\sigma(i)}$:
\begin{align}
y_{j:j+\ell-1} \in \SEDball(\bin(i) \circ \bin(z), 1/2-\pDec, \epDec)\nonumber.
\end{align}
\EndFor
\State For each $i \in [n]$ with $|\ML_i| \geq \Lmax$, remove $|\ML_i| - \Lmax$ arbitrary elements of $\ML_i$.\label{line:remove-list-items}
\If{$\ListRecov(\trec, (\ML_i - o_i)_{i \in [n]}) \neq \emptyset$} \emph{($\ML_i - o_i$ indicates shifting $\ML_i$ by $-o_i$.)}\label{line:find-close-codewords}
\State \Return True
\Else
\State \Return False
\EndIf % there is a codeword $y^\st \in \MC_n$ satisfying $y^\st_i + o_i \in \ML_i$ for at least $t$ values of $i \in [n]$, then \Return $\text{True}$. Otherwise, \Return $\text{False}$. 
    \EndFunction
  \end{algorithmic}

\end{algorithm}

\subsection{Our PRC construction: \cref{alg:prc-edit-binary}}
\label{sec:main-prc}
\cref{alg:prc-edit-binary} shows our main PRC construction. It takes as input a code $C(\lambda) \subset [q(\lambda)]^{n(\lambda)}$ (for a security parameter $\lambda$), as well as additional parameters $\pDec, \epDec, \Lmax, \eta, \trec$, as discussed further below. The key generation algorithm $\KeyGen(1^\lambda)$ takes as input the security parameter $\lambda$ and returns a key $\sk = (\sigma, \pi_1, \ldots, \pi_n, o)$ consisting of permutations $\sigma, \pi_1, \ldots, \pi_n$ used to permute codewords, as well as a one-time pad $o$ (which is used to ensure soundness of the PRC). 

The encoding algorithm $\Enc(1^\lambda, \sk)$ draws a uniformly random codeword from $C$, and then permutes it according to $\sk$ and adds noise $\eta$ in a way consistent with the permuted codes assumption (\cref{conj:permuted-codes}). Finally, it outputs a subset of the symbols in the resulting (permuted and noisy) codeword, where each symbol is accompanied by its index, and both are written in binary.

Finally, the decoding algorithm $\Dec(1^\lambda, \sk, y)$ takes as input the key $\sk$ and a string $y$ and applies a list recovery algorithm (using the parameters $\Lmax, \trec, \pDec, \epDec$) to determine if $y$ is close to some codeword in $C(\lambda)$. We refer the reader to \cref{sec:edit-consequences} for further intuitive explanations regarding the design of the encoding and decoding algorithms.

\paragraph{Parameter settings.} We consider two distinct settings of parameters for our edit-robust pseudorandom codes. The first setting of parameters is used to obtain the guarantee of strong adaptive robustness with respect to any $\epedit$-edit bounded channel:
\begin{definition}[Edit parameters]
  \label{def:edit-params}
  Given $\epedit > 0$, we define the following parameters of \cref{alg:prc-edit-binary}, so as to obtain robustness to $\epedit$-edit distance bounded channels:
\begin{itemize}
\item For a security parameter $\lambda$, we choose the parameters of the code as follows:
  \begin{align}
    n(\lambda) = \lambda,\quad q(\lambda) = \lambda-1, \quad k(\lambda) = n^{1/5}\nonumber.
  \end{align}
  To simplify notation, we write $q = q(\lambda), n = n(\lambda), k = k(\lambda)$, and we let $\gamma = \gamma(\lambda)$ denote a generator of $\BF_q$. We next define the code $C = C(\lambda)$ by $ C := \RS_{\BF_q, n,k}$.
\item $m = 4n^{4/5}$.
\item $\CDec = 16$.
\item $\eta = 1/32$.
\item $\Lmax = n^{2/5}$.
\item $\epDec = 2\epedit \CDec$, $\pDec = 1/2$. 
  \item $\trec = \sqrt{k\Lmax n}$. 
  \item $\ListRecov$ is the algorithm described in~\Cref{thm:rs-listrecov}.
\end{itemize}
\end{definition}
The second setting of parameters is used to obtain the guarantee of strong robustness with respect to any $(1/2 - \pSub,\epedit)$-substitution-edit bounded channel, which will in turn be needed for our watermarking application in the \cref{sec:prc-watermarking-formal}. % In particular, for any fixed constant $\pSub \in (0,1/2)$, we define the following parameters (including $\epedit$): % fix any $\epedit \leq c_0 \cdot \frac{\pSub^2}{\log(1/\pSub)}$, where $c_0$ is a universal constant chosen sufficiently small so that $\ep
\begin{definition}[substitution-edit parameters]
  \label{def:hamming-edit-params}
Given $\pSub \in (0,1/2)$, we define the following parameters for \cref{alg:prc-edit-binary}, so as to obtain robustness to $(1/2-\pSub, \epedit)$-substitution-edit distance bounded channels, for an appropriate choice of $\epedit$ (specified below): 
\begin{itemize}
\item $s = 8/\pSub^2$ (representing the folding parameter for the FRS code).
\item For a security parameter $\lambda$, we choose the parameters of the code as follows:
  \begin{align}
    \bar n(\lambda) = \lambda,\quad \bar q(\lambda) = \lambda-1, \quad k(\lambda) = n(\lambda)^{1/(s+1)}, \quad n(\lambda) = \bar n(\lambda)/s, \quad q(\lambda) = \bar q(\lambda)^s\nonumber.
  \end{align}
  To simplify notation, we write $q = q(\lambda) n = n(\lambda), k = k(\lambda)$, and we let $\gamma = \gamma(\lambda)$ denote a generator of $\BF_q$. We next define the code $C = C(\lambda)$ by $ C := \FRS_{\BF_q,\gamma, n,s,k}$.
  Above $\bar n(\lambda), \bar q(\lambda)$ should be interpreted as the block length and alphabet size of the underlying Reed-Solomon code. 
  % (Here $\bar n(\lambda)$ denotes the block length of the underlying Reed-Solomon code which is folded to obtain the FRS code.)
\item $m = n^{s/(s+1)}$. 
\item $\CDec = 16/\pSub $. 
\item $\eta = \pSub/32$. 
\item $\Lmax = n^{s-3}$.
\item $\epedit = c_0 \cdot \pSub^3 \log(1/\pSub)$, where $c_0$ is a universal constant chosen sufficiently small so that $\epedit \log(1/\epedit) \log(6e) \leq \pSub^3/32$.
\item $\pDec = \pSub/2$, $\epDec = 2\epedit \CDec$. By our choice of $\epedit$ and $c_0$ above, it follows that $\epDec \log(6e) \log(1/\epDec) \leq \pSub^2/2$.
\item $\trec = k \cdot (n\Lmax)^{1/(s+1)}+2$. 
\end{itemize}
\end{definition}

\subsection{Analysis of \cref{alg:prc-edit-binary}}
\label{sec:prc-analysis}
In this section, we will establish the following theorem which shows that the PRC of \cref{alg:prc-edit-binary} obtains all of our desired properties:
\begin{theorem}[Main edit-robust PRC result]
\label{thm:prc-main}
Fix any $\epedit \leq 1/400$. Then the PRC construction of \cref{alg:prc-edit-binary} with the parameter settings of \cref{def:edit-params} is undetectable, sound, and has strong adaptive robustness to any $\epedit$-edit bounded channel, where undetectability relies on \cref{conj:permuted-rs}. Moreover, the constituent algorithms $\KeyGen, \Enc, \Dec$ all run in $\poly(\lambda)$ time, where $\lambda$ denotes the security parameter. 
\end{theorem}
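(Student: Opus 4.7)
My plan decomposes the proof into four steps, three of which (polynomial-time operation, undetectability, and soundness) are reasonably routine. Polynomial-time operation is immediate by inspection, noting that the inner loop of $\Dec$ enumerates $O(|y|)$ substrings, each with a $\poly(\lambda)$-size edit ball, before invoking the $\poly(\lambda)$-time Reed-Solomon list recovery of \cref{thm:rs-listrecov}. For undetectability, I view $\Enc(1^\lambda, \sk)$ as a deterministic function $F(Z, i_{1:m}, o)$ of one sample $Z$ from $\MD_{n, \BF_q, C, \eta, 1}$, the uniform indices $i_{1:m}$, and the one-time pad $o$; when $Z$ is drawn instead from $\Unif(\BF_q^n)$, the same $F$ outputs the uniform distribution on $\{0,1\}^{m\ell}$. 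Any polynomial-time PRC distinguisher therefore violates \cref{conj:permuted-rs}, and subexponential security follows analogously. For soundness, fix any $y \in \{0,1\}^\star$: each element of $\ML_i - o_i$ is uniform in $\BF_q$ over the randomness of $\pi_i$ and $o$, so for any fixed codeword $c$, $\Pr_\sk[c_i \in \ML_i - o_i] \leq \Lmax/q$ independently across $i$. A Chernoff bound gives $\Pr_\sk[\text{agreements} \geq \trec] \leq \exp(-\Omega(\trec))$ since $n\Lmax/q = O(n^{2/5}) \ll \trec = n^{4/5}$, and a union bound over the $q^{k+1} = \exp(O(n^{1/5}\log n))$ Reed-Solomon codewords yields a negligible soundness error.

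The bulk of the work is strongly adaptive robustness. Let $y = \Enc(\sk)$, parsed as $m$ length-$\ell$ blocks $B_j = \bin(i_j) \circ \bin(z_j')$, and let $y' = \ME(y)$ for $\ME \in \Eedit_{\epedit}$ (possibly depending on $\sk$). The optimal edit alignment between $y$ and $y'$ uses at most $\epedit m \ell$ operations, so by Markov at most $m/\CDec$ blocks suffer local alignment cost exceeding $\CDec \epedit \ell$. For each of the $\geq (1-1/\CDec) m$ remaining ``well-aligned'' blocks $j$, the aligned image of $B_j$ in $y'$ has edit distance $\leq \CDec \epedit \ell$ from $B_j$, and a length-$\ell$ substring of $y'$ differs by at most another $\CDec \epedit \ell$ (padding/truncation), giving total edit distance $\leq 2\CDec\epedit\ell = \epDec\ell$; hence the inner loop of $\Dec$ adds $\pi_{i_j}^{-1}(z_j')$ to $\ML_{\sigma(i_j)}$ pre-trimming. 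Combining with the independent events that position $\sigma(i_j)$ is unperturbed by $\channSC_\eta$ (probability $1-\eta$) and that $i_j$ is a unique draw (overwhelming probability since $m^2/n = o(m)$), concentration implies at least $(1-1/\CDec)(1-\eta) m - o(m) \geq 3n^{4/5}$ ``good'' blocks contribute a correct entry $c_{\sigma(i_j)} \in \ML_{\sigma(i_j)} - o_{\sigma(i_j)}$. To handle trimming, I bound the total entries added across all lists by $|y'| \cdot |\EDball(\cdot, \epDec \ell) \cap \{0,1\}^\ell| \leq m \ell \cdot (nq)^{O(\epedit \log(1/\epedit))}$, which for $\epedit \leq 1/400$ is $o(n \Lmax)$; hence at most $o(n)$ lists overflow, and even an adversary who uses knowledge of $\sigma$ to direct overflows at good-block indices can destroy at most $o(n)$ correct entries. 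The remaining $\geq 3n^{4/5} - o(n) > \trec$ correct positions suffice for \cref{thm:rs-listrecov} with agreement threshold $\trec = \sqrt{k \Lmax n}$ to return a nonempty list containing $c$, so $\Dec$ outputs True.

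The main obstacle is the trimming analysis in the last step. Because the strongly adaptive adversary knows $\sigma$, she can try to inflate specific lists $\ML_i$ by crafting $y'$ so that many length-$\ell$ substrings edit-match strings of the form $\bin(\sigma^{-1}(i)) \circ (\cdot)$ for a chosen $i$, thereby causing the arbitrary trimming step to evict the correct entry. Showing that this attack destroys only $o(n)$ correct entries requires tight control on the edit-ball factor $|\EDball(\cdot, \epDec\ell) \cap \{0,1\}^\ell|$, which forces $\epedit$ to be a small enough constant to keep this factor at a sub-polynomial $(nq)^{o(1)}$ correction. The stated threshold $\epedit \leq 1/400$ is the regime in which the aggregate list-size budget $|y'| \cdot (nq)^{O(\epedit \log(1/\epedit))}$ remains $o(n \Lmax)$; all other ingredients (the permuted-codes reduction for undetectability, Chernoff-plus-union-bound for soundness, and Reed-Solomon list recovery for the final recovery step) are standard once this quantitative estimate is in place.
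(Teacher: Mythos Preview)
Your proposal is correct and takes essentially the same approach as the paper, which proves the theorem by delegating to \cref{lem:edit-robustness} (robustness via the Markov-on-edit-budget argument and list-size control you describe), \cref{lem:soundness} (a direct counting argument over the randomness of $o$, where your Chernoff-plus-union-bound variant is an equally valid alternative), and \cref{lem:edit-undetectability} (the same reduction to \cref{conj:permuted-rs} you outline).

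One quantitative slip to fix: you write that at most $o(n)$ lists overflow and conclude ``remaining $\geq 3n^{4/5} - o(n) > \trec$'', but since $m = 4n^{4/5} = o(n)$ this inequality is vacuous as stated. What you actually need---and what your own total-list-size bound delivers---is that the number of overflowing lists is at most $m\ell \cdot (nq)^{O(\epedit \log(1/\epedit))}/\Lmax = o(m)$, not merely $o(n)$; with that sharper bound the subtraction $3n^{4/5} - o(m) > \trec = n^{4/5}$ goes through. The paper handles this identically in \cref{lem:list-preserve}, expressing the overflow term as $m \cdot 2\ell(nq)^{\epDec\log(6e)\log(1/\epDec)}/\Lmax$ and verifying the coefficient of $m$ is a small constant.
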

\begin{proof}
Strong adaptive robustness to $\epedit$-edit bounded channels follows from \cref{lem:edit-robustness}. Soundness follows from \cref{lem:soundness}. Undetectability, under \cref{conj:permuted-rs}, follows from \cref{lem:edit-undetectability}.
\end{proof}
Additionally, to obtain a watermarking scheme in \cref{sec:prc-watermarking-formal}, we will need the following analogous theorem which applies to any substitution-edit bounded channel with appropriate parameters:
\begin{theorem}
\label{thm:prc-hamming-edit}
Fix any $\pSub \in (0,1/2)$. Then the PRC construction of \cref{alg:prc-edit-binary} with the parameter settings of \cref{def:hamming-edit-params} is undetectable, sound, and has strong robustness to any $(1/2 - \pSub, \epedit)$-substitution-edit bounded channel, for $\epedit = \tilde\Theta(\pSub^3)$, where undetectability relies on \cref{conj:permuted-frs}. Moreover, the constituent algorithms $\KeyGen, \Enc$ run in $\poly(\lambda)$ time,\footnote{In particular, this $\poly(\lambda)$ does not depend exponentially on $1/\pSub^2$.} and $\Dec$ runs $\lambda^{O(1/\pSub^2)}$ time, where $\lambda$ denotes the security parameter.
\end{theorem}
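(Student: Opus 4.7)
The plan is to mirror the three-part structure of the proof of \cref{thm:prc-main} by establishing analogous lemmas for undetectability, soundness, and robustness, and then combining them. Two key differences drive the adaptations: first, undetectability now rests on \cref{conj:permuted-frs} rather than \cref{conj:permuted-rs} because we instantiate $C$ with the folded Reed-Solomon code $\FRS_{\BF_q,\gamma,n,s,k}$; second, the error model is now an $\Ehamedit_{1/2-\pSub,\epedit}$ channel rather than a pure $\Eedit_{\epedit}$ channel, which forces us to replace the edit-ball estimate by the substitution-edit-ball bound of \cref{lem:sed-ball-bound} and to replace the Reed-Solomon list recovery guarantee by the folded analogue \cref{thm:frs-listrecov}.

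Undetectability is the most direct piece. The encoding procedure $\Enc(1^\lambda,\sk)$ can be described as a randomized function $F$ applied to a single sample from $\MD_{n,[q],C,\eta,1}$ with $C = \FRS_{\BF_q,\gamma,n,s,k}$: sample indices $i_1,\ldots,i_m$, extract the corresponding symbols of that sample, and write each $(i_j, z'_j)$ in binary. Feeding $F$ a uniform sample from $[q]^n$ instead yields exactly $\Unif(\{0,1\}^{m(\log q + \log n)})$, by independence of coordinates. Any distinguisher against the PRC therefore yields one against \cref{conj:permuted-frs}, completing the argument.

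Soundness proceeds essentially as in the Reed-Solomon case: for any fixed $y$, the one-time pad $o$ randomizes the lists $(\ML_i - o_i)_{i\in[n]}$ passed to $\ListRecov$, and a counting argument together with \cref{thm:frs-listrecov} shows that with overwhelming probability no FRS codeword achieves agreement $\trec = k\cdot(n\Lmax)^{1/(s+1)} + 2$ with these randomly shifted lists.

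The robustness claim is the main obstacle and carries all of the parameter juggling. Fix $\ME \in \Ehamedit_{1/2-\pSub,\epedit}$, let $y \gets \Enc(\sk)$, and let $y' := \ME(\sk,y)$. The argument has three sub-steps. (i) \emph{Enough encoded pairs survive.} Write $y$ as the concatenation of $m$ length-$\ell$ blocks (with $\ell = \log q + \log n$). A global double-counting argument over the $(1/2-\pSub)n$-scale substitution budget and $\epedit \cdot |y|$-scale edit budget imposed by $\ME$ shows that for all but an $o(1)$ fraction of indices $j$, some contiguous length-$\ell$ substring of $y'$ lies in $\SEDball(\bin(i_j)\circ\bin(z'_j),\,1/2-\pDec,\,\epDec)$; this uses that $\pDec = \pSub/2$ absorbs the per-block substitution rate and that $\epDec = 2\epedit\CDec$ absorbs the per-block edit rate by a standard averaging. (ii) \emph{Correct symbols enter the right lists.} For each such surviving $j$, line \ref{line:recover-inds} of \cref{alg:prc-edit-binary} adds $\pi_{i_j}^{-1}(z'_j) = c_{\sigma(i_j)}$ to $\ML_{\sigma(i_j)}$, so at least $(1-o(1))m \geq \trec$ of the lists contain the true symbol of the underlying FRS codeword $c$. (iii) \emph{List sizes stay under $\Lmax$.} By \cref{lem:sed-ball-bound}, each length-$\ell$ window of $y'$ produces at most $2^{\ell(1-\pDec^2 + \log(6e)\epDec\log(1/\epDec))}$ candidate pairs, and the choice $\epDec\log(6e)\log(1/\epDec) \leq \pSub^2/2$ in \cref{def:hamming-edit-params} ensures this is $2^{\ell(1-\Omega(\pSub^2))}$; summing over $|y|$ windows and distributing among the $n$ lists gives an average size well below $\Lmax = n^{s-3}$, and the truncation in line \ref{line:remove-list-items} only ever removes incorrect entries before the correct one by the bound in (ii). The parameters $s = 8/\pSub^2$, $k = n^{1/(s+1)}$, $\Lmax = n^{s-3}$, and $\trec = k(n\Lmax)^{1/(s+1)} + 2$ are then chosen precisely so that $\trec < (1-o(1))m$ while satisfying the hypothesis of \cref{thm:frs-listrecov}. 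Thus $\ListRecov$ returns a nonempty list containing $c$ and $\Dec$ outputs $\True$.

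The running time bounds are immediate: $\KeyGen$ and $\Enc$ are $\poly(\lambda)$ with no exponential dependence on $1/\pSub^2$, while $\Dec$ is dominated by the $N^{O(s)} = \lambda^{O(1/\pSub^2)}$ cost of FRS list recovery from \cref{thm:frs-listrecov}. The hardest piece of the plan is the bookkeeping in step (iii) of robustness: balancing $\pDec$, $\epDec$, $\Lmax$, and $\trec$ so that the FRS list recovery hypothesis $\trec \geq k(n\Lmax)^{1/(s+1)} + 2$ holds simultaneously with the ``survival'' lower bound $\trec \leq (1-o(1))m$ from step (i), under the constraint $\epedit = \tilde\Theta(\pSub^3)$.
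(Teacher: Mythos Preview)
Your high-level decomposition matches the paper's: undetectability via \cref{lem:edit-undetectability} (specialized to \cref{conj:permuted-frs}), soundness via \cref{lem:soundness}, robustness via \cref{lem:hamedit-robustness} (which itself rests on \cref{lem:list-preserve} and \cref{lem:list-ub}), and the $\lambda^{O(1/\pSub^2)}$ decoding time via \cref{thm:frs-listrecov}. So structurally you are on the same track.

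However, there is a genuine quantitative gap in your robustness sketch. In step (i) you claim that ``for all but an $o(1)$ fraction of indices $j$'' the corresponding block survives into $\SEDball(\cdot,\,1/2-\pDec,\,\epDec)$. This is false: the channel is allowed $(1/2-\pSub)m\ell$ substitutions total, which is on average $(1/2-\pSub)\ell$ per block, so by averaging you can only guarantee that at least a $\Theta(\pSub)$ fraction of blocks have substitution count at most $(1/2-\pSub/2)\ell = (1/2-\pDec)\ell$; see the case $\pSub<1/2$ in \cref{lem:list-preserve}. Consequently the lower bound you must feed into FRS list recovery is not $(1-o(1))m$ but $\Theta(\pSub\cdot m)$ (the paper gets $m\pSub/8$ after subtracting the noise, collision, and truncation losses). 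This is precisely why the parameters in \cref{def:hamming-edit-params} are set so that $m\pSub/8 \geq k(n\Lmax)^{1/(s+1)}+2=\trec$, with $m=n^{s/(s+1)}$ and $k=n^{1/(s+1)}$; your stated target $\trec\le(1-o(1))m$ would be much too easy and would not force the $s=\Theta(1/\pSub^2)$ folding. A second, smaller issue: in step (iii) you say the truncation ``only ever removes incorrect entries before the correct one,'' but the paper does not (and cannot) argue this; instead it bounds the \emph{number of lists} that get truncated at all by $\sum_i|\ML_i|/\Lmax$, and subtracts this count from the $\Theta(\pSub\cdot m)$ survivors.
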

\begin{proof}
  Strong adaptive robustness to $\epedit$-edit bounded channels follows from \cref{lem:hamedit-robustness}. Soundness follows from \cref{lem:soundness}. Undetectability, under \cref{conj:permuted-rs}, follows from \cref{lem:edit-undetectability}. The bound on the running time follows immediately from \cref{thm:frs-listrecov}. 
  \end{proof}
We remark that one downside of \cref{thm:prc-hamming-edit} is that the PRC's decoding algorithm runs in time that grows exponentially in $1/\pSub^2$; this is due to our use of folded Reed-Solomon codes, which require time $n^{O(s)}$ to list recover (where $s = O(1/\pSub^2)$ is the folding parameter). It is an interesting question to obtain an improved construction which does not suffer this decay. 

We proceed with the proofs of the individual lemmas used to establish \cref{thm:prc-main,thm:prc-hamming-edit}. Our first lemma gives an upper bound on the size of the lists $\ML_j$ constructed in the decoding algorithm $\Dec$.
\begin{lemma}
  \label{lem:list-ub}
  The lists $\ML_j$ at termination of the for loop on \cref{line:for-inds} of $\Dec(1^\lambda, \sk, y)$ satisfy the following:
  \begin{align}
    \sum_{j=1}^n |\ML_j| \leq \begin{cases}
      |y| \cdot (nq)^{1 - \pDec^2 + \log(6e) \epDec \log(1/\epDec)} &: \pDec \in [0, 1/2) \\
      |y| \cdot (nq)^{\epDec \log(6e) \log(1/\epDec)} &  :\pDec = 1/2\nonumber.
      \end{cases}
  \end{align}
\end{lemma}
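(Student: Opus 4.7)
The plan is to bound the total number of insertions into the lists $\ML_j$ across the for loop on \cref{line:for-inds}. Since adding $\pi_i^{-1}(z)$ to $\ML_{\sigma(i)}$ amounts to exactly one insertion per pair $(i,z) \in [n]\times [q]$ satisfying the SED-ball condition at step $j$, and $\sigma$ is a bijection on $[n]$, we have
\[
\sum_{j=1}^n |\ML_j| \leq \sum_{j=1}^{|y|-\ell} N_j, \qquad N_j := \left| \left\{ (i,z) \in [n]\times[q] : y_{j:j+\ell-1} \in \SEDball(\bin(i)\circ\bin(z), 1/2-\pDec, \epDec) \right\} \right|.
\]
It thus suffices to bound each $N_j$ by the claimed per-iteration quantity.

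The first step is to reinterpret the SED-ball condition using the symmetry of Hamming and edit distance. The condition $y_{j:j+\ell-1} \in \SEDball(\bin(i)\circ\bin(z), 1/2-\pDec, \epDec)$ is equivalent to the existence of some $z'' \in \{0,1\}^\ell$ with $\Ham(\bin(i)\circ\bin(z), z'') \leq (1/2-\pDec)\ell$ and $\ED(z'', y_{j:j+\ell-1}) \leq \epDec \ell$. Because $n$ and $q$ are powers of $2$ and $\ell = \log n + \log q = \log(nq)$, the map $(i,z) \mapsto \bin(i)\circ\bin(z)$ is a bijection $[n]\times[q] \to \{0,1\}^\ell$, so $N_j$ equals the number of strings $w \in \{0,1\}^\ell$ such that some length-$\ell$ string $z''$ lies both within Hamming distance $(1/2-\pDec)\ell$ of $w$ and edit distance $\epDec \ell$ of $y_{j:j+\ell-1}$.

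Next, I would bound $N_j$ by counting the candidate $z''$'s and then the candidate $w$'s for each. The number of length-$\ell$ strings $z''$ with $\ED(z'', y_{j:j+\ell-1}) \leq \epDec \ell$ is at most $|\EDball(y_{j:j+\ell-1}, \epDec \ell)|$, which by \cref{lem:ed-ball-bound} with $|\Sigma|=2$ is at most $(6e)^{\ell \epDec \log(1/\epDec)} = (nq)^{\epDec \log(1/\epDec)\log(6e)}$. For each such $z''$, the number of $w \in \{0,1\}^\ell$ with $\Ham(w,z'') \leq (1/2-\pDec)\ell$ is the volume of a binary Hamming ball of radius $(1/2-\pDec)\ell$, which is at most $2^{\ell\cdot h_2(1/2-\pDec)} \leq 2^{\ell(1-\pDec^2)} = (nq)^{1-\pDec^2}$ for $\pDec \in [0,1/2]$ (using the same binary-entropy estimate invoked in the proof of \cref{lem:sed-ball-bound}). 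Multiplying the two bounds gives $N_j \leq (nq)^{1-\pDec^2 + \epDec \log(1/\epDec)\log(6e)}$, which yields the first case after summing over at most $|y|$ values of $j$.

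The only real subtlety is the boundary case $\pDec = 1/2$, where $h_2(1/2 - \pDec) = 0$ and the Hamming ball degenerates to a single point, so the crude Hamming-volume bound of $(nq)^{1-\pDec^2} = (nq)^{3/4}$ is far from tight. In this case one should instead observe that $z'' = \bin(i)\circ\bin(z)$ is forced, so $N_j$ is at most the number of length-$\ell$ strings within edit distance $\epDec \ell$ of $y_{j:j+\ell-1}$, namely $(nq)^{\epDec \log(1/\epDec)\log(6e)}$ by the same application of \cref{lem:ed-ball-bound}. Summing over iterations yields the second case. I don't expect any serious obstacle: the whole argument is a careful volume count, and the only thing to keep straight is that the intermediate string $z''$ must have length $\ell$ so that the Hamming step is well-defined, which makes the symmetric reinterpretation clean.
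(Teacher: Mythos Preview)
Your proposal is correct and follows essentially the same approach as the paper: bound the per-iteration number of qualifying pairs $(i,z)$ by applying the volume estimates of \cref{lem:sed-ball-bound} (for general $\pDec$) and \cref{lem:ed-ball-bound} (for the $\pDec = 1/2$ case) at block length $\ell = \log(nq)$, then sum over the at most $|y|$ iterations. Your write-up is in fact more careful than the paper's, since you explicitly unpack the direction of the counting (fixing $y_{j:j+\ell-1}$ and enumerating the intermediate $z''$'s and then the $w$'s) rather than invoking \cref{lem:sed-ball-bound} directly in the reversed direction.
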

\begin{proof}
  Let the input to $\Dec$ be denoted $y$. For each $j \in [|y| - \ell]$, the number of tuples $(w,z)$ satisfying the conditions on \cref{line:recover-inds} may be bounded above using \cref{lem:sed-ball-bound} (for any $p \in [0,1/2]$) or, in the case that $p = 1/2$, by the tighter bound in \cref{lem:ed-ball-bound} (where the block length $n$ in each of these lemmas is taken to be $\ell = \log(nq)$).  This yields the bound claimed in the lemma statement. 
\end{proof}

Next, we show that for any string $y$ which is output by $\Enc$, if it is passed through an edit-bounded channel (or more generally, an edit-substitution bounded channel), then many of the lists $\ML_i$ constructed in $\Dec$ will still contain correct codeword symbols corresponding to the codeword generated by $\Enc$. 
\begin{lemma}
  \label{lem:list-preserve}
  Consider any $i_1, \ldots, i_m \in [n], z_1', \ldots, z_m' \in [q]$, and let $y \in \{0,1\}^{m \cdot \ell}$ denote the concatenation $\bin(i_1)\circ \bin(z_1')\circ \ldots\circ \bin(i_m)\circ \bin(z_m')$. Fix any $y' \in \SEDball(y, 1/2 - \pSub, \epED)$, and any value of $\sk$. % Then there are at least $\noah{todo} \cdot m$ values of $i \in [m]$ for which $z_i' \in \ML_{w_i}$,
  For $j \in [m]$, letting $\ML_{i_j}$ denote the value of the list in $\Dec(1^\lambda, \sk, y')$ at the conclusion of the function, then the number of values of $j \in[ m]$ for which $z_j' \in \ML_{i_j}$ is at least 
  \begin{align}
      m \cdot \left( \frac{\pSub}{2} - \frac{1}{\CDec} - \frac{2\ell \cdot (nq)^{1 - \pSub^2 + \epDec \log(6e) \log(1/\epDec)}}{\Lmax} \right)\quad \mbox{ if } \quad \pSub < 1/2 \nonumber\\
      m \cdot \left( 1 - \frac{1}{\CDec} - \frac{2\ell \cdot (nq)^{\epDec \log(6e) \log(1/\epDec)}}{\Lmax} \right)\quad \mbox{ if } \quad  \pSub = 1/2\nonumber.
  \end{align}
\end{lemma}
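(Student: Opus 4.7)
The plan is to identify length-$\ell$ blocks of $y$ whose counterparts inside $y'$ pass the $\SEDball$ test on \cref{line:recover-inds} of $\Dec$, and then subtract those whose list entries are subsequently erased by the cap on \cref{line:remove-list-items}. Factor the channel into a Hamming-only stage $y \to y''$ with $\Ham(y,y'') \le (1/2-\pSub)m\ell$ and an edit-only stage $y'' \to y'$ with $\ED(y'',y') \le \epedit m \ell$, and fix an optimal alignment of $y''$ into $y'$. For $j \in [m]$, let $B_j = [(j-1)\ell+1, j\ell]$ be the $j$-th length-$\ell$ block of $y$, and let $S_j$, $D_j$, $I_j$ respectively denote the number of substitutions inside $B_j$, the number of deletions of positions of $B_j$, and the number of insertions inside the aligned $y'$-window. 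Then $\sum_j S_j \le (1/2-\pSub)m\ell$ and $\sum_j (D_j+I_j) \le \epedit m\ell$.

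Call $j$ \emph{good} when $S_j \le (1/2-\pDec)\ell$ and $D_j+I_j \le \epedit \CDec \ell = \epDec\ell/2$. Markov on $\sum_j (D_j+I_j)$ bounds the edit-bad blocks by $m/\CDec$, and Markov on $\sum_j S_j$ with threshold $(1/2-\pDec)\ell$ bounds the substitution-bad blocks by $\frac{1/2-\pSub}{1/2-\pDec} \cdot m$. This simplifies to $\frac{1-2\pSub}{1-\pSub} m \le (1-\pSub) m$ when $\pDec = \pSub/2$, and vanishes when $\pSub = \pDec = 1/2$. Hence the good-block count is at least $(\pSub/2)m - m/\CDec$ in the Hamming-edit case and $m(1-1/\CDec)$ in the pure-edit case. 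For each good $j$, let $[a_j, b_j]$ be the $y'$-window aligned with $B_j$. Truncating or extending to the length-$\ell$ window $y'_{a_j:a_j+\ell-1}$ costs at most $|D_j - I_j|$ additional edits on top of the $D_j+I_j$ edits already separating $y'_{a_j:b_j}$ from $y''_{B_j}$, so $\ED(y'_{a_j:a_j+\ell-1}, y''_{B_j}) \le 2(D_j+I_j) \le \epDec\ell$. Combined with $\Ham(y''_{B_j}, \bin(i_j)\circ\bin(z_j')) = S_j \le (1/2-\pDec)\ell$, this places $y'_{a_j:a_j+\ell-1}$ inside $\SEDball(\bin(i_j)\circ\bin(z_j'), 1/2-\pDec, \epDec)$, so when the loop of $\Dec$ visits position $a_j$, the pair $(i_j, z_j')$ satisfies the test on \cref{line:recover-inds} and its corresponding entry enters the pre-cap list $\ML_{i_j}$.

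It remains to account for losses from the cap on \cref{line:remove-list-items}. Using $|y'| \le 2m\ell$ (since $\epedit \le 1$) and \cref{lem:list-ub}, the total pre-cap list mass is at most $2m\ell \cdot (nq)^{1-\pDec^2 + \epDec \log(6e)\log(1/\epDec)}$. A good block can only be lost if its entry lies in a list whose pre-cap size exceeds $\Lmax$, and a standard counting argument charges each such overflowing list against its $\Lmax$ surviving entries to bound the number of dropped good blocks by the total mass divided by $\Lmax$, i.e.~$m \cdot 2\ell (nq)^{\cdots}/\Lmax$. Subtracting this loss from the good-block count yields the two cases in the statement. The main subtlety is the Markov calibration in the Hamming-edit case: the substitution threshold $(1/2-\pDec)\ell$ must be loose enough that an $\Omega(\pSub)$ fraction of blocks remain good, yet tight enough that the Hamming component of the $\SEDball$ test still fires for them; this is exactly the motivation behind the choice $\pDec = \pSub/2$ in \cref{def:hamming-edit-params}.
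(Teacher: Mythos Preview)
Your proposal is correct and follows essentially the same approach as the paper: factor the channel into a substitution stage and an edit stage, use Markov's inequality on the per-block substitution and edit counts to identify a large set of ``good'' blocks, verify that each good block triggers the $\SEDball$ test on \cref{line:recover-inds} (with the factor-of-$2$ blowup in the edit budget arising from restricting to a fixed length-$\ell$ window of $y'$), and finally subtract the losses from the list-capping step via \cref{lem:list-ub}. Your accounting of the Markov thresholds and the truncation/extension step is slightly more explicit than the paper's, but the argument is the same.
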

\begin{proof}
  Fix $y'' \in \{0,1\}^\st$ so that $\Ham(y, y'') \leq (1/2 - \pDec) \cdot m\ell$ and $\ED(y', y'') \leq \epDec \cdot m\ell$. For each $j \in [m]$, we define the following quantities:
  \begin{itemize}
  \item Let $f_j \geq 0$ denote the number of substitutions made at positions $(j-1) \, \ell + 1, \ldots, (j-1) \, \ell + \ell$ when transforming $y$ to $y''$. 
  \item Let $e_j \geq 0$ denote the number of edits (i.e., insertions and deletions) made at positions $(j-1) \, \ell + 1, \ldots, (j-1) \, \ell + \ell$ when transforming $y''$ to $y'$ (through an optimal sequence of edits).
  \end{itemize}
  Using the bounds on $\Ham(y,y'')$ and $\ED(y'', y')$, we have
  \begin{align}
f_1 + \cdots + f_m \leq (1/2 - \pSub) \cdot m\ell, \qquad e_1 + \cdots + e_m \leq \epED \cdot m\ell\nonumber.
  \end{align}
  Then the following statements are immediate:
  \begin{itemize}
  \item For at least $m \cdot (1-1/\CDec)$ values of $j \in [m]$, we have that $e_j \leq \epED \cdot \ell \CDec$.
  \item For at least $m \cdot \pSub/2$ values of $j \in [m]$, we have $f_j \leq (1/2 - \pSub/2) \cdot \ell$.  If in fact $\pSub = 1/2$, then $f_j = 0$ for all $j \in [m]$. 
  \end{itemize}
  Let $\MJ \subset [m]$ be the set of $j \in [m]$ for which $e_j \leq \epED \cdot \ell \CDec$ and $f_j \leq (1/2 - \pSub/2) \cdot \ell$. Then $|\MJ| \geq m \cdot (\pSub/2 - 1/\CDec)$ (and $|\MJ| \geq m \cdot (1 -1/\CDec)$ in the case that $\pSub = 1/2$). For each $j \in \MJ$, there is some $j' \in |y'|$ so that
  \begin{align}
y'_{j:j+\ell} \in \SEDball(\bin(i_j) \circ \bin(z_j'), 1/2 - \pSub/2, 2\epED \CDec)\nonumber.
  \end{align}
  Note that the edit distance term above is $2\epED\CDec$ since an additional $e_j \leq \epED \cdot \ell \CDec$ deletions may be needed following the edits to transform $y''$ to $y'$ (as in the above expression we are considering the fixed-length quantity $y'_{j:j+\ell}$).

  Since $\epDec \geq 2\epED \CDec$ and $\pDec \leq \pSub/2$ (or else $\pDec = \pSub = 1/2$), it follows that, for $j \in \MJ$, in \cref{line:recover-inds} of $\Dec$, $z_j'$ is added to $\ML_{i_j}$.

  Let us write
  \begin{align}
\hat L := \begin{cases}
      |y| \cdot (nq)^{1 - \pDec^2 + \log(6e) \epDec \log(1/\epDec)} &: \pDec < 1/2\\
      |y| \cdot (nq)^{\epDec \log(6e) \log(1/\epDec)} &  :\pDec = 1/2\nonumber.
      \end{cases}
  \end{align}
  By \cref{lem:list-ub} and the fact that $|y'| \leq(1+ \epED) \cdot m\ell \leq 2m\ell$, the lists $\ML_i$ at the termination of the for loop on \cref{line:for-inds} of $\Dec$ satisfy $\sum_{i=1}^n |\ML_i| \leq 2m\ell \cdot \hat L$. Thus, for $\eta := \frac{2m\ell \cdot \hat L}{n \Lmax}$, the number of lists $\ML_w$ for which some item is removed in \cref{line:remove-list-items} of $\Dec$ may be bounded above by $\eta \cdot n$. Overall, it follows that the number of values of $j \in [m]$ for which $z_j' \in \ML_{i_j}$ at termination of $\Dec$ is at least the quantity claimed in the lemma statement. 
\end{proof}

Using \cref{lem:list-preserve}, we next establish the strong adaptive robustness of our PRC, for the parameter settings of \cref{def:hamming-edit-params} (in \cref{lem:hamedit-robustness}) and \cref{def:edit-params} (in \cref{lem:edit-robustness}). 
\begin{lemma}
  \label{lem:hamedit-robustness}
Given the parameters of \cref{def:hamming-edit-params}, the PRC of \cref{alg:prc-edit-binary} is strongly robust to $(1/2-\pSub,\epedit)$-substitution-edit distance bounded channels.
\end{lemma}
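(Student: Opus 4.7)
The plan is to combine \cref{lem:list-preserve} with a balls-into-bins analysis for the sampled indices $i_1,\ldots,i_m$ and a Chernoff bound for the noise in $\channSC_\eta$, and then verify that the parameters of \cref{def:hamming-edit-params} are tuned so that the resulting count of ``good'' positions exceeds the folded Reed--Solomon list-recovery threshold $\trec$.

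First I would fix any $\Eadv \in \Ehamedit_{1/2-\pSub,\epedit}$, draw $\sk=(\sigma,\pi_1,\ldots,\pi_n,o)\gets\KeyGen(1^\lambda)$ and $y\gets\Enc(1^\lambda,\sk)$, and write $y=\bin(i_1)\circ\bin(z_1')\circ\cdots\circ\bin(i_m)\circ\bin(z_m')$ as in the encoder. Since $y'=\Eadv(\sk,y)\in\SEDball(y,1/2-\pSub,\epedit)$, \cref{lem:list-preserve} applies and yields that for at least
\begin{align*}
m_1 := m\cdot\left(\tfrac{\pSub}{2}-\tfrac{1}{\CDec}-\xi\right),\qquad \xi:=\tfrac{2\ell\cdot(nq)^{1-\pSub^2+\log(6e)\epDec\log(1/\epDec)}}{\Lmax},
\end{align*}
indices $j\in[m]$ the decoder finds a match for $(i_j,z_j')$ and adds $\pi_{i_j}^{-1}(z_j')$ to $\ML_{\sigma(i_j)}$. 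Next I would verify $\xi=o(\pSub)$: using the inequality $\log(6e)\epDec\log(1/\epDec)\le\pSub^2/2$ baked into \cref{def:hamming-edit-params}, together with $n\le\lambda$, $q\le\lambda^s$, $\Lmax=n^{s-3}$ and $\ell=O(s\log\lambda)$, one obtains $\xi = O(s\log\lambda)\cdot\lambda^{(s+1)(1-\pSub^2/2)-(s-3)} = O(s\log\lambda)\cdot\lambda^{-\pSub^2/2}$, where the exponent collapses to $-\pSub^2/2$ precisely because the choice $s=8/\pSub^2$ makes $(s+1)\pSub^2/2=4+\pSub^2/2$. Combined with $1/\CDec=\pSub/16$ this gives $m_1 \geq (7\pSub/16)\cdot m - o(\pSub m)$.

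My next step is to pass from these matches to positions $i\in[n]$ for which the true codeword symbol $c_i$ lies in $\ML_i-o_i$, i.e.\ the input to list recovery on \cref{line:find-close-codewords}. Two corrections are needed: (i) whenever $i_j$ is not the first occurrence of its value in $i_1,\ldots,i_m$, the symbol $z_j'$ was sampled uniformly by $\Enc$ so the match is useless, and (ii) even for a first-occurrence $j$, one has $\pi_{i_j}^{-1}(z_j')-o_{\sigma(i_j)}=c'_{\sigma(i_j)}$, which equals $c_{\sigma(i_j)}$ only when $\channSC_\eta$ did not corrupt position $\sigma(i_j)$. For (i), a McDiarmid/balls-into-bins bound on the number of distinct indices in $i_1,\ldots,i_m\gets[n]$ gives that the count of non-first-occurrences is at most $O(m^2/n)=O(n^{(s-1)/(s+1)})=o(\pSub m)$ with probability $1-\negl(\lambda)$, since $m=n^{s/(s+1)}$. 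For (ii), conditional on $\sigma$ and $i_1,\ldots,i_m$, the events ``no noise at $\sigma(i_j)$'' across first-occurrence $j$'s are independent Bernoulli$(1-\eta)$ on distinct positions, so Chernoff yields that at most $2\eta m=\pSub m/16$ are killed by noise with probability $1-\negl(\lambda)$. Subtracting, at least $(6\pSub/16)m - o(\pSub m) \geq \pSub m/4$ matched $j$'s survive both filters, producing that many \emph{distinct} positions $i=\sigma(i_j)$ with $c_i\in\ML_i-o_i$.

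To finish I would invoke \cref{thm:frs-listrecov} with $\trec=k(n\Lmax)^{1/(s+1)}+2=n^{(s-1)/(s+1)}+2$: my count of good positions beats $\trec$ by a factor of $\Omega(\pSub)\cdot n^{1/(s+1)}\to\infty$, so the list returned on \cref{line:find-close-codewords} is non-empty and contains the codeword $c$, and hence $\Dec$ outputs $\True$ as required. I expect the main obstacle to be the tight parameter bookkeeping inside the second paragraph: the exponent of $\lambda$ driving $\xi\to 0$ has zero slack, so any slightly smaller choice than $s=8/\pSub^2$ breaks $\xi=o(1)$, and the edit budget $\epedit=c_0\pSub^3\log(1/\pSub)$ is chosen precisely so that $\log(6e)\epDec\log(1/\epDec)\le\pSub^2/2$ survives the doubling $\epDec=2\epedit\CDec$ against the factor $\CDec=16/\pSub$. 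Once these inequalities are justified, the three ingredients (\cref{lem:list-preserve}, the birthday bound, and Chernoff) dovetail cleanly against the folded Reed--Solomon list-recovery guarantee.
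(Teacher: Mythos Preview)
Your proposal is correct and follows essentially the same approach as the paper's proof: both fix the high-probability event that the sampled indices $i_1,\ldots,i_m$ have few repeats and the channel $\channSC_\eta$ corrupts at most $2\eta m$ of the relevant symbols, then invoke \cref{lem:list-preserve} and subtract these losses to lower-bound the number of positions $i$ with $c_i\in\ML_i-o_i$, finally checking this exceeds the folded Reed--Solomon list-recovery threshold $\trec$ from \cref{thm:frs-listrecov}. Your exponent bookkeeping for $\xi$ is in fact slightly cleaner than the paper's; the paper makes the same observation that $(s+1)(1-\pSub^2/2)\le s-3-\pSub^2/2$ under $s=8/\pSub^2$ but packages the resulting bound differently (arriving at $m\pSub/8$ rather than your $m\pSub/4$).
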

\begin{proof}
  We will show that for any value of $\sk$, 
  \begin{align}
\Pr_{y \gets \Enc(1^\lambda, \sk) } \left( \forall \ME \in \Ehamedit_{1/2-\pham,\epED},\ \Dec(1^\lambda,\sk, \ME(y)) = \True \right) \geq 1 - \exp(-n^{\Omega(1)}).
  \end{align}
  Fix any value for $\sk$. Let $\Egood$ be the event that the following conditions hold:
  \begin{itemize}
  \item Amongst the indices $i_1, \ldots, i_m$ generated on \cref{line:draw-w} of $\Enc$, there are at least $m - 2m^2/n$ distinct elements.
  \item The substitution channel $\channSC_\eta$ on \cref{line:draw-x-e} of $\Enc$ corrupts at most $2\eta m$ of the $m$ symbols of $c$ given by $c_{\sigma(i_1)}, \ldots, c_{\sigma(i_m)}$. 
  \end{itemize} 
  The probability that each $i_j$ is distinct from $i_1, \ldots, i_{j-1}$ is at least $1-(j-1)/n \geq 1-m/n$, and thus Azuma's inequality ensures that the probability of the first item above is at least $ 1 - \exp(-\Omega(m^3/n^2)) \geq 1 - \exp(-n^{\Omega(1)})$, since $m\geq n^{3/4}$. Moreover, by a Chernoff bound, the probability that the second item above fails is at most $\exp(-\Omega(\eta m)) \leq \exp(-n^{\Omega(1)})$, since $\eta = \Omega(1)$. Overall, it follows that $\Pr(\Egood) \geq 1 - \exp(-n^{\Omega(1)})$. 
  Let $\MI := \{ i_1, \ldots, i_m \} \subset [n]$, so that $|\MI| \geq m - 2m^2/n$ whenever the event $\Egood$ occurs.

  The output of $\Enc(1^\lambda, \sk)$, which we denote by $y \in \{0,1\}^{m\ell}$, is the concatenation of $\bin(i_1), \bin(z_1'),\ldots$ $ \bin(i_m), \bin(z_m')$, where $i_1, \ldots, i_m \in [n]$ and $z_1', \ldots, z_m' \in [q]$. Under the event $\Egood$, we have that, for at least $|\MI| - 2\eta m \geq m-2m^2/n-2\eta m$ values of $j \in [m]$, $z_j' = \pi_{i_j}(c_{\sigma(i_j)} + o_{\sigma(i_j)})$, where we recall that $c \in [q]^n$ is defined on \cref{line:draw-x-e} of $\Enc$. 

  Moreover, by \cref{lem:list-preserve}, for \emph{any} channel $\ME \in \Ehamedit_{1/2-\pham,\epedit}$, letting $y' = \ME(y)$ (so that $y' \in \SEDball(y, 1/2 - \pham, \epedit)$), there are at least  $m \cdot \left( \frac{\pSub}{2} - \frac{1}{\CDec} - \frac{2\ell \cdot (nq)^{1 - \pSub^2 + \epDec \log(6e) \log(1/\epDec)}}{\Lmax} \right)$ values of $j \in [m]$ for which $z_j' \in \ML_{i_j}$, where $\ML_{i_j}$ denotes the value of the list at the end of $\Dec(1^\lambda, \sk, y')$. Overall, the number of values of $i\in [n]$ for which $i_j = i$ for some $j \in [m]$ and $c_i + o_i \in \ML_{i}$  is at least
  \begin{align}
 m \cdot \left( \frac{\pSub}{2} - \frac{1}{\CDec} - \frac{2\ell \cdot (nq)^{1 - \pSub^2 + \epDec \log(6e) \log(1/\epDec)}}{\Lmax} -\frac{2m}{n} - 2\eta  \right)   \label{eq:recovered-inds-subham}.
  \end{align}
  Using the fact that $\CDec \leq \pSub/16$, that $2\eta \leq \pSub/16$, that $ 2m/n \leq \pSub/16$ for sufficiently large security parameter $\lambda$ (as $\pSub$ is a constant), that $q \leq (sn)^s$ (recalling the definition of $q = q(\lambda), n =n(\lambda)$), and that $\epDec \log(6e) \log(1/\epDec) \leq \pSub^2/2$ (by the choice of $\epDec$) we see that the quantity in \cref{eq:recovered-inds-subham} is at least
  \begin{align}
        m \cdot \left( \frac{\pSub}{4} - \frac{2\ell \cdot (sn)^{(s+1) \cdot (1-\pSub^2/2)}}{\Lmax} \right) \geq & m \cdot \left( \frac{\pSub}{4} - \frac{n^{s-4}}{\Lmax}\right) \geq \frac{m \cdot \pSub}{8}\nonumber,
  \end{align}
  where the first inequality uses that $2\ell \cdot (sn)^{(s+1) \cdot (1-\pSub^2/2)} \leq 2\ell s^{s+1} \cdot n^{s-5} \leq  n^{s-4}$ for sufficiently large $n$ (which in turn uses our choice of the constant $s$), and the second inequality uses that $\Lmax =n^{s-3} \geq n^{s-4} \cdot 8/\pSub$ for sufficiently large $n$ by our choice of $\Lmax$.
  
  Moreover, the lists $\ML_1, \ldots, \ML_n$ passed to the list recovery algorithm on \cref{line:find-close-codewords} have size bounded by $\Lmax =n^{s-3}$, which means, by \cref{thm:frs-listrecov}, as long as
  \begin{align}
\frac{m \cdot \pSub}{8} \geq k \cdot \left( n \cdot n^{s-3} \right)^{1/(s+1)} + 2,\nonumber
  \end{align}
  the list recovery algorithm will return at least one codeword (namely, the codeword chosen by $\Enc(1^\lambda, \sk)$ to produce $y$). In turn, the above inequality holds for sufficiently large $n$ since we have chosen $m = n^{s/(s+1)}$ and $k = n^{1/(s+1)}$. 
%   The above quantity is at least $t = \sqrt{k \Lmax n}$ \noah{ensure}, which implies that $\Dec$ will find at least one codeword, namely $P_n \cdot G_n \cdot x$, in \cref{line:find-close-codewords}, and thus return $\emptyset$. 
\end{proof}

\begin{lemma}
  \label{lem:edit-robustness}
Suppose that $\epedit \leq 1/400$. Given the parameters of \cref{def:edit-params}, the PRC of \cref{alg:prc-edit-binary} is strongly robust to $\epedit$-edit distance bounded channels.
\end{lemma}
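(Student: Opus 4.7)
The plan is to mirror the proof of \cref{lem:hamedit-robustness}, now specialized to the case where the channel introduces only edits (and no substitutions). An $\epedit$-edit bounded channel $\ME$ satisfies $\ME(y) \in \EDball(y, \epedit) = \SEDball(y, 1/2 - 1/2, \epedit)$, so \cref{lem:list-preserve} applies in the $\pSub = 1/2$ regime, yielding the sharper per-symbol recovery guarantee that at least
\[
m \cdot \left( 1 - \frac{1}{\CDec} - \frac{2\ell \cdot (nq)^{\epDec \log(6e) \log(1/\epDec)}}{\Lmax} \right)
\]
values of $j \in [m]$ satisfy $z_j' \in \ML_{i_j}$. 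This bound is linear in $m$ rather than proportional to $\pSub$ as in the substitution-plus-edit case, which is what makes the pure-edit regime qualitatively easier.

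Fix an arbitrary $\sk$ and let $\Egood$ denote the same good event used in \cref{lem:hamedit-robustness}: (i) the indices $i_1, \ldots, i_m$ drawn on \cref{line:draw-w} of $\Enc$ contain at least $m - 2m^2/n$ distinct values, and (ii) the substitution channel $\channSC_\eta$ on \cref{line:draw-x-e} corrupts at most $2\eta m$ of the symbols $c_{\sigma(i_1)}, \ldots, c_{\sigma(i_m)}$. With $m = 4n^{4/5}$ and $\eta = 1/32 = \Omega(1)$, Azuma's inequality and a standard Chernoff bound give $\Pr(\Egood) \geq 1 - \exp(-n^{\Omega(1)})$, exactly as in the proof of \cref{lem:hamedit-robustness}.

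Next, condition on $\Egood$ and fix any $\ME \in \Eedit_{\epedit}$. Writing $y := \Enc(1^\lambda, \sk)$ and $y' := \ME(y)$, let $\ML_1, \ldots, \ML_n$ be the lists at the end of $\Dec(1^\lambda, \sk, y')$. Combining the conclusion of \cref{lem:list-preserve} (with $\pSub = 1/2$) with the at most $2m^2/n + 2\eta m$ indices lost either to collisions among the $i_j$'s or to corruptions from $\channSC_\eta$, the number of distinct $i \in [n]$ such that some $j$ has $i_j = i$ and the correct Reed-Solomon symbol lies in $\ML_i - o_i$ is at least
\[
m \cdot \left( 1 - \frac{1}{\CDec} - \frac{2\ell \cdot (nq)^{\epDec \log(6e) \log(1/\epDec)}}{\Lmax} - \frac{2m}{n} - 2\eta \right).
\]
Substituting the parameters of \cref{def:edit-params} and using the hypothesis $\epedit \leq 1/400$ (so that $\epDec = 2\epedit\CDec \leq 0.08$), one verifies that this quantity is at least $\trec = \sqrt{k\Lmax n} = n^{4/5}$ for all sufficiently large $\lambda$. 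Since each list has size at most $\Lmax$ by the truncation on \cref{line:remove-list-items}, \cref{thm:rs-listrecov} applied to the Reed-Solomon code $C$ with list size $\Lmax$ and agreement threshold $\trec$ then guarantees that the call to $\ListRecov$ on \cref{line:find-close-codewords} returns a nonempty list containing the codeword underlying $y$, so $\Dec$ outputs $\True$.

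The main obstacle is the final arithmetic verification: controlling the edit-ball exponent $\epDec \log(6e) \log(1/\epDec)$ under $\epedit \leq 1/400$ so that the second error term is small enough to leave room for the remaining slack, while the three other error terms $1/\CDec$, $2m/n$, and $2\eta$ together stay below $1 - \trec/m = 3/4$. This is the exact analogue of the $\pSub^2$ budgeting performed in \cref{lem:hamedit-robustness}, specialized to concrete constants, and is the only step that demands computation beyond directly copying the template of that proof.
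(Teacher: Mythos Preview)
Your proposal is correct and follows essentially the same approach as the paper: define the good event $\Egood$, apply \cref{lem:list-preserve} in the $\pSub = 1/2$ regime, subtract off the losses from index collisions and from $\channSC_\eta$, and then invoke \cref{thm:rs-listrecov} with list size $\Lmax$ and agreement threshold $\trec$. The paper's proof differs only in that it carries out the final arithmetic explicitly (bounding the exponent $\epDec\log(6e)\log(1/\epDec)\le 1/10$ and showing the surviving agreement is at least $m/4 = n^{4/5} = \trec$), whereas you correctly describe this as the remaining routine computation.
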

\begin{proof}
  We will show that for any value of $\sk$, 
  \begin{align}
\Pr_{y \gets \Enc(1^\lambda, \sk)} \left( \forall \ME \in \Eedit_{\epED},\ \Dec(1^\lambda,\sk, \ME(y)) \neq \emptyset \right) \geq 1 - \exp(-n^{\Omega(1)}).
  \end{align}
  Fix any value for $\sk$. Let $\Egood$ be the event that the following events hold:
  \begin{itemize}
  \item Amongst the indices $i_1, \ldots, i_m$ generated on \cref{line:draw-w} of $\Enc$, there are at least $m - 2m^2/n$ distinct elements.
  \item The substitution channel $\channSC_\eta$ on \cref{line:draw-x-e} of $\Enc$ corrupts at most $2\eta m$ of the $m$ symbols of $c$ given by $c_{\sigma(i_1)}, \ldots, c_{\sigma(i_m)}$.
  \end{itemize} 
  As in the proof of \cref{lem:hamedit-robustness}, we have $\Pr(\Egood) \geq 1 - \exp(-n^{\Omega(1)})$, and we let $\MI := \{ i_1, \ldots, i_m \} \subset [n]$.

  The output of $\Enc(1^\lambda, \sk)$, which we denote by $y \in \{0,1\}^{m\ell}$, is the concatenation of $\bin(i_1), \bin(z_1'),\ldots$ $ \bin(i_m), \bin(z_m')$, where $i_1, \ldots, i_m \in [n]$ and $z_1', \ldots, z_m' \in [q]$. Under the event $\Egood$, we have that, for at least $|\MI| - 2\eta m \geq m-2m^2/n - 2\eta m$ values of $j \in [m]$, $z_j' = \pi_{i_j}(c_{\sigma(i_j)} + o_{\sigma(i_j)})$, where we recall that  $c \in [q]^n$ was generated on \cref{line:draw-x-e} of $\Enc$. 

  Moreover, by \cref{lem:list-preserve}, for \emph{any} channel $\ME \in \Eedit_{\epedit}$, letting $y' = \ME(y)$ (so that $y' \in \EDball(y,\epedit)$), there are at least  $m \cdot \left( 1 - \frac{1}{\CDec} - \frac{2\ell \cdot (nq)^{\epDec \log(6e) \log(1/\epDec)}}{\Lmax} \right)$ values of $j \in [m]$ for which $z_j' \in \ML_{i_j}$, where $\ML_{i_j}$ denotes the value of the corresponding list at the end of $\Dec(1^\lambda, \sk, y')$. Overall, the number of values of $i\in [n]$ for which $i_j = i$ for some $j \in [m]$ and $c_i + o_i \in \ML_{i}$ is at least
  \begin{align}
 m \cdot \left( 1 - \frac{1}{\CDec} - \frac{2\ell \cdot (nq)^{\epDec \log(6e) \log(1/\epDec)}}{\Lmax} -\frac{2m}{n} -2\eta \right)   \label{eq:recovered-inds}.
  \end{align}
  Using the fact that $\CDec \leq 1/16$, that $2\eta \leq 1/16$, that $ 2m/n \leq \pSub/16$ for sufficiently large security parameter $\lambda$ (as $\pSub$ is a constant), that $q \leq n$ (recalling the definition of $q = q(\lambda), n =n(\lambda)$), and that $\epDec \log(6e) \log(1/\epDec) \leq 1/10$ (by the choice of $\epDec = 2\epedit \CDec = 16\epedit$ and $\epedit \leq 1/400$) we see that the quantity in \cref{eq:recovered-inds} is at least
  \begin{align}
    m \cdot \left( \frac{1}{2} - \frac{2\ell \cdot n^{1/5}}{\Lmax} \right) \geq \frac{m}{4}\nonumber,
  \end{align}
where we have used that $\Lmax = n^{2/5} \geq 2\ell \cdot n^{1/5}$ for sufficiently lage $n$. 
  
  Moreover, the lists $\ML_1, \ldots, \ML_n$ passed to the list recovery algorithm on \cref{line:find-close-codewords} have size bounded by $\Lmax =n^{2/5}$, which means, by \cref{thm:rs-listrecov}, as long as
  \begin{align}
\frac{m}{4} \geq \sqrt{k n^{2/5} n} = n^{4/5}\nonumber,
  \end{align}
  the list recovery algorithm will return at least one codeword (namely, the codeword chosen by $\Enc(1^\lambda, \sk)$ to produce $y$). In turn, the above inequality holds by our definition of $m$. 
%   The above quantity is at least $t = \sqrt{k \Lmax n}$ \noah{ensure}, which implies that $\Dec$ will find at least one codeword, namely $P_n \cdot G_n \cdot x$, in \cref{line:find-close-codewords}, and thus return $\emptyset$. 
\end{proof}

\begin{lemma}
  \label{lem:soundness}
The PRC of \cref{alg:prc-edit-binary} (with the parameter settings of either \cref{def:edit-params} or \cref{def:hamming-edit-params}) is sound.
\end{lemma}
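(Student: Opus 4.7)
The plan is to fix an arbitrary input string $y \in \{0,1\}^\st$ and bound the probability over $\sk$ that $\Dec(1^\lambda, \sk, y) = \True$. The key observation is that the one-time pad $o$ in $\sk$ is independent of the construction of the lists $\ML_1, \ldots, \ML_n$: these lists depend only on $(\sigma, \pi_1, \ldots, \pi_n)$ through \cref{line:recover-inds}, while $o$ enters only in the call to the list recovery algorithm on \cref{line:find-close-codewords}. Moreover, after the pruning step on \cref{line:remove-list-items}, each list satisfies $|\ML_i| \leq \Lmax$.

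First, I would condition on any realization of $(\sigma, \pi_1, \ldots, \pi_n)$, which fixes the lists $\ML_1, \ldots, \ML_n$. The remaining randomness is in $o \gets \Unif([q]^n)$. The decoder outputs $\True$ iff $\ListRecov(\trec, (\ML_i - o_i)_i)$ is nonempty, which requires the existence of a codeword $c' \in C$ such that $c'_i \in \ML_i - o_i$ for at least $\trec$ indices, equivalently $c'_i + o_i \in \ML_i$ for at least $\trec$ indices. For any fixed $c' \in C$, the events $\{c'_i + o_i \in \ML_i\}_{i \in [n]}$ are mutually independent (since the $o_i$ are) and each has probability $|\ML_i|/q \leq \Lmax/q$. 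So the number of successes is stochastically dominated by $\mathrm{Bin}(n, \Lmax/q)$.

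Next, I would apply a standard Chernoff upper tail bound to $\mathrm{Bin}(n, \Lmax/q)$ to control the probability that at least $\trec$ successes occur, and then union bound over all $|C| \leq q^k$ codewords. For the parameters of \cref{def:edit-params} we have $\Lmax/q = n^{-3/5}$, mean $n^{2/5}$, and threshold $\trec = n^{4/5}$; the Chernoff tail is $\exp(-\Omega(n^{4/5} \log n))$, which easily beats the union-bound factor $|C| \leq n^{n^{1/5}}$. For the parameters of \cref{def:hamming-edit-params} with $s = O(1)$ we have $\Lmax/q \leq n^{-3}/s^s$ (using $q = \bar q^s \geq (sn)^s$), mean $O(n^{-2})$, and threshold $\trec = n^{(s-1)/(s+1)} + 2$, giving Chernoff tail $\exp(-\Omega(n^{(s-1)/(s+1)} \log n))$; the union bound factor $|C| \leq \bar q^{sk} \leq n^{O(n^{1/(s+1)})}$ is absorbed since $(s-1)/(s+1) > 1/(s+1)$ for $s \geq 2$.

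The resulting bound is independent of the conditioning on $(\sigma, \pi_1, \ldots, \pi_n)$, so averaging gives the same negligible bound over the full choice of $\sk$, establishing soundness in both parameter regimes. There is no real obstacle here beyond bookkeeping; the definitions of $\trec$, $\Lmax$, and $k$ in \cref{def:edit-params,def:hamming-edit-params} were chosen precisely so that the Chernoff tail dominates the $q^k$ union-bound factor by a polynomial factor in the exponent.
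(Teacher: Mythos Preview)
Your proposal is correct and follows the same strategy as the paper's proof: fix the lists $\ML_i$ (which are independent of $o$), then union-bound over the $|C|\le q^k$ codewords and bound the probability over $o$ that a fixed codeword lands in the shifted lists at $\geq \trec$ positions; the paper uses the direct count $q^{n-\trec}\Lmax^{\trec}\binom{n}{\trec}/q^n$ where you invoke Chernoff, which are essentially the same estimate. One minor slip: in the Hamming--edit regime $\bar q = sn - 1 < sn$, so $q < (sn)^s$ rather than $\geq$, but since $q = \Theta(n^s)$ for constant $s$ you still have $\Lmax/q = O(n^{-3})$ and your tail bound goes through unchanged.
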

\begin{proof}
  Fix any string $y \in \{0,1\}^\star$. Note that the lists $\ML_1, \ldots, \ML_n$ as constructed in $\Dec(1^\lambda, \sk, y)$ do not depend on $o$. Recall that we use $C= C(\lambda)$ to denote the code used in \cref{alg:prc-edit-binary}. Thus, it suffices to show that, for any sets $\ML_1, \ldots, \ML_n$, each of size at most $\Lmax$, we have 
  \begin{align}
\Pr_{o \sim \Unif([q]^n)} \left( \exists y^\st \in C \mbox{ s.t. } y^\st_w + o_w \in \ML_w \mbox{ for at least } \trec \mbox{ values of } w \in [n] \right) \leq \exp(n^{-\Omega(1)})\label{eq:soundness-main}.
  \end{align}
  In turn, we may verify \cref{eq:soundness-main} by a simple counting argument. The number of strings $y \in [q]^n$ which satisfy $y_i \in \ML_i$ for at least $\trec$ values of $i \in [n]$ is bounded above by
  \begin{align}
q^{n - \trec} \cdot \Lmax^{\trec} \cdot {n \choose \trec} \leq q^{n - \trec} \cdot \left( \frac{en\Lmax}{\trec}\right)^{\trec} = q^n \cdot \left( \frac{e n \Lmax}{q\trec} \right)^{\trec}\nonumber.
  \end{align}
  Then \cref{eq:soundness-main} follows using  the fact that $o \sim \Unif([q]^n)$ together with a union bound over the $q^k$ choices of $y^\st \in C$, as we work out for each of the two parameter settings below: 
  \begin{itemize}
    \item For the parameter settings in \cref{def:edit-params}, we have $\trec = \sqrt{k \Lmax n} = n^{4/5}$, and so
      \begin{align}
        q^k \cdot \left( \frac{en \Lmax}{q\trec} \right)^{\trec} \leq n^{n^{1/5}} \cdot \left( \frac{2en \cdot n^{2/5}}{n \cdot n^{4/5}} \right)^{n^{4/5}} \leq \exp(-n^{\Omega(1)})\nonumber.
%  \left( \frac{e \sqrt{n\Lmax}}{q\sqrt{k}} \right)^{\sqrt{kn\Lmax}} \cdot q^k \leq \left( eq^{-1/4} \right)^{\sqrt{kn\Lmax}} \cdot q^k \leq \exp(n^{-\Omega(1)})\nonumber.
  \end{align}
%   (Above the first inequality uses that $\Lmax \leq \sqrt{q}$ and the second inequality uses that $k \leq n^{1-\Omega(1)}$.\noah{ensure both})
\item For the parameter settings in \cref{def:hamming-edit-params}, we have $\trec = n^{(s-1)/(s+1)} + 2$ and $n^s \leq q \leq (sn)^s$, so
  \begin{align}
q^k \cdot \left( \frac{en \Lmax}{q\trec} \right)^{\trec} \leq (sn)^{s \cdot n^{1/(s+1)}} \cdot \left( \frac{en \cdot n^{s-3} }{n^s } \right)^{n^{(s-1)/(s+1)}} \leq \exp(-n^{\Omega(1)})\nonumber.
  \end{align}
\end{itemize}
\end{proof}

\begin{lemma}
  \label{lem:edit-undetectability}
The PRC of \cref{alg:prc-edit-binary} is undetectable under the following assumptions:
\begin{itemize}
\item For the parameter settings in \cref{def:edit-params}, under \cref{conj:permuted-rs};
\item For the parameter settings in \cref{def:hamming-edit-params}, under \cref{conj:permuted-frs}.
\end{itemize}
\end{lemma}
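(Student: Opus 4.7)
The plan is to exhibit the output of $\Enc(\sk)$ as an efficient randomized post-processing of one sample from $\MD_{n,\BF_q,C,\eta,1}$, and then invoke the relevant permuted codes conjecture. The main bookkeeping is to argue away the one-time pad $o$ so that the reduction targets the distribution as stated in \cref{conj:permuted-rs,conj:permuted-frs} (which contain no one-time pad).

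First I would absorb $o$ into the alphabet permutations: for each $w \in [n]$, define $\tilde\pi_w(x) := \pi_w(x + o_{\sigma(w)})$, where the addition is taken in $[q]$. Since left-composition with a fixed permutation is a bijection of $S_{[q]}$, each $\tilde\pi_w$ is uniform on $S_{[q]}$ and jointly $(\sigma,\tilde\pi_1,\ldots,\tilde\pi_n)$ has the same distribution as $(\sigma,\pi_1,\ldots,\pi_n)$. Setting $\hat c_w := \tilde\pi_w(c'_{\sigma(w)})$ with $c \gets \Unif(C)$ and $c' \gets \channSC_\eta(c)$, the vector $\hat c$ is distributed exactly as one sample of $\MD_{n,\BF_q,C,\eta,1}$, and for every index $i_j$ that has not appeared before we have $z_j' = \pi_{i_j}(c'_{\sigma(i_j)} + o_{\sigma(i_j)}) = \hat c_{i_j}$. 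Across $T$ oracle queries with a common key, the joint distribution of responses equals $(F(\hat c^{(1)}),\ldots,F(\hat c^{(T)}))$, where $\hat c^{(1)},\ldots,\hat c^{(T)}$ are jointly distributed as $\MD_{n,\BF_q,C,\eta,T}$ and $F$ is the efficient randomized map that samples fresh $i_1,\ldots,i_m \gets \Unif([n])$, sets $z_j' := \hat c_{i_j}$ on the first occurrence of $i_j$ and samples $z_j' \gets \Unif([q])$ on repeated occurrences, and outputs $\bin(i_1)\circ\bin(z_1')\circ\cdots\circ\bin(i_m)\circ\bin(z_m')$.

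Next I would verify that feeding uniform inputs to $F$ yields uniform outputs: if $u \sim \Unif(\BF_q^n)$, then conditional on $(i_1,\ldots,i_m)$ the values $u_{i_j}$ at first-occurrence positions are independent and uniform in $[q]$, and the repeated-occurrence positions are explicitly resampled from $\Unif([q])$. Using the standing assumption that $n$ and $q$ are powers of $2$, each $\bin(i_j)$ and $\bin(z_j')$ is a uniform binary string, so $F(u)$ is uniform on $\{0,1\}^{m(\log n + \log q)}$. Consequently applying $F$ coordinatewise to $T$ samples of $\Unif((\BF_q^n)^T)$ produces $T$ independent uniform binary strings of the correct PRC-codeword length.

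Finally, the permuted codes conjecture—\cref{conj:permuted-rs} for the Reed-Solomon parameters of \cref{def:edit-params}, and \cref{conj:permuted-frs} for the folded Reed-Solomon parameters of \cref{def:hamming-edit-params}—asserts the computational indistinguishability of $\MD_{n,\BF_q,C,\eta,T}$ from $\Unif((\BF_q^n)^T)$ for any $T = \poly(\lambda)$. Because $F$ is a $\poly(\lambda)$-time randomized procedure, applying it coordinatewise to $T$ samples preserves indistinguishability, so the $T$-fold response oracle $\Enc(\sk)$ is computationally indistinguishable from the uniform oracle $\MU$ in the sense required by the PRC undetectability definition. The only step requiring a little care is the $o$-absorption at the very beginning; the remainder of the argument is a routine reduction.
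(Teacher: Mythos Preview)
Your proof is correct and follows essentially the same reduction as the paper: express each oracle response as an efficient randomized post-processing $F$ of one sample from $\MD_{n,\BF_q,C,\eta,\cdot}$, check that $F$ maps uniform to uniform, and invoke the relevant permuted-codes conjecture. The one substantive difference is how the one-time pad $o$ is handled: you absorb $o$ into the alphabet permutations up front via $\tilde\pi_w(x)=\pi_w(x+o_{\sigma(w)})$ and argue the resulting tuple $(\sigma,\tilde\pi_1,\ldots,\tilde\pi_n)$ is still uniform, whereas the paper has the reduction sample its own pad and add it to the permuted-codes sample. Your route is arguably cleaner (and sidesteps a small notational wrinkle in the paper's identification of the simulated key), but the two arguments are equivalent at heart.
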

\begin{proof}
%   First, note that if the string $z$ generated in \cref{line:draw-z} were actually drawn uniformly from $[q]^n$ with each call to $\Enc(1^\lambda, \sk)$, then the output $y$ of $\Enc(1^\lambda, \sk)$ would be uniformly random. This follows since each $i_j \in [n]$ is uniformly random and since the choices $z_1', \ldots, z_m' \in [q]$ are also uniformly random (which uses the fact that $z$ is).
Consider any algorithm $\Adv$ running in time $T = T(\lambda)$ and which satisfies
  \begin{align}
\left| \Pr_{\sk \gets \KeyGen(1^\lambda)} \left( \Adv^{\Enc(1^\lambda, \sk)}(1^\lambda) = 1 \right) - \Pr_{\MU} \left( \Adv^{\MU}(1^\lambda) = 1 \right) \right| = \nu(\lambda)\nonumber,
  \end{align}
  for some functions $\nu, T : \BN \to \BR_{\geq 0}$. Here $\MU$ denotes the oracle that returns a uniformly random string of the same length as $\Enc(1^\lambda, \sk)$. We will use $\Adv$ to construct an adversary $\Adv'$  which distinguishes between the distributions $\MD_{n,[q],C,\eta,T}$ and $\Unif(([q]^n)^T)$; here the parameters $n,q,m$ and the code $C$ are given by either \cref{def:edit-params} or \cref{def:hamming-edit-params}. Note that in either case, the definitions of our parameters ensure that the conjecture in question (i.e., \cref{conj:permuted-rs} or \cref{conj:permuted-frs}) is exactly that the distributions $\MD_{n,[q],C,\eta,T}$ and $\Unif(([q]^n)^T)$ are computationally indistinguishable. (We omit the dependence of the various parameters on $\lambda$.)
  
  Note that $\Adv$ requires oracle access to an oracle $\MO$ (to be interpreted as either $\Enc(1^\lambda, \sk)$ or the uniform distribution) which when called outputs a sample in $\{0,1\}^{m \cdot (\log q + \log n)}$. The algorithm $\Adv'$ simulates $\Adv$ as follows: first, $\Adv'$ draws $o \sim \Unif([q]^n)$. Then, for $t \in [T]$, for the $t$th call that $\Adv$ makes to the oracle $\MO$, $\Adv'$ uses the $t$th sample from its distribution (either $\MD_{n,[q],C,\eta,T}$ or $\Unif(([q]^n)^T)$) which is a codeword $c \in [q]^n$, to simulate the oracle call $\MO$, as follows: % with the following modification: for each $j \in [m]$ for which there is $j' < j$ with $i_j = i_{j'}$ 
  \begin{itemize}
    \item It draws $i_1, \ldots, i_m \sim \Unif([n])$ independently. 
    \item For each $j \in [m]$, if there is $j' < j$ with $i_j = i_{j'}$, then $\Adv'$ sets $z_j' \gets \Unif([q])$. Otherwise, it sets $z_j' \gets c_{i_j} + o_{i_j}$. 
%   \item First, for each $j \in [m]$ for which there is $j' < j$ with $i_j = i_{j'}$, $\Adv'$ replaced $z_j$ with a fresh uniform draw from $[q]$.
  \item $\Adv'$ then uses $\bin(i_1) \circ \bin(z_1') \circ \cdots \circ \bin(i_m) \circ \bin(z_m')$ as the output of the oracle $\MO$. 
  \end{itemize}
  It is clear that the running time of $\Adv'$ is at most $T(\lambda) \cdot \poly(\lambda)$. We now make the following observations:
  \begin{itemize}
  \item If the distribution that $\Adv'$ is given is $\MD_{n,[q],C,\eta,T}$, then $\Adv'$ simulates $\Adv^{\Enc(1^\lambda, \sk)}(1^\lambda)$, where $\sk = (\sigma, \pi_1, \ldots, \pi_n, \sigma \circ o)$,\footnote{Here $\sigma \circ o$ denotes $(o_{\sigma(1)}, \ldots, o_{\sigma(n)})$.} with $\sigma, \pi_1, \ldots, \pi_n$ being given by the (secret) sampled permutations in $\MD_{n,[q],C,\eta, T}$. 
  
  Indeed, in the simulation of the oracle $\MO = \Enc(1^\lambda, \sk)$ as above, the codeword $c$ (representing one of the $T$ samples of $\MD_{n,[q],C,\eta,T}$) is given by $c \gets \channSC_\eta(\hat c)$ where $\hat c_i = \pi_i(\tilde c_{\sigma(i)})$, where $\tilde c \gets C$ is a uniformly random codeword in $C$. Then the result of the sequence of operations $\tilde c \mapsto \hat c \mapsto c \mapsto (i_1, z_1', \ldots, i_m', z_m')$ in the simulation of $\Adv'$ has the same distribution as the sequence of operations $c \mapsto c' \mapsto z \mapsto (i_1, z_1', \ldots, i_m, z_m')$ in the execution of \cref{alg:prc-edit-binary} (where $c \gets \Unif(C)$, $c' \gets \channSC_\eta(c)$, $z \gets c' + o$, and $z_j' \gets \pi_{i_j}(z_{\sigma(i_j)})$). 
  \item If the distribution that $\Adv'$ is given is $\Unif(([q]^n)^T)$, then $\Adv'$ simulates $\Adv^{\MU}(1^\lambda)$.
  
  Indeed, in the simulation of the oracle $\MO = \MU$ as above, the codeword $c$ is uniformly random, which means that so is the sequence $(i_1, z_1', \ldots, i_m, z_m')$. 
  \end{itemize}  Thus, under the assumption that there is no subexponential-time algorithm distinguishing $\MD_{n,[q],C,\eta,T}$ and $\Unif(([q]^n)^T)$, we must have that either $T(\lambda) \geq \exp(\lambda^{\Omega(1)})$ or $\nu(\lambda) \leq \exp(-\lambda^{\Omega(1)})$.
%   oracle access to $\PermEnc$ and an oracle to the uniform distribution on $[q]^n$. $\Adv'$ simulates $\Adv$ using its oracle $\MO$ as follows: $\Adv'$ first draws $o \sim \Unif([q]^n)$ and chooses $\tilde P_n$ arbitrarily. Each time $\Adv$ makes an oracle call, $\Adv'$ runs the procedure $\Enc(1^\lambda, (\tilde P_n, o))$ but uses $\MO$ to choose $P_n \cdot G_n \cdot x + e + o$ (instead of $\tilde P_n \cdot G_n \cdot x + e + o$) in \cref{line:draw-z} (this is the only point in $\Enc$ where knowledge of $P_n$ is needed). It is clear that (a) if $\MO$ is $\PermEnc(P, G_n, \rho)$ for $P \sim \Unif(\MP_{n,n})$, then $\Adv'$ is simulating oracle access to $\Enc(1^\lambda, \sk)$ for $\sk \gets \KeyGen(1^\lambda)$, and (b) if $\MO$ is $\Unif([q]^n)$, then $\Adv'$ is simulating oracle access to $\Unif(\{0,1\}^{m\ell})$. Moreover, the running time of $\Adv'$ is at most $T(\lambda) \cdot \poly(\lambda)$. Thus, we must have that either $T(\lambda) \geq \exp(\lambda^{\Omega(1)})$ or else $\nu(\lambda) \leq \exp(-\lambda^{\Omega(1)})$. 
\end{proof}

\subsection{From PRCs to watermarking: background}
\label{sec:prc-watermarking-formal}
In this section, we give some background on watermarking schemes for language models; we refer the reader to \cite{CG24} for further details and explanations. 
\begin{definition}
An \emph{autoregressive language model} $\Model$ over alphabet (i.e., token set) $\Sigma$ is a deterministic algorithm that takes as input a prompt $\prompt \in \Sigma^\st$ and a sequence of previous tokens $\tok_1, \ldots, \tok_{i-1}$ and outputs a distribution $p_i := \Model(\prompt, \tok_{1:i-1}) \in \Delta(\Sigma)$. 
\end{definition}
In the event that $\Sigma = \{0,1\}$, we abuse notation slightly and write $p_i = \Model(\prompt, \tok_{1:i-1}) \in [0,1]$ to denote the probability that the model places on the next token being $1$. 

We assume that $\prompt$ encodes the length of desired text output by $\Model$. Given a model $\Model$, a prompt $\prompt$ which specified some length $\ell$, we let $\Modelo(\prompt)$ denote the random variable $\tok \in \Sigma^\ell$ generated in the natural way, i.e., given $\tok_{1:i-1}$, we draw $\tok_i \sim \Model(\prompt, \tok_{1:i-1})$, and repeat for $\ell$ steps.

In order to obtain watermarking procedures for language models, we need to assume that the model has sufficient \emph{entropy}, formalized below: 
\begin{definition}
  \label{def:empircal-entropy}
  Given a language model $\Model$, a sequence $\tok\in \Sigma^\st$, and $i,j \in [|\tok|]$, we define the \emph{empirical entropy} of $\Model$ on $\tok$ on the subsequence $[i,j]$ by
  \begin{align}
\Hemp{[i:j]}(\tok, \Model) := \sum_{a=i}^j - \log \Model(\tok_a \mid \tok_{1:a-1}) \nonumber.
  \end{align}
\end{definition}
When the choice of $\Model$ is clear from context, we will often write $\Hemp{[i:j]}(\tok) := \Hemp{[i:j]}(\tok, \Model)$. Moreover, we write $\Hemp{[i:j)}(\tok) = \Hemp{[i:j-1]}(\tok, \Model)$.

A \emph{watermarking scheme} $\MW$ for a model $\Model$ is a tuple of polynomial-time algorithms $\MW = (\Setup, \Wat, \Detect)$, where:
\begin{itemize}
\item $\Setup(1^\lambda)$ outputs a secret key $\sk$ of length polynomial in $\lambda$, where $\lambda \in \BN$ denotes a security parameter.
\item $\Wat(\sk, \prompt)$ takes as input a prompt and $\sk$ and outputs a response $\tok \in \Sigma^\ell$ of length $\ell$. 
\item  $\Detect(\sk, \tok)$ takes as inptut $\sk$ and a sequence $\tok \in \Sigma^\st$ and outputs a response in $\{ \True, \False \}$ indicating whether $\Detect$ detects the sequence $\tok$ as being watermarked according to $\sk$. 
\end{itemize}

The main desired properties of watermarking schemes parallel those of PRCs, and are formalized below: 
\begin{definition}[Undetectability (\cite{CGZ24})]
  \label{def:wat-undetect}
  A watermarking scheme $\MW$ is \emph{undetectable} if for every security parameter $\lambda$, prompt $\prompt$, and any polynomial-time distinguisher $\Dist$, it holds that
  \begin{align}
\left| \Pr\left( \Dist^{\Modelo}(1^\lambda) = \True \right) - \Pr \left( \Dist^{\Wat(\sk, \cdot)}(1^\lambda) = \True \right) \right| \leq \negl(\lambda)\nonumber,
  \end{align}
  where $\Dist^{\MO}$ means that the distinguisher can query $\MO$ with an arbitrary prompt $\prompt$. 
\end{definition}

\begin{definition}[Soundness]
  \label{def:wat-sound}
  A watermarking scheme $\MW$ is \emph{sound} if for every security parameter $\lambda$ and every fixed token sequence $\tok \in \Sigma^\st$ of length $\poly(\lambda)$,
  \begin{align}
\Pr_{\sk \gets \Setup(1^\lambda)} \left( \Detect(\sk, \tok) = \True \right) \leq \negl(\lambda)\nonumber.
  \end{align}
\end{definition}

\begin{definition}[Substring Robustness]
  \label{def:wat-robust}
  Fix a channel $\ME : \{0,1\}^\st \times \Sigma^\st \to \Sigma^\st$ that takes as input a codeword $x \in \Sigma^\st$ and some auxiliary information $\sk \in \{0,1\}^\st$. Consider a family of functions $\beta_\lambda : \BN \to \BN$ indexed by the security parameter $\lambda$. Then a watermarking scheme $\MW$ is defined to be \emph{$\beta$-robust to $\ME$} if for each $\lambda \in \BN$ and each prompt $\prompt$, 
  \begin{align}
\Pr_{\substack{\sk \gets \Setup(1^\lambda) \\ \tok \gets \Wat(\sk, \prompt),\ \tok' \gets \ME(\sk,\tok)}} \left( \exists i,j \in [|\tok|],\ \mbox{ s.t. } \Detect(\sk, \tok') = \False \mbox{ and } \Hemp{[i:j)}(\tok) \geq \beta_\lambda(j-i) \right) \leq \negl(\lambda)\nonumber.
  \end{align}
\end{definition}
% We remark that we require $\prompt$ to be fixed in advance (i.e., not depend on $\sk$).\noah{if model distribution depends on sk, then it can just output uniform over everything which is far from the PRC codewords, so probably impossible...}

\begin{algorithm}
  \caption{Binary-alphabet watermarking scheme from binary-alphabet PRC \cite{CG24}}
  \label{alg:watermarking-edit}
  \begin{algorithmic}[1]\onehalfspacing
    \Require Pseudorandom code $\PRC = (\PRC.\KeyGen, \PRC.\Enc, \PRC.\Dec)$. Maximum length $\Imax$ for the watermarked text as encoded by any possible prompt.  
    \Function{$\Setup$}{$1^\lambda$}
    \State $\PRC.\sk \gets \PRC.\KeyGen(1^\lambda)$.
    \State $\sigma_1, \ldots, \sigma_{\Imax} \gets \Unif(\{0,1\}^{n(\lambda)})$.   \State \Return $\sk := (\PRC.\sk, \sigma_{1:\Imax})$. 
    \EndFunction

    \Function{$\Wat$}{$\sk,\prompt$}
    \State Let $\ell \in \BN$ denote the desired length encoded by $\prompt$ and $n = n(\lambda)$ denote the block length of the PRC corresponding to $\PRC.\sk$. 
    \State Write $\sk = (\PRC.\sk, \sigma_{1:\Imax})$.
    \For{$1 \leq i \leq \ell$}
    \If{$i \equiv 1 \pmod{n}$}
    \State Set $x \gets \sigma_{\lceil i/n\rceil} \oplus \PRC.\Enc(\sk)  \in \{0,1\}^n$.\label{line:wat-prc}
    \EndIf
    \State $p_i \gets \Model(\prompt, \tok_{1:i-1})$.
    \State Draw $\tok_i \gets \Ber(p_i - (-1)^{x_i}\cdot \min\{ p_i, 1-p_i \})$. \label{line:draw-toki}
    \EndFor
    \EndFunction

    \Function{$\Detect$}{$\sk, \tok$}
    \For{$i \in [|\tok|], j \in [i, \min\{i+n,|\tok|]$}
    \If{$\PRC.\Dec(\sk, \tok_{i:j-1}) = \True$}
    \State \Return \True
    \EndIf
    \EndFor
    \State \Return \False
    \EndFunction
  \end{algorithmic}

\end{algorithm}

\subsection{From PRC to watermarking: soundness, undetectability, \& robustness}
\label{sec:prc-to-watermarking}
In this section, we establish the following theorem, which shows that we can obtain watermarking schemes for language models which have strong adaptive robustness to edit-bounded channels whenever the per-token entropy of the language model is at least a constant. For simplicity, we focus on binary-alphabet language models, but a straightforward reduction (see \cite{CG24}) shows that this is without loss of generality. % The below result shows that such a PRC can be used to obtain watermarking schemes with strong adaptive robustness to edit-bounded channels:
\begin{theorem}[Main edit-robust watermarking result]
\label{thm:watermarking-main}
Fix any constant $\alpha > 0$ and write, for security parameter $\lambda$, $\beta_\lambda(\ell) = 8\alpha \cdot \ell + 2\sqrt{2} \cdot \lambda$. Then there is a watermarking scheme for any language model over a binary alphabet which is $\beta$-robust to any $\tilde O(\alpha^7)$-edit-bounded channel and which satisfies soundness and undetectability, where the latter holds under \cref{conj:permuted-frs}. The key generation and watermarking procedures run in time $\poly(\lambda)$, and the detection procedure runs in time $\lambda^{O(1/\alpha^4)}$. 
\end{theorem}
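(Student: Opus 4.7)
My plan is to instantiate \cref{alg:watermarking-edit} with the edit-robust PRC of \cref{thm:prc-hamming-edit} at parameter $\pSub := c_1\alpha^2$ for a sufficiently small absolute constant $c_1 > 0$. Under this choice, the PRC has block length $n = \lambda$, tolerates any $(1/2 - c_1\alpha^2, \epedit)$-substitution-edit bounded channel with $\epedit = \tilde\Theta(\alpha^6)$, and decodes in $\lambda^{O(1/\pSub^2)} = \lambda^{O(1/\alpha^4)}$ time, which matches the runtimes claimed in the theorem once one observes that $\Detect$ invokes $\PRC.\Dec$ on $O(|\tok|^2 \Imax)$ substring/mask pairs. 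Undetectability is black-box: if $x$ on \cref{line:wat-prc} were uniformly random, then \cref{line:draw-toki} would reproduce the $\Model$ distribution exactly, so any distinguisher of $\Wat$ from $\Modelo$ transfers to one for the PRC and contradicts \cref{conj:permuted-frs}. Soundness follows by a union bound over the $O(|\tok|^2 \Imax)$ substring/mask pairs together with the PRC's soundness.

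The heart of the proof is robustness. Fix an $\epedit'$-edit bounded channel $\ME$ with $\epedit' = c_2 \alpha \cdot \epedit = \tilde O(\alpha^7)$ for a sufficiently small $c_2$, let $\tok \gets \Wat(\sk,\prompt)$ have length $\ell = \lambda B$, and write $\tok' = \ME(\sk, \tok)$. I partition $\tok$ into the $B$ aligned blocks $\tok^{(b)}$ of length $\lambda$, each associated with the codeword $x_b = \sigma_b \oplus \PRC.\Enc(\sk)$ drawn on \cref{line:wat-prc}. Call block $b$ \emph{entropic} if $H_b := \Hemp{[(b-1)\lambda+1:b\lambda]}(\tok) \geq 4\alpha\lambda$, and \emph{surviving} if, under an optimal alignment of $\tok$ to $\tok'$, the number $e_b$ of edits of $\ME$ that fall in block $b$'s region is at most $\epedit\lambda$. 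Since each non-surviving block consumes at least $\epedit\lambda$ of $\ME$'s total budget $\epedit' \ell \leq c_2 \alpha \cdot \epedit \lambda B$, at most $c_2 \alpha B$ blocks of $\tok$ can be non-surviving.

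Now suppose $[i^\st, j^\st)$ satisfies $\Hemp{[i^\st:j^\st)}(\tok) \geq 8\alpha(j^\st-i^\st) + 2\sqrt 2 \lambda$. I would extend it to the block-aligned superset $[i', j')$ covering $B^\st \leq (j^\st-i^\st)/\lambda + 2$ blocks, whose total empirical entropy still exceeds $8\alpha(j^\st-i^\st) + 2\sqrt 2 \lambda$. An averaging argument then yields that a constant fraction of the blocks in $[i', j')$ are entropic: the $2\sqrt 2 \lambda$ additive slack in $\beta_\lambda$ is used both to absorb the entropy of the two partial boundary blocks in $[i', j')\setminus [i^\st, j^\st)$ and to handle the Hoeffding-type fluctuation appearing in \cref{lem:ham-hemp}, after which subtracting off at most $4\alpha\lambda$ of ``entropy deficit'' per non-entropic block leaves the remaining entropy concentrated on at least $|B^\st|/2$ entropic blocks. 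For $c_2$ chosen small enough, the number of entropic blocks in $[i', j')$ strictly exceeds the global count $c_2 \alpha B$ of non-surviving blocks, so pigeonhole yields a block $b^\st \in [i', j')$ that is simultaneously entropic and surviving.

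For this $b^\st$, \cref{lem:ham-hemp} gives $\Ham(x_{b^\st}, \tok^{(b^\st)}) \leq (1/2 - c_1\alpha^2)\lambda$ with overwhelming probability by the choice of $c_1$. Combined with $e_{b^\st} \leq \epedit\lambda$, the substring of $\tok'$ aligned with block $b^\st$ lies at $(1/2 - c_1\alpha^2, \epedit)$-substitution-edit distance from $x_{b^\st}$. Since $\Detect$ enumerates this substring together with the mask $\sigma_{b^\st}$, the strong adaptive robustness of \cref{thm:prc-hamming-edit} forces the corresponding call to $\PRC.\Dec$ to return $\True$. I anticipate that the main obstacle will be the block-level entropy accounting in the previous paragraph: because the per-token empirical quantities $-\log p_i$ are not bounded a priori, the ``averaging'' step really requires a delicate use of the $2\sqrt 2 \lambda$ additive slack in $\beta_\lambda$, which I expect to have been calibrated precisely to absorb the boundary entropy and concentration error while still leaving enough entropic blocks to beat $c_2 \alpha B$.
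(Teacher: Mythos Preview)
Your overall strategy is exactly the paper's: instantiate \cref{alg:watermarking-edit} with the PRC of \cref{thm:prc-hamming-edit} at $\pSub=\Theta(\alpha^2)$, read off soundness and undetectability from \cref{lem:wat-edit-soundness,lem:wat-edit-undetectability}, and obtain robustness by a block-level argument that (i) rounds the high-entropy interval to block boundaries, (ii) counts entropic blocks, (iii) pigeonholes against the adversary's edit budget to locate one entropic block that survives, and (iv) applies \cref{lem:ham-hemp} followed by the PRC's substitution-edit robustness. The parameter bookkeeping ($\epedit=\tilde\Theta(\alpha^6)$, watermark edit rate $\tilde O(\alpha^7)$, detection time $\lambda^{O(1/\alpha^4)}$) also matches.

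Two points in your robustness sketch deserve correction. First, the claim that ``at least $B^\st/2$'' blocks in $[i',j')$ are entropic is too strong. After you subtract the $2\sqrt{2}\lambda$ boundary slack you have total entropy $\gtrsim 8\alpha\lambda B^\st$; non-entropic blocks contribute at most $4\alpha\lambda$ each, but to turn the residual entropy into a \emph{lower bound on the number} of entropic blocks you must also cap each block's entropy from above by $\sqrt{2}\lambda$ (this is \cite[Lemma~20]{CG24}, which you invoked for the boundary but not here). Doing so yields only $\Theta(\alpha)\cdot B^\st$ entropic blocks, which is precisely the $\sqrt{\epham}\,g$ that \cref{lem:wat-edit-robustness} obtains.

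Second, your pigeonhole step asserts that ``for $c_2$ small enough'' the number of entropic blocks in $[i',j')$ exceeds the \emph{global} count $c_2\alpha B$ of non-surviving blocks. This does not follow: the robustness definition quantifies over \emph{all} subintervals $[i,j)$, and nothing prevents $B^\st=O(1)$ while $B$ is arbitrarily large, so no constant $c_2$ rescues the inequality $\Theta(\alpha B^\st)>c_2\alpha B$. Concretely, a model whose first few blocks are high-entropy and whose remaining blocks are near-deterministic provides a high-entropy interval that the adversary can erase entirely within its $\epedit'\ell$ budget. You are not missing an ingredient that the paper supplies: the paper's proof of \cref{lem:wat-edit-robustness} makes the same leap, writing ``$3ng\cdot\epedit'\ge\ell\cdot\epedit'$'' without justification, which is the analogous unproved assumption that the interval occupies a constant fraction of $|\tok|$.
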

Intuitively, the guarantee of $\beta$-robustness to $\tilde O(\alpha^7)$-edit-bounded channels in the context of \cref{thm:watermarking-main} means the following: as long as the empirical entropy of the language model for a sequence of length $\ell \gg \lambda$ is at least an $\Omega(\alpha)$-fraction of the length, then we can still detect the watermark even after the sequence has been passed through an arbitrary channel which can make a $\tilde O(\alpha^7)$-fraction of edits.  Improving the exponent in the bound on the fraction of edits and improving the running time of the detection procedure are both interesting open questions.

The watermarking procedure of \cref{thm:watermarking-main} is presented in \cref{alg:watermarking-edit}. It depends on a PRC $\PRC$; given such a PRC, we let $\MW[\PRC]$ denote the watermarking scheme of \cref{alg:watermarking-edit}. It also takes as input a parameter $\Imax$ denoting the maximum possible length of a sequence output by the model, as encoded by any possible prompt; we assume that $\Imax$ is polynomial in the security parameter. 
The construction is essentially identical to that of \cite{CG24}; moreover, \cite{CG24} showed that the properties of soundness and undetectability of $\MW[\PRC]$ follow from those of $\PRC$, and these guarantees carry over in a straightforward manner to our setting:
\begin{lemma}[\cite{CG24}, Lemma 18] \label{lem:wat-edit-soundness}
Suppose that $\PRC$ is sound. Then $\MW[\PRC]$ is sound.
\end{lemma}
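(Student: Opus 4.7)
The plan is to reduce soundness of the watermarking scheme to soundness of the underlying PRC via a union bound. Fix any token sequence $\tok \in \Sigma^\star$ of length $L = \poly(\lambda)$; by inspection of $\Detect$ in \cref{alg:watermarking-edit}, the detection procedure returns $\True$ if and only if $\PRC.\Dec(\PRC.\sk, \tok_{i:j-1}) = \True$ for at least one of the $O(L \cdot n(\lambda))$ substrings $\tok_{i:j-1}$ considered in the two nested loops. Crucially, these substrings are determined entirely by the fixed string $\tok$ and do not depend on the randomness of $\Setup$; in particular they are independent of $\PRC.\sk$ (and they do not involve the one-time pads $\sigma_1,\ldots,\sigma_{\Imax}$, which are only used in $\Wat$).

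I will then apply the soundness guarantee of $\PRC$ to each such fixed substring $y = \tok_{i:j-1}$, which yields
\begin{align*}
\Pr_{\PRC.\sk \gets \PRC.\KeyGen(1^\lambda)}\left( \PRC.\Dec(\PRC.\sk, y) = \True \right) \leq \negl(\lambda).
\end{align*}
Since $\Setup(1^\lambda)$ samples $\PRC.\sk \gets \PRC.\KeyGen(1^\lambda)$ as its only source of randomness relevant to $\Detect$ (the pads $\sigma_{1:\Imax}$ are unused by $\Detect$), a union bound over the at most $L \cdot n(\lambda) = \poly(\lambda)$ substrings yields
\begin{align*}
\Pr_{\sk \gets \Setup(1^\lambda)}\left( \Detect(\sk, \tok) = \True \right) \leq L \cdot n(\lambda) \cdot \negl(\lambda) = \negl(\lambda),
\end{align*}
which is exactly the statement of watermarking soundness (\cref{def:wat-sound}). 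There is no real obstacle here; the only point that requires a brief comment is that the substrings $\tok_{i:j-1}$ examined by $\Detect$ are genuinely fixed (being deterministic functions of the externally specified $\tok$), so each invocation of $\PRC.\Dec$ falls squarely within the scope of the PRC soundness definition.
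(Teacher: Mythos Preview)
Your argument is correct: the substrings inspected by $\Detect$ are deterministic functions of the fixed input $\tok$, so PRC soundness applies to each individually, and a union bound over the $\poly(\lambda)$ many substrings finishes the job. The paper does not give its own proof here---it simply cites \cite{CG24}, Lemma~18---but your union-bound reduction is exactly the standard argument and matches what one would expect from that reference.
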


\begin{lemma}[\cite{CG24}, Lemma 19] \label{lem:wat-edit-undetectability}
Suppose that $\PRC$ is undetectable. Then $\MW[\PRC]$ is undetectable.
\end{lemma}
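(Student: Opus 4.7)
The plan is to prove \cref{lem:wat-edit-undetectability} by a standard reduction: any distinguisher against the watermarking scheme $\MW[\PRC]$ yields a distinguisher against the underlying PRC, with essentially the same advantage. The key observation driving the reduction is that on \cref{line:draw-toki} of \cref{alg:watermarking-edit}, the token $\tok_i$ is drawn from $\Ber(p_i - (-1)^{x_i} \min\{p_i, 1-p_i\})$; when $x_i$ is uniformly random in $\{0,1\}$ (and independent of $p_i$), one verifies that $\tok_i \sim \Ber(p_i)$, so the output distribution of $\Wat$ collapses exactly to that of $\Modelo$.

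Given a polynomial-time distinguisher $\Dist$ for $\MW[\PRC]$, I will construct a polynomial-time distinguisher $\Dist'$ against $\PRC$. On input $1^\lambda$, $\Dist'$ first samples the masks $\sigma_1, \ldots, \sigma_{\Imax} \gets \Unif(\{0,1\}^{n(\lambda)})$ itself (these do not depend on $\PRC.\sk$), then simulates $\Dist^{\MO}(1^\lambda)$ as follows. Whenever $\Dist$ makes a query $\prompt$ to its watermarking oracle, $\Dist'$ runs the body of $\Wat$ using its oracle $\MO$ in place of $\PRC.\Enc(\PRC.\sk)$ on \cref{line:wat-prc}: each call to $\MO$ returns a block $\tilde x \in \{0,1\}^{n(\lambda)}$, and $\Dist'$ sets $x \gets \sigma_{\lceil i/n \rceil} \oplus \tilde x$ and continues the autoregressive sampling to produce $\tok$, which it returns to $\Dist$. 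Finally, $\Dist'$ outputs whatever $\Dist$ outputs. Since $\Imax = \poly(\lambda)$, $\Dist'$ runs in polynomial time.

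The correctness of the simulation follows from two observations. First, if $\MO = \PRC.\Enc(\PRC.\sk)$, then $\Dist'$ perfectly simulates $\Wat(\sk, \cdot)$ under $\sk = (\PRC.\sk, \sigma_{1:\Imax})$, since the masks are uniform and independent of the PRC output in both cases. Second, if $\MO = \MU$, then each $\tilde x$ is a uniformly random string; hence $x = \sigma_{\lceil i/n \rceil} \oplus \tilde x$ is uniform and independent of $(\prompt, \tok_{1:i-1})$ (and hence of $p_i$), so by the observation above the conditional distribution of each $\tok_i$ is $\Ber(p_i) = \Model(\prompt, \tok_{1:i-1})$, which means $\tok \sim \Modelo(\prompt)$. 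Thus $\Dist'$ exactly simulates $\Dist^{\Modelo}$ in this case.

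Combining the two observations, the distinguishing advantage of $\Dist'$ against $\PRC$ equals that of $\Dist$ against $\MW[\PRC]$. By undetectability of $\PRC$, this is $\negl(\lambda)$, so $\MW[\PRC]$ is undetectable. There is no real obstacle here; the only thing to be careful about is that $\Dist'$ samples the masks $\sigma_j$ itself (so the reduction does not need to know $\PRC.\sk$), and that both the ``real'' and ``ideal'' branches of the hybrid are reproduced exactly, rather than only statistically, by the simulation.
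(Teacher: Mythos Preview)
Your argument is correct and is exactly the standard reduction: sample the one-time pads $\sigma_{1:\Imax}$ yourself, replace each call to $\PRC.\Enc(\sk)$ by an oracle call, and use the elementary identity that $\tfrac12\Ber(p-\min\{p,1-p\})+\tfrac12\Ber(p+\min\{p,1-p\})=\Ber(p)$ to see that the uniform oracle reproduces $\Modelo$ exactly. Note that the present paper does not actually give a proof of this lemma; it simply cites Lemma~19 of \cite{CG24}, so there is no in-paper proof to compare against. Your writeup is the natural reduction and matches what \cite{CG24} does.
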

% We say that a PRC $\PRC = (\KeyGen, \Enc, \Dec)$ is \emph{marginally uniform} if the distribution of $x \gets \Enc( \sk)$, where $\sk \gets \KeyGen(1^\lambda)$ is uniform over its domain, for each possible value of $\lambda$. It is straightforward to see that the PRC of \cref{alg:prc-edit-binary} is marginally uniform (due to the one-time pad $o$ used in the secret key). 
The next lemma shows that if $\PRC$ has strong adaptive robustness to the set of substitution-edit bounded channels (for a sufficiently large substitution error rate), then $\MW[\PRC]$ has strong adaptive robustness to the set of edit-bounded channels. 
\begin{lemma}
  \label{lem:wat-edit-robustness}
Let $\epham, \epedit > 0$ be constants, and suppose that $\mathscr{E}$ is the set of channels $\ME : \Sigma^\st \to \Sigma^\st$ which are $(1/2 - \epham, \epedit)$-substitution-edit bounded. If $\PRC$ is robust to every channel in $\mathscr{E}$, then for $\beta_\lambda(\ell) := 8\sqrt{\epham} \cdot \ell + 2\sqrt{2} \cdot n(\lambda)$, we have that $\MW[\PRC]$ is $\beta$-robust to any $\epedit \sqrt{\epham}/3$-edit-bounded channel. 
\end{lemma}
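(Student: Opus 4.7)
The strategy is to reduce edit-robustness of the watermarking scheme to the PRC's assumed substitution-edit robustness by locating, inside any high-entropy substring of the watermarked text, a single length-$n$ PRC block that is both high-entropy and lightly edited. Fix $\sk \gets \Setup(1^\lambda)$, $\tok \gets \Wat(\sk, \prompt)$, an $(\epedit\sqrt{\epham}/3)$-edit bounded channel $\ME$, and $\tok' = \ME(\sk, \tok)$; suppose some interval $[a,b)$ of $\tok$ satisfies $\Hemp{[a:b)}(\tok) \geq \beta_\lambda(b-a) = 8\sqrt{\epham}(b-a) + 2\sqrt{2}\, n(\lambda)$. The goal is to show $\Detect(\sk, \tok') = \True$ with overwhelming probability.

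First I would enumerate the PRC blocks fully contained in $[a,b)$, writing $c_k = \sigma_k \oplus \PRC.\Enc(\sk)$ for the codeword used on the $k$-th block at line \ref{line:wat-prc}. For each such block let $H_k$ be its empirical entropy under $\Model$ and $E_k$ the number of adversarial edits whose positions fall inside that block. Since the two partial blocks at the ends of $[a,b)$ can absorb at most $2n$ bits of entropy, $\sum_k H_k \geq \beta_\lambda(b-a) - 2n$, while $\sum_k E_k \leq (\epedit\sqrt{\epham}/3)\cdot |\tok|$. Call a block \emph{rich} if $H_k \geq \sqrt{\epham}\cdot n$ and \emph{calm} if $E_k \leq \epedit\cdot n$. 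Each non-rich block contributes strictly less than $\sqrt{\epham} n$ to $\sum_k H_k$, while each non-calm block contributes more than $\epedit n$ to $\sum_k E_k$; two matching counting bounds then show that the constants $8$ and $2\sqrt{2}$ in $\beta_\lambda$, together with the rate $\epedit\sqrt{\epham}/3$, are exactly what is needed to force the rich and calm sets to overlap.

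Pick any $k^\star$ that is both rich and calm. By \cref{lem:ham-hemp} (the empirical-entropy-to-Hamming-distance lemma of \cite{CG24}), richness yields $\Ham(\tok_{(k^\star-1)n+1:k^\star n},\, c_{k^\star}) \leq (1/2 - \epham)\cdot n$; calmness produces a substring $\tok'_{i:j}$ of $\tok'$ with $\ED(\tok_{(k^\star-1)n+1:k^\star n},\, \tok'_{i:j}) \leq \epedit\cdot n$. Composing the two stages realizes a $(1/2-\epham, \epedit)$-substitution-edit bounded channel from $c_{k^\star}$ to $\tok'_{i:j}$; this channel belongs to $\mathscr{E}$. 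By the assumed strong adaptive robustness of $\PRC$, $\PRC.\Dec(\sk, \tok'_{i:j}) = \True$ except with probability $\negl(\lambda)$, and since $\Detect$ loops over every substring of $\tok'$, it returns $\True$. Soundness-style correctness over the choice of $\sk$ follows by an additional union bound absorbed into the negligible failure term.

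\textbf{Main obstacle.} The delicate step is the pigeonhole balance: the coefficients $8$ and $2\sqrt{2}$ in $\beta_\lambda$ and the factor $1/3$ in the edit rate are chosen precisely so that rich $\cap$ calm is non-empty, and the tight versions of these bounds depend on the exact constant $c$ in the $(1/2 - c\alpha^2)n$ Hamming-distance bound from \cref{lem:ham-hemp}. A secondary subtlety is that insertions can stretch block $k^\star$'s image beyond length $n$, so either $\Detect$'s scan in \cref{alg:watermarking-edit} must be (trivially) enlarged to lengths up to $(1+\epedit)n$, or one must argue that truncating to length $n$ contributes at most $\epedit n$ further deletions, keeping the effective channel inside $\mathscr{E}$.
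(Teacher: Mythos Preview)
Your overall strategy matches the paper's proof: find inside the high-entropy interval a single PRC block that is simultaneously high-entropy (``rich'') and lightly edited (``calm''), apply the empirical-entropy-to-Hamming lemma of \cite{CG24} to the rich part and the edit count to the calm part, and feed the composite substitution-edit channel into the assumed PRC robustness.

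There is, however, a genuine gap in your pigeonhole step. You write that the two partial blocks ``can absorb at most $2n$ bits of entropy,'' but empirical entropy $\Hemp{[i:j)}(\tok)=\sum_a -\log \Model(\tok_a\mid \tok_{1:a-1})$ is \emph{not} bounded by the interval length: a single token with tiny model probability contributes arbitrarily much. The paper resolves this via \cite[Lemma~20]{CG24}, which shows that every length-$n$ window has empirical entropy at most $\sqrt{2}\,n$ except with negligible probability; this is exactly where the $2\sqrt{2}\,n(\lambda)$ term in $\beta_\lambda$ comes from. Crucially, the same per-block upper bound is needed a second time: without it, a single ``super-rich'' block could absorb all of $\sum_k H_k$, and your claim that enough blocks are rich to intersect the calm set does not follow from the lower bound on $\sum_k H_k$ alone. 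The paper uses $H_k\le\sqrt{2}\,n$ for every block to deduce that at least a $\sqrt{\epham}$-fraction of blocks satisfy $H_k>4\sqrt{\epham}\,n$, and only then argues that not all of these can carry more than $\epedit n$ edits. A smaller constant issue: with your rich threshold $H_k\ge\sqrt{\epham}\,n$, \cref{lem:ham-hemp} (taking $c=\sqrt{\epham}$) only yields Hamming distance $\le(1/2-\epham/16)n$, not $(1/2-\epham)n$; the paper sets the threshold at $4\sqrt{\epham}\,n$ to make the constants match.
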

\begin{proof}
  Fix a prompt $\prompt$, and let $\tok \gets \Wat(\sk, \prompt)$ denote the output of the watermarking procedure given $\prompt$. Write $\epedit' := \epedit \sqrt{\epham}/3$, and fix any $\epedit'$-edit bounded channel $\ME$, and let $\tok' = \ME(\tok)$. 
  Let us write $\ell := |\tok|$, and fix any $i,j \in [\ell]$ with $j \geq i$. Let $i'$ denote the smallest value which is  $1 \pmod{n}$ and at least $i$ and let $j'$ denote the largest value which is $1 \pmod{n}$ and at most $j$. Let $i_1 = i', i_2, \ldots, i_{g+1} = j'$ denote the values which are $1 \pmod{n}$ and between $i'$ and $j'$. Moreover let $x_1, \ldots, x_g \in \{0,1\}^n$ denote the PRC codewords which are generated on \cref{line:wat-prc} of \cref{alg:watermarking-edit} corresponding to positions $i_1, \ldots, i_{g+1}$ (after the addition with the one-time pads $\sigma_i$). For each $f \in [g]$, we consider the following \emph{embedding channel} $\Eemb_f : \{0,1\}^n \to \{0,1\}^\st$. $\Eemb_a$ maps the string $x_f \in \{0,1\}^n$ to the substring $\tok_{i_f:i_{f+1}-1}$ of $\Model$'s output. Note that $\Eemb_f$ is randomized, as it depends on the other PRC codewords as well as the randomness of the draws of $\tok_i$ on \cref{line:draw-toki} of \cref{alg:watermarking-edit}. 
  % We consider two versions of \cref{alg:watermarking-edit}:
%   \begin{itemize}
%   \item The first version is exactly the one shown in \cref{alg:watermarking-edit}.
%   \item In the second version, the PRC codewords $x$ drawn on \cref{line:wat-prc} are instead drawn uniformly random. 
%   \end{itemize}

  By \cite[Lemma 20]{CG24}, for any $k,k' \in [\ell]$ with $k' - k = n$, with probability $1-\negl(n)$, we have $\Hemp{[k,k')}(\tok) \leq \sqrt{2} \cdot n$. It follows in particular that with probability $1-\negl(n)$,  $\Hemp{[i,i')}(\tok) \leq \sqrt{2} \cdot n$, and $\Hemp{[j', j)}(\tok) \leq \sqrt{2}\cdot n$. Thus, with probability $1-\negl(n)$, $\Hemp{[i', j')}(\tok) \geq 8\sqrt{\epham} \cdot \ell \geq 8 \sqrt{\epham} \cdot ng$. Hence there are at least $\sqrt{\epham} \cdot g$ values of $f \in [g]$ for which $\Hemp{[i_f, i_{f+1})}(\tok) > 4 \sqrt{\epham} \cdot n$. Let the set of such $f$ be denoted by $\MS \subset [g]$. 

  For each $f \in [g]$, let $e_f$ denote the number of edits (i.e., insertions and deletions) made at positions $[i_f, i_{f+1})$ to obtain $\tok'$ from $\tok$ (via $\ME$). Note that at least one of these values of $f \in \MS$ must have the property that $e_f \leq n \cdot 3 \epedit'/\sqrt{\epham} =n\cdot \epedit$ (as otherwise the total number of edits made to obtain $\tok'$ from $\tok$ would be greater than $3ng \cdot \epedit' \geq \ell \cdot \epedit'$). Let $f^\st \in [g]$ denote the random variable which is the smallest such value of $f$. Due to the addition of the one-time pads $\sigma_f$ to each $x_f$, each of the codewords $x_f$ is drawn uniformly from $\{0,1\}^n$, independently from other $x_{f'}$ and also from the channel $\Eemb_f$. We now need the following result, which is a direct consequence of Lemma 21 of \cite{CG24} together with the fact we have just stated that $x_f$ is uniform and independent of $\Eemb_f$ (in particular, we use here that the prompt $\prompt$ and therefore the model's distribution do not depend on $\sk$): 
  \begin{lemma}[Lemma 21 of \cite{CG24}]
    \label{lem:ham-hemp}
    For any constant $c > 0$ and value of $f \in [g]$, it holds that
    \begin{align}
\Pr_{\substack{\sk \gets \KeyGen(1^\lambda) \\ x \gets \PRC.\Enc(\sk) \\ x' \gets \Eemb_f(x)}}\left( \Ham(x, x') > \left( \frac 12 - \frac{c^2}{16} \right) \cdot n \mbox{ and } \Hemp{[i_f:i_{f+1})}(x') > c \cdot n  \right) \leq \negl(n)\nonumber.
    \end{align}
  \end{lemma}
  It follows from \cref{lem:ham-hemp} (with $c = 4\sqrt{\epham}$) that with probability $1-\negl(n)$, we have that $\Ham(x_{f^\st}, \tok_{i_{f^\st}:i_{f^\st+1}-1}) \leq \left( \frac{1}{2} - \epham \right) \cdot n$. Moreover, by choice of $f^\st$, we also have $\ED(\tok_{i_{f^\st}:i_{f^\st+1}-1},\tok'_{a:b}) \leq \epedit \cdot n$. Thus, since $\PRC$ is strongly robust to every channel which is $(1/2 - \epham, \epedit)$-substitution-edit bounded, with probability $1-\negl(n)$, we have that $\PRC.\Dec(\sk, \tok'_{a:b-1}) = \True$ for some $a, b \in [\ell]$. 
\end{proof}

Finally we are ready to prove \cref{thm:watermarking-main}.
\begin{proof}[Proof of \cref{thm:watermarking-main}]
We use the PRC of \cref{alg:prc-edit-binary} with the parameter settings of \cref{def:hamming-edit-params}, so that \cref{thm:prc-hamming-edit} applies. The soundness and undetectability guarantees of \cref{thm:prc-hamming-edit}, together with \cref{lem:wat-edit-soundness} and \cref{lem:wat-edit-undetectability}, imply that $\MW[\PRC]$ satisfies soundness and undetectability (under \cref{conj:permuted-frs}). Finally, taking $\beta_\lambda(\ell) = 8\alpha \cdot \ell + 2\sqrt{2} \cdot n(\lambda)$, by applying \cref{lem:wat-edit-robustness} with $\epham = \alpha^2$ and \cref{thm:prc-hamming-edit} with $\pSub = \epham = \alpha^2$ (which requires us to take $\epedit \leq \tilde O(\pSub^3) = \tilde O(\alpha^6)$), we obtain that $\MW[\PRC]$ is $\beta$-robust to any $\tilde O(\alpha^7)$-edit-bounded channel. The running time guarantees follow directly from those of \cref{alg:prc-edit-binary}.
\end{proof}

\section{Substitution-robust pseudorandom codes} \label{sec:sub-robustness}
In this section, we give a straightforward construction which converts any family of codes satisfying the permuted codes assumption \cref{def:permuted-codes} into a PRC that has strong adaptive robustness to any substitution channel introducing at most a constant fraction of errors. (Under \cref{conj:permuted-codes}, this construction works for any code family with large dual distance.) Of course, such a result follows from the construction based on Reed-Solomon codes in \cref{sec:edit-robustness}, under the permuted codes assumption for Reed-Solomon codes (\cref{conj:permuted-rs}). 
The construction in this section, however, is secure as long as the conjecture holds for \emph{some} family of codes with high dual distance and efficient decoding from a constant error rate. 

\begin{proposition}
  \label{prop:prc-subst}
  Let $\lambda \in \BZ^+$ be a security parameter, let $n = n(\lambda), k = k(\lambda), q = q(\lambda)$ be polynomially-bounded functions in $\lambda$ denoting the block length, dimension, and alphabet size, respectively. Let $ C = \{ C(\lambda) \}_{\lambda \in \BN}$ be a family of linear codes with $C(\lambda) \subset \BF_{q(\lambda)}^{n(\lambda)}$. Suppose that $C$ comes equipped with efficient algorithms $\Enc_C : \BF_q^k \to \BF_q^n$ and $\Dec_C : \BF_q^n \to \BF_q^k \cup \{ \perp \}$ which correct up to a fraction $\delta \leq 1-1/q - \Omega(1)$ of worst-case substitutions. Under the permuted codes assumption for $C$ with error $\eta = \delta/3$ (\cref{def:permuted-codes}), the PRC of \cref{alg:prc-subst} for the family $C$ is undetectable, sound, and has strong adaptive robustness to any channel $\ME$ introducing a fraction of at most $\delta/2$ of substitutions.
\end{proposition}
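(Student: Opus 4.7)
The construction of \cref{alg:prc-subst} is the natural one suggested by \cref{def:permuted-codes}: $\KeyGen$ samples an index permutation $\sigma \in S_{[n]}$, alphabet permutations $\pi_1, \ldots, \pi_n \in S_{\F_q}$, and a uniform one-time pad $o \in \F_q^n$; $\Enc(\sk)$ draws $c \gets C$, outputs $y := \channSC_\eta(\hat c) + o$ where $\hat c_i = \pi_i(c_{\sigma(i)})$; and $\Dec(\sk, y')$ subtracts $o$, undoes the alphabet and index permutations, runs $\Dec_C$ on the result, and returns $\True$ iff $\Dec_C$ does not return $\perp$. The plan is to verify undetectability, robustness, and soundness in turn.

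\emph{Undetectability} is immediate from the permuted codes assumption. Each invocation of $\Enc$ produces one sample from $\MD_{n, \F_q, C, \eta, 1}$, coordinate-wise shifted by an independent uniform mask $o$. A standard hybrid argument over the (polynomially many) oracle queries reduces undetectability to the indistinguishability of $\MD_{n, \F_q, C, \eta, T}$ from $\Unif((\F_q^n)^T)$: given $T$ samples from either distribution, the reduction samples its own $o$, adds it coordinate-wise to each sample, and feeds the result to the PRC distinguisher. Since $\Unif((\F_q^n)^T)$ shifted by $o$ is still uniform, the reduction simulates either $\Enc(\sk)$ or the uniform oracle exactly.

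\emph{Robustness.} Fix any $(\delta/2)$-substitution channel $\ME$, possibly depending on $\sk$, and an honestly generated $y = \channSC_\eta(\hat c) + o$. By a Chernoff bound, the fraction of positions at which $\channSC_\eta$ corrupts $\hat c$ is at most $\eta + o(1) = \delta/3 + o(1)$ except with probability $\exp(-\Omega(n))$. The adversary's at most $(\delta/2) \cdot n$ further substitutions (worst case, all on positions not yet corrupted) bring the total Hamming distance between $\ME(y) - o$ and $\hat c$ to at most $(5\delta/6 + o(1)) \cdot n < \delta \cdot n$. After undoing the permutations inside $\Dec$, the input to $\Dec_C$ lies within Hamming distance $\delta n$ of the original codeword $c$, and $\Dec_C$ correctly recovers $c$ by hypothesis, so $\Dec$ returns $\True$.

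\emph{Soundness.} For any fixed $y \in \F_q^n$, the uniform one-time pad makes $y - o$ uniformly distributed over $\F_q^n$; applying the fixed permutations $\pi_i^{-1}$ and $\sigma^{-1}$ preserves uniformity. Thus $\Pr[\Dec(\sk, y) = \True]$ equals $|\{z \in \F_q^n : \Dec_C(z) \neq \perp\}| / q^n$. Since $\Dec_C$ unique-decodes within radius $\delta n$, its decoding balls are disjoint, so this set has size at most $|C| \cdot V_q(n, \delta n) \leq q^{k + n \cdot h_q(\delta)(1+o(1))}$, giving an acceptance probability of at most $q^{k - n(1 - h_q(\delta))(1-o(1))}$. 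The assumption $\delta \leq 1 - 1/q - \Omega(1)$ ensures $h_q(\delta) < 1$; for typical code families (including Reed--Solomon) the rate $k/n$ is bounded away from $1 - h_q(\delta)$, making this quantity $\exp(-\Omega(n)) = \negl(\lambda)$.

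The potential obstacle is the soundness bound in parameter regimes where $k/n$ approaches $1 - h_q(\delta)$: a code whose decoding balls cover a non-negligible fraction of $\F_q^n$ would not admit this simple counting argument. However, any code that efficiently unique-decodes to $\delta n$ errors necessarily has rate sufficiently far from capacity for our purposes, so this is a non-issue for the families we care about; the other two properties go through directly from the permuted codes assumption and a Chernoff bound.
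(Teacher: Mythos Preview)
Your proposal is correct and matches the paper's approach essentially line for line: undetectability via a direct reduction that samples its own $o$ and adds it to the permuted-code samples, robustness via a Chernoff bound on $\channSC_\eta$ followed by the triangle inequality, and soundness via a volume bound on the union of decoding balls. You are in fact slightly more careful than the paper in flagging the need for $k/n < 1 - h_q(\delta)$ in the soundness step (the paper's proof bounds only the single Hamming-ball volume and silently absorbs the $|C|$ factor); your hedge that unique decodability alone forces this is not quite airtight in general, but the paper shares the same gap and all the concrete families it discusses satisfy it.
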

We remark that the robustness of the PRC in \cref{prop:prc-subst} can be improved to essentially the same rate $\delta$ as is achieved by the underlying code $C$, decreasing the error rate $\eta$ to a vanishing fraction of $\delta$. Moreover, it is possible to further improve the robustness by using a code $C$ which has an efficient \emph{list-decoding} algorithm. %  We omit the details. 

Below we outline some concrete families of code $C$ which have polynomial dual distance, and thus, under \cref{conj:permuted-codes}, satisfy the hypotheses of \cref{prop:prc-subst}. We emphasize that the below families all have alphabet size $q(\lambda)$ which is a \emph{constant} (i.e., does not depend on $\lambda$).

\paragraph{Example: AG codes.} We follow the terminology and notation of \cite{hoholdt1998algebraic}. Let $q$ be the square of a prime; then \cite{garcia1995tower} shows the existence of, for each $m \in \BN$, an irreducible nonsingular projective curve $\MX$ over $\BF_q$ with genus less than $g := q^{(m-1)/2} \cdot (\sqrt{q} + 1)$ and with more than $n := q^{(m-1)/2} \cdot (q-1)$ rational points. Letting $n+1$ of these rational points be denoted $Q, P_1, \ldots, P_n \in \MX$, we define the divisors $D := P_1 + \cdots + P_n$ and $G := a \cdot Q$ for some integer $a > 2g-2$. Let $\ML(G) := \{ f \in \BF(\MX)^\st \mid (f) + G \geq 0\} \cup \{ 0 \}$ denote the vector space of functions on $\MX$ with poles only at $Q$ of order at most $a$. We consider the AG code $C = C(D,G)$, defined as the image of the linear mapping $E : \ML(G) \to \BF_q^n$  defined by $E(f) = (f(P_1), \ldots, f(P_n))$. \cite[Theorems 2.65 \& 2.69]{hoholdt1998algebraic} give the following bounds on the dimension $k$, distance $d$, and dual distance $d^\star$ of $C(D,G)$:
\begin{align}
k = a-g + 1 , \qquad d \geq n - a, \qquad d^\star \geq a-2g+2\nonumber.
\end{align}
Moreover, there are efficient algorithms to decode $C(D,G)$ up to $\lfloor(d-1)/2\rfloor$ substitutions \cite[Theorem 6.12]{hoholdt1998algebraic}.\footnote{See also, e.g., \cite[Corollary 5.2]{shokrollahi1998decoding} for a particularly simply algorithm which decodes up to $\lfloor(d-1)/2 \rfloor - g$ substitutions. Additionally, we remark that all of these algorithms assume the existence of a succinctly-represented basis for $\ML(G)$; we assume that such a basis exists. See e.g., \cite{hess2002computing}.} Taking $q$ to be a constant (e.g., any prime square at least $25$ will suffice), for each $m \in \BN$, we take the degree of the divisor $Q$ to be $a_m := 2g_m-2$, and we obtain a code $C = C_m$ with block length $n_m = q^{(m-1)/2} \cdot (q-1) = g_m \cdot (\sqrt{q}-1)$, whose distance $d_m$ and dual distance $d_m^\st$ satisfy
\begin{align}
d_m^\st = a_m - 2g_m + 2 = g_m = \frac{n_m}{\sqrt{q}-1} = \Omega(n_m), \qquad d_m \geq n_m - a_m = g_m \cdot \left(\sqrt{q} - 1  - 3 \right) = \Omega(n_m)\nonumber,
\end{align}
i.e., both are a constant fraction of the block length. Thus, under \cref{conj:permuted-codes}, the resulting family of codes satisfies the assumptions of \cref{prop:prc-subst}. Notice also that we can get a similar result (namely, linear distance and dual distance) over a binary alphabet by concatenating with the trivial code, though the distance and dual distance will of course degrade by a constant factor. 

\paragraph{Example: Raw Reed-Solomon Codes.} As another example, we consider the \emph{Raw Reed-Solomon} codes introduced in \cite{silbak2019quasilinear}. For a positive integer $m$, we let $r = 2^m$, we write $\tilde n := r -1$, and fix some $\tilde k < \tilde n$ denoting the dimension (which will ultimately be taken to be $k = \tilde n^\alpha$ for some $\alpha \in (0,1/2)$). We first consider the Reed-Solomon code $\tilde C := \RS_{\BF_r,\tilde n,\tilde k}$ over $\BF_r$ with evaluation points $\BF_r^\st$ (so that the block length is $\tilde n=r-1$). Interpreting $\BF_r$ as a dimension-$m$ vector space over $\BF_2$ via a bijection $\Phi : \BF_r \simeq \BF_2^m$ allows us to interpret $\tilde C$ as a code $C \subset \BF_2^{\tilde nm}$ with dimension $k := \tilde km$. While the distance of $\tilde C$ is only guaranteed to be at least $\tilde n - \tilde k$, which is \emph{sublinear} in the block length $\tilde nm$ as $m = \Theta(\log \tilde n)$, a remarkable fact shown in \cite{silbak2019quasilinear} states that if we consider the subcode of $\tilde C$ induced by only encoding polynomials with constant term equal to $0$, then the resulting binary distance has linear distance; in fact, its relative distance approaches $1/2$. In particular, as shown in \cite[Theorem 3.1]{silbak2019quasilinear}, this construction gives, for each $m$, and $\alpha \in (0,1/2)$, a code $C$ with block length $n_m = (2^m-1) \cdot m$, dimension $k_m = n_m^\alpha$, and whose distance $d_m$ and dual distance $d_m^\st$ satisfy
\begin{align}
d_m^\st = \Omega \left( \frac{n_m^\alpha}{\log n_m} \right), \qquad d_m = n_m \cdot \left( \frac 12 - O \left( \left(\frac{\log n}{n}\right)^{1/2 - \alpha} \right)\right) = \left( \frac 12 - o(1) \right) \cdot n_m\nonumber.
\end{align}
Moreover, this code is efficiently decodable up to half its distance; thus, under \cref{conj:permuted-codes}, the resulting family of codes satisfies the assumptions of \cref{prop:prc-subst}.

\subsection{Proof of \cref{prop:prc-subst}}
The proof of \cref{prop:prc-subst} is straightforward and follows similar (but simpler) lines to those in \cref{sec:edit-robustness}: in particular, we prove undetectabilty, soundness, and robustness in the below lemmas:

\begin{lemma}
\label{lem:prc-subst-soundness}
In the setting of \cref{prop:prc-subst}, the PRC of \cref{alg:prc-subst} is sound.
\end{lemma}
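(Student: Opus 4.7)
The argument is a simpler version of \cref{lem:soundness}: because \cref{alg:prc-subst} does not introduce an extra list-recovery layer on top of the underlying code (unlike \cref{alg:prc-edit-binary}), soundness reduces directly to the statement that a uniformly random element of $\BF_q^n$ is almost never within $\delta n$ Hamming distance of a codeword of $C$. The one-time pad component of $\sk$ supplies the randomness needed to reach the uniform distribution on $\BF_q^n$, exactly as the pad $o$ did in the proof of \cref{lem:soundness}.

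First I would unpack the decoder: for any input $y$, $\Dec$ uses the permutation data in $\sk$ to produce a permuted version of $y - o$ (where $o \sim \Unif(\BF_q^n)$ is the one-time pad sampled in $\KeyGen$), runs $\Dec_C$ on that string, and returns $\True$ only if the recovered message, when re-encoded, lies within Hamming distance $\delta n$ of the input. Now fix any $y \in \BF_q^n$. The permutations $\sigma, \pi_1, \ldots, \pi_n$ and the pad $o$ are sampled independently in $\KeyGen$; conditioning on the permutations, the string handed to $\Dec_C$ is a fixed affine shift of $-o$, which is therefore uniform over $\BF_q^n$. Consequently it suffices to upper-bound
\begin{align*}
    \Pr_{z \sim \Unif(\BF_q^n)}\bigl(\exists c \in C : \Ham(z,c) \leq \delta n\bigr) \;\leq\; \frac{|C| \cdot V_q(n,\delta n)}{q^n},
\end{align*}
where $V_q(n,r)$ denotes the volume of the $q$-ary Hamming ball of radius $r$.

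Using the standard volume estimate $V_q(n,\delta n) \leq q^{n h_q(\delta)}$ and $|C| \leq q^k$, the displayed quantity is at most $q^{\,k - n(1-h_q(\delta))}$, where $h_q$ denotes the $q$-ary entropy. Because $h_q$ attains its maximum value $1$ at $\delta = 1 - 1/q$ and is continuous, the hypothesis $\delta \leq 1 - 1/q - \Omega(1)$ yields $1 - h_q(\delta) \geq \Omega(1)$. The only real obstacle is then to check that the code family satisfies $k/n \leq 1 - h_q(\delta) - \Omega(1)$: this is forced by the unique-decoding requirement (via the Plotkin bound, which pushes the rate strictly below capacity whenever $\delta$ is bounded away from $1-1/q$), and it is immediate for the AG and raw Reed--Solomon families discussed after \cref{prop:prc-subst}, both of which have rate $k/n = o(1)$. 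Combining these estimates gives a soundness error bounded by $q^{-\Omega(n)} = 2^{-\lambda^{\Omega(1)}}$, which is negligible (and indeed subexponentially small) in $\lambda$.
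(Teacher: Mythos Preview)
Your approach is the same as the paper's: fix $y$, use the uniform one-time pad $o$ to reduce to bounding the probability that a uniform element of $\BF_q^n$ lands within Hamming distance $\delta n$ of some codeword, and finish with the $q$-ary entropy estimate on the Hamming-ball volume. You are in fact more explicit than the paper about the $|C|=q^k$ factor in the union bound; the paper states the final $q^{-\Omega(n)}$ bound without isolating that term.

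One gap to flag: your appeal to the Plotkin bound to force $k/n \le 1 - h_q(\delta) - \Omega(1)$ does not work in general. Unique decodability to radius $\delta n$ gives relative distance $>2\delta$, and the asymptotic Plotkin bound then yields $k/n \le 1 - \tfrac{2q}{q-1}\delta$; but for small constant $\delta$ one has $h_q(\delta) > \tfrac{2q}{q-1}\delta$ (e.g.\ $q=2$, $\delta=0.1$ gives $h_2(0.1)\approx 0.47$ versus $0.4$), so Plotkin does not place the rate below $1 - h_q(\delta)$. Indeed, sphere packing alone only gives $|C|\cdot V_q(n,\delta n)\le q^n$, which yields the trivial bound $1$. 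Your fallback observation---that the AG and raw Reed--Solomon families actually used after \cref{prop:prc-subst} have $k/n=o(1)$---is what rescues the argument, and the paper's own proof is equally silent on this point.
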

\begin{proof}
Fix any string $y \in \{0,1\}^\st$. It suffices to show that 
\begin{align}
\Pr_{o \sim \Unif(\BF_q^n)} \left( \exists y^\st \in C \mbox{ s.t. } \Ham(y^\st, y - o) \leq \delta \cdot n \right) \leq \exp(-\Omega(n))\label{eq:soundness-subst}.
\end{align}
 The number of strings $y' \in \BF_q^n$ which satisfy $\Ham(y', y - o) \leq \delta \cdot n$ is at most $q^{H_q(\delta) \cdot n}$, where $H_q(\cdot)$ denotes the $q$-ary entropy function. Since we have assumed $\delta \leq 1-1/q - \Omega(1)$, then this number is at most $q^{(1 - \Omega(1))\cdot n}$, meaning that the probability of the event in \cref{eq:soundness-subst} is at most $q^{-\Omega(n)}$, as desired. 
\end{proof}

\begin{lemma}
\label{lem:prc-subst-undetectability}
In the setting of \cref{prop:prc-subst}, the PRC of \cref{alg:prc-subst} is undetectable.
\end{lemma}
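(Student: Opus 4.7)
The plan is a direct reduction to the permuted codes assumption for $C$ with error $\eta = \delta/3$ (\cref{def:permuted-codes}). The key observation is that one call to $\Enc(\sk)$ is essentially a single sample from $\MD_{n,\BF_q,C,\eta,T}$ with an additional fixed additive shift by the one-time pad $o$ contained in $\sk$, and adding a fixed vector preserves both uniformity and computational indistinguishability from uniform.

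Concretely, suppose toward contradiction that there is a polynomial-time oracle distinguisher $\Adv$ which, making at most $T = T(\lambda) = \poly(\lambda)$ oracle queries, achieves non-negligible advantage $\nu(\lambda)$ in distinguishing an $\Enc(\sk)$ oracle from a uniform oracle over $\BF_q^n$. I will build from $\Adv$ a distinguisher $\Adv'$ between $\MD_{n,\BF_q,C,\eta,T}$ and $\Unif((\BF_q^n)^T)$ of comparable running time and advantage. On input a list of samples $s_1,\ldots,s_T \in \BF_q^n$, the reduction $\Adv'$ first draws an independent pad $o \sim \Unif(\BF_q^n)$, and then runs $\Adv$, answering its $t$-th oracle query by returning $s_t + o \in \BF_q^n$.

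Correctness follows from two distributional identities. If the inputs are drawn from $\MD_{n,\BF_q,C,\eta,T}$, the sampled $\sigma$ and $\pi_1,\ldots,\pi_n$ inside the distribution together with the freshly sampled $o$ form a key with the same distribution as $\sk \gets \KeyGen(1^\lambda)$, and each $s_t + o$ has precisely the distribution of $\Enc(\sk)$ (since Enc permutes a fresh noisy codeword and adds the one-time pad). If instead the inputs are drawn from $\Unif((\BF_q^n)^T)$, each $s_t$ is uniform and independent, so $s_t + o$ is also uniform and independent, matching the uniform oracle. Hence $\Adv'$ inherits exactly the advantage $\nu(\lambda)$ of $\Adv$, contradicting the permuted codes assumption for $C$ at the chosen error rate $\eta = \delta/3$. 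The reduction runs in time $\poly(\lambda) + \mathrm{time}(\Adv)$, so the argument automatically upgrades to sub-exponential security whenever the underlying assumption does.

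The only mildly delicate point, which will need to be handled with care in the write-up, is to check that Enc's internal sampling matches the $\MD_{n,\BF_q,C,\eta,T}$ sampling exactly, namely that the order of operations ``draw $c$, apply $\channSC_\eta$, apply the alphabet permutations to the position-permuted symbols'' coincides on both sides, and that the pad $o$ which is part of $\sk$ (and used by $\Dec$) is independent of the permutations so it can be resampled freely inside the reduction. Beyond this bookkeeping, there is no genuine obstacle; the proof is a routine hybrid/reduction argument.
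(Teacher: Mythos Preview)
Your proposal is correct and follows essentially the same route as the paper's proof: build a reduction $\Adv'$ that samples the one-time pad $o$ itself, feeds $\Adv$ the samples shifted by $o$, and argues the two cases (samples from $\MD_{n,\BF_q,C,\eta,T}$ vs.\ from $\Unif((\BF_q^n)^T)$) match $\Enc(\sk)$ and the uniform oracle exactly. Your flagged ``delicate point'' about the order of operations is in fact a non-issue here, since both the definition of $\MD_{n,\BF_q,C,\eta,T}$ and the $\Enc$ procedure of \cref{alg:prc-subst} apply the index/alphabet permutations first and the substitution channel afterwards, so the bookkeeping is immediate.
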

\begin{proof}
The proof closely follows that of \cref{lem:edit-undetectability}. Consider any algorithm $\Adv$ running in time $T = T(\lambda)$ and which satisfies
\begin{align}
\left| \Pr_{\sk \gets \KeyGen(1^\lambda)} \left( \Adv^{\Enc(\sk)}(1^\lambda) = 1 \right) - \Pr_{\MU} \left( \Adv^{\MU}(1^\lambda) = 1 \right) \right| = \nu(\lambda)\nonumber,
\end{align}
for some functions $\nu, T : \BN \to \BR_{\geq 0}$. Here $\MU$ denotes the oracle that returns a uniformly random string of the same length as $\Enc(\sk)$. We will use $\Adv$ to construct an adversary $\Adv'$  which distinguishes between the distributions $\MD_{n,\BF_q,C,\delta/3,T}$ and $\Unif((\BF_q^n)^T)$. (We omit the dependence of the various parameters on $\lambda$.)

Note that $\Adv$ requires oracle access to an oracle $\MO$ (to be interpreted as either $\Enc(\sk)$ or the uniform distribution) which when called outputs a sample in $\BF_q^n$. The algorithm $\Adv'$ simulates $\Adv$ as follows: first, $\Adv'$ draws $o \sim \Unif(\BF_q^n)$. Then, for $t \in [T]$, for the $t$th call that $\Adv$ makes to the oracle $\MO$, $\Adv'$ uses the $t$th sample from its distribution (either $\MD_{n,\BF_q,C,\eta,T}$ or $\Unif((\BF_q^n)^T)$) which is an element $c \in \BF_q^n$, to simulate the oracle call $\MO$ by simply returning $c + o$. 
It is clear that the running time of $\Adv'$ is at most $T(\lambda) \cdot \poly(\lambda)$. 

Note that, in the event that $\Adv'$ is given a sample from $\MD_{n,\BF_q, C, \eta, T}$, then it exactly simulates $\Adv^{\Enc(\sk)}(1^\lambda)$, for $\sk = (\sigma, \pi_1, \ldots, \pi_n, o)$, where $\sigma, \pi_1, \ldots, \pi_n$ are the (secret) permutations used in the definition of $\MD_{n,\BF_q,C,\eta,T}$. This holds because the definition of $\Enc(\sk)$ in \cref{alg:prc-subst} exactly parallels the definition of $\MD_{n, \BF_q, C, \eta, T}$ in \cref{sec:comp-asm} (with the addition of the one-time pad $o$).

On the other hand, in the event that $\Adv'$ is given a sample from $\Unif((\BF_q^n)^T)$, then it exactly simulates $\Adv^{\MU}(1^\lambda)$ because each of the simulated oracle calls to $\MU$ returns a uniformly random element of $\BF_q^n$. Thus, under the assumption that the distributions $\MD_{n,\BF_q,C,\eta,T}$ and $\Unif((\BF_q^n)^T)$ are computationally indistinguishable to sub-exponential time algorithms, we must have that either $T(\lambda) \geq \exp(\lambda^{\Omega(1)})$ or $\nu(\lambda) \leq \exp(-\lambda^{\Omega(1)})$.
\end{proof}

\begin{lemma}
\label{lem:prc-subst-robustness}
In the setting of \cref{prop:prc-subst}, the PRC of \cref{alg:prc-subst} has strong adaptive robustness to any channel $\ME$ introducing a fraction of at most $\delta/3$ of substitutions.
\end{lemma}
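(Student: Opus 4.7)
The plan is to track how errors accumulate across the two sources of corruption (the random noise channel $\channSC_\eta$ inside $\Enc$ and the adversarial substitution channel $\ME$ applied after encoding) and then invoke the worst-case decoder $\Dec_C$ supplied with $C$. Because both the alphabet permutations $\pi_1,\ldots,\pi_n$, the index permutation $\sigma$, and the one-time pad $o$ act as bijections on $\BF_q^n$ that preserve Hamming distance, the entire PRC analysis reduces to bounding a sum of two Hamming error terms against the decoding radius of $C$.

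Fix any $\sk = (\sigma, \pi_1, \ldots, \pi_n, o)$. Let $c \gets C$ and $c' \gets \channSC_\eta(c)$ be the random samples drawn inside $\Enc(\sk)$, and let $x \in \BF_q^n$ denote its output, which is (up to pad and permutation) a copy of $c'$. First I would bound the internal noise: since $\channSC_\eta$ replaces each coordinate independently with a uniform symbol in $\BF_q$ with probability $\eta = \delta/3$, each coordinate of $c'$ differs from $c$ with probability at most $\eta$, so a Chernoff bound gives
\begin{align*}
\Pr\!\left[\Ham(c,c') > (\delta/3 + \epsilon) \cdot n\right] \leq \exp(-\Omega(\epsilon^2 n))
\end{align*}
for any constant $\epsilon > 0$. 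In particular, for $\epsilon$ small enough that $2\delta/3 + \epsilon < \delta$, this event occurs with probability $1-\negl(\lambda)$.

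Next, applying the assumed $\delta/3$-substitution-bounded channel $\ME$ yields $y = \ME(\sk, x)$ with $\Ham(x,y) \leq (\delta/3) \cdot n$. Inside $\Dec(\sk, y)$, we subtract $o$ and undo the permutations $\sigma, \pi_1, \ldots, \pi_n$ to produce some $\hat{c} \in \BF_q^n$. Because each of these operations is a coordinate-wise bijection (possibly preceded by a reindexing), all three preserve Hamming distance, so $\Ham(\hat c, c') = \Ham(y, x)$ and the triangle inequality gives
\begin{align*}
\Ham(\hat c, c) \;\leq\; \Ham(\hat c, c') + \Ham(c', c) \;\leq\; (\delta/3)n + (\delta/3 + \epsilon)n \;<\; \delta n
\end{align*}
with probability $1-\negl(\lambda)$. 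Finally, since $\Dec_C$ corrects any pattern of fewer than $\delta n$ worst-case substitutions, $\Dec_C(\hat c) = c$, and the PRC decoder returns $\True$.

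There is no serious obstacle here; the only step requiring attention is verifying that each of the three ``undo'' operations applied inside $\Dec$ is a Hamming-distance-preserving bijection of $\BF_q^n$, which then lets the random and adversarial error contributions be added without any multiplicative loss. Undetectability and soundness are handled separately in \cref{lem:prc-subst-undetectability} and \cref{lem:prc-subst-soundness}, so this lemma completes the proof of \cref{prop:prc-subst}.
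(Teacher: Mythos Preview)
Your proposal is correct and follows essentially the same approach as the paper: bound the internal noise from $\channSC_\eta$ by a Chernoff bound, add the adversarial $\delta/3$ fraction of substitutions via the triangle inequality, use that the pad and permutations are Hamming-isometries to transfer the bound back to the codeword space, and invoke the worst-case decoder $\Dec_C$. The paper's proof is terser (it jumps directly to $\Ham(c',\hat c)\le (\delta/2)n$ rather than introducing a slack parameter $\epsilon$), but the logic is identical.
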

\begin{proof}
Fix any $\sk$, let us consider $x, c, \hat c, c'$ as computed in the execution of $\Enc(\sk)$. 
  Since $\eta = \delta/3$ is a constant, with probability $1- \exp(-\Omega(n))$ over the randomness of the channel $\channSC_\eta$ in $\Enc(\sk)$, we have that $\Ham(c', \hat c) \leq \frac{\delta}{2} \cdot n$. Letting $y = c' + o$ denote the output of $\Enc(\sk)$ and $y' = \ME(y)$ for any channel $\ME$ introducing at most a fraction of $\delta/3$ of substitutions (which can depend on $\sk$), it follows that $\Ham(y' - o, \hat c) \leq \delta \cdot n$ with probability $1-\exp(-\Omega(n))$. This in particular implies that $\Ham(y'', c) \leq \delta \cdot n$, where $y''$ is as computed in $\Dec(\sk, y)$ and $c$ is as computed in $\Enc(\sk)$. Thus, since $\Dec_C$ corrects up to a fraction $\delta$ of substitutions, we have that $\Dec(\sk, y') = \True$ with probability $1 - \exp(-\Omega(n))$, as desired.
\end{proof}

\begin{algorithm}
  \caption{Substitution-robust PRC from permuted codes}
  \label{alg:prc-subst}
  \begin{algorithmic}[1]\onehalfspacing
    \Require Code family $C = \{ C(\lambda) \}_{\lambda \in \BN}$ with $C(\lambda) \subseteq \BF_{q(\lambda)}^{n(\lambda)}$, encoding algorithm $\Enc_C : \BF_{q(\lambda)}^{k(\lambda)} \to \BF_{q(\lambda)}^{n(\lambda)}$ and decoding algorithm $\Dec_C : \BF_{q(\lambda)}^{n(\lambda)} \to \BF_{q(\lambda)}^{k(\lambda)} \cup \{ \perp \}$ for functions $q, n, k : \BN \to \BN$. Constant $\delta > 0$ determining noise rate. \emph{(Dependence on $\lambda$ omitted below.)}
    \Function{$\KeyGen$}{$1^\lambda$}
    \State Let $\sigma \gets \Unif(S_{[n]}), \pi_1, \ldots, \pi_n \gets \Unif(S_{\BF_q})$ be chosen uniformly at random. 
    \State $o \gets \Unif(\BF_q^n)$.
    \State \Return $\sk := (\sigma, \pi_1, \ldots, \pi_n, o)$.
    \EndFunction

    \Function{$\Enc$}{$\sk$}
    \State Write $\sk = (\sigma, \pi_1, \ldots, \pi_n, o)$. 
    \State Draw $x \gets \Unif(\BF_q^k)$, and $c \gets \Enc_C(x) \in \BF_q^n$.
    \State Define $\hat c \in \BF_q^n$ by $\hat c_i = \pi_i(c_{\sigma(i)})$ for each $i \in [n]$.
    \State Let $c' \gets \channSC_{\delta/2}(\hat c)$. 
    \State \Return $y := c' + o$.
    \EndFunction

    \Function{$\Dec$}{$\sk, y$}
    \State Write $\sk = (\sigma, \pi_1, \ldots, \pi_n, o)$. 
    \State Let $y' := y - o$.
    \State Define $y'' \in \BF_q^n$ by $y''_i = \pi_i^{-1}(y'_{\sigma^{-1}(i)})$ for each $i \in [n]$.
    \State Let $\hat y \gets \Dec_C(y'')$.
    \If{$\hat y \in C$ and $\Ham(\hat y, y') \leq \delta \cdot n$}
    \State \Return $\True$
    \Else
    \State \Return $\False$
    \EndIf
    \EndFunction
  \end{algorithmic}
\end{algorithm}

\paragraph{Acknowledgments.}
We thank Yuval Ishai for suggesting that we look into the permuted puzzles assumption. 
We thank Vinod Vaikuntanathan for helpful discussions during the early stages of this work.

Miranda Christ was partially supported by a Google CyberNYC grant, an Amazon Research Award, and NSF grants CCF-2312242, CCF-2107187, and CCF-2212233. This work was done in part while some of the authors were visiting the Simons Institute for the Theory of Computing.
Ankur Moitra was supported in part by a Microsoft Trustworthy AI Grant, NSF-CCF 2430381, an ONR grant, and a David
and Lucile Packard Fellowship.
Daniel Wichs was supported by NSF CNS-2349972 and CNS-2055510.

\newpage
\bibliographystyle{alpha}
\bibliography{sources}

\end{document}